%
\documentclass[runningheads]{llncs}
\usepackage{amsmath,amssymb,amsfonts}
\usepackage{diagbox}
\usepackage{makecell} 
\usepackage{pifont}
\usepackage{algorithm,algorithmic}

\usepackage[usenames,dvipsnames]{xcolor}
\newcommand{\ignore}[1]{}
\newcommand{\ketbra}[2]{\left\lvert #1 \right\rangle \left\langle #2 \right\lvert}
\newcommand{\etal}{\textit{et al}.}
\usepackage{amsmath,color}
\usepackage{braket}
\usepackage{graphicx}
\usepackage{textcomp}
\newcommand{\NL}[1]{\textcolor{blue}{#1}}
\newcommand{\A}{T}
\newcommand{\w}{w}
\newcommand{\pe}{\mathfrak{p}}
\newcommand{\mc}{\mathcal}
\newcommand{\msf}{\mathsf}
\def\BibTeX{{\rm B\kern-.05em{\sc i\kern-.025em b}\kern-.08em
    T\kern-.1667em\lower.7ex\hbox{E}\kern-.125emX}}

\newenvironment{customlemma}[1]
  {\innercustomlm}
  {\endinnercustomlm}
   
\newenvironment{customtheorem}[1]
  {\innercustomtrm}
  {\endinnercustomtrm}
  

\newenvironment{boxfigH}[2]{%
     \begin{figure}[H]
     \newcommand{\FigCaption}{#1}
     \newcommand{\FigLabel}{#2}
     \begin{center}
       \begin{small}
         \begin{tabular}{@{}|@{~~}l@{~~}|@{}}
           \hline
           \rule[-1.5ex]{0pt}{1ex}\begin{minipage}[b]{.95\linewidth}
             \vspace{1ex}
             \smallskip
             }{%
           \end{minipage}\\
           \hline
         \end{tabular}
       \end{small}


     \end{center}
     \label{\FigLabel}
   \end{figure}
}


\begin{document}
\title{Definitions and Security of Quantum Electronic Voting}
\author{%
Myrto Arapinis\inst{1}  \and 
Elham Kashefi\inst{1,2}  \and
Nikolaos Lamprou\inst{1} \and 
Anna Pappa\inst{3,4}
}%
\institute{
School of Informatics, University of Edinburgh, UK \and
LIP6, University Pierre et Marie Curie, France\and
Department of Electrical Engineering and Computer Science, Technische Universität Berlin, Germany\and
Dahlem Center for Complex Quantum Systems, Freie Universität Berlin, Germany
}

\maketitle 
\begin{abstract}
Recent advances indicate that quantum computers will soon be reality. Motivated by this ever more realistic threat for existing classical cryptographic protocols, researchers have developed several schemes to resist ``quantum attacks''. In particular, for electronic voting, several e-voting schemes relying on properties of quantum mechanics have been proposed. However, each of these proposals comes with a different and often not well-articulated corruption model, has different  objectives, and is accompanied by security claims which are never formalized and are at best justified only against specific attacks. To address this, we propose the first formal security definitions for quantum e-voting protocols. With these at hand, we systematize and evaluate the security of previously-proposed quantum e-voting protocols; we examine the claims of these works concerning privacy, correctness and verifiability, and if they are correctly attributed to the proposed protocols. In all non-trivial cases, we identify specific quantum attacks that violate these properties. We argue that the cause of these failures lies in the absence of formal security models and references to the existing cryptographic literature.
\end{abstract}

\begin{keywords}
quantum electronic voting, quantum cryptography, attacks
\end{keywords}


\section{Introduction} \label{sec:introduction}
Voting is a fundamental procedure in democratic societies. With the technological advances of the computer era, voting could benefit to become more secure and efficient and as a result more democratic. \ignore{Indeed, compared to previous manual procedures, electronic voting systems can offer more efficient elections with higher voter participation, better accuracy, while also providing enhanced security guarantees, such as vote-privacy and voter-verification even in the face of untrusted election authorities.} For this reason, over the last two decades, several cryptographic protocols  for electronic voting were proposed and implemented~\cite{AB08,Scantegrity,JD05,KT15,PaV}. The security of all these systems relies on computational assumptions such as the hardness of integer factorization and the discrete logarithm problem. But, these are easy to solve with quantum computers using Shor's algorithm~\cite{Shor94}. Although not yet available, recent technological advances indicate that quantum computers will soon be built threatening existing cryptographic protocols. In this context, researchers have proposed to use quantum communication to implement primitives like key distribution, bit commitment and oblivious transfer. Unfortunately, perfect security without assumptions has proven to be challenging in the quantum setting \cite{LoChau98,Mayers97}, and the need to study different corruption models has emerged. This includes limiting the number of dishonest participants and introducing different non-colluding authorities.

More than a decade of studies on quantum electronic voting has resulted in several protocols that use the properties of quantum mechanical systems. However, all these new protocols are studied against different and not well-articulated corruption models, and claim security using ad-hoc proofs that are not formalized and backed only against limited classes of quantum attacks. In particular, none of the proposed schemes  provides rigorous definitions of privacy and verifiability, nor formal security proofs against specific, well-defined (quantum) attacker models. When it comes to electronic voting schemes, it is particularly hard to ensure that all the, somehow conflicting, properties hold; it is therefore important that these new quantum protocols be rigorously and mathematically studied and the necessary assumptions and limitations formally established.

This is precisely what we set to address in this paper. We first give formal definitions for verifiability and vote privacy in the quantum setting considering adaptive corruption. Subsequently, we systematize and assess the security of existing e-voting protocols based on quantum technology. We specifically examine the claims of each of these solutions concerning the above-mentioned well-defined properties. Unfortunately our analyses uncover vulnerabilities in all the proposed schemes. While some of them suffer from trivial attacks due to inconsistencies in the security definitions, the main contribution of the paper is to argue that sophisticated attacks can exist even in protocols that ``seem secure'' if the security is proven ad hoc, and not in a formal framework. We argue that the cause of these failures is the absence of an appropriate security framework in which to establish formal security proofs, which we have now introduced.

Therefore, this paper follows previous works~\cite{BW02,PC17,UD10} in their effort to highlight the importance of formally defining and proving security in the relatively new field of quantum cryptography. This also includes studying classical protocols that are secure against unbounded attackers \cite{BT07}, as well as ones based on problems believed to be hard even for quantum computers e.g. lattice-based ~\cite{CL16}. However, it is out of the scope of this study to review such classical protocols, as we are focusing on the possible contribution of quantum computers to the security of e-voting.

\noindent{\bf Contributions:} We propose the first formal definitions for vote privacy and universal verifiability in the quantum setting considering adaptive corruption, and show that none of the proposed quantum protocols so far satisfy them. To this end, we systematize the proposed quantum e-voting approaches according to key technical features. To our knowledge, our study covers all relevant research in the field, identifying four main families. Table~\ref{tab:summary} summarises our results.
\begin{table}\label{tab:summary}
  \centering
  \begin{tabular}{|c|c|c|c|} 
   \hline
 \diagbox[width=10em]{Protocols }{Security} & \makecell{Privacy} & \makecell{Correctness} & \makecell{Corruption}  \\
   \hline
   Dual basis measurement based protocols & \ding{53} & \textbf{?} & \text{$\epsilon$ fraction of voters}\\           
   \hline
    Travelling ballot based protocols & \ding{53} &  \ding{53} & \text{two voters}  \\
   \hline
   Distributed ballot based protocols &\textbf{?} & $\text{\ding{53}}^{*}$ & \text{$\epsilon$ fraction of voters}  \\
   \hline
   Quantum voting based on conjugate coding &\ding{53} &\textbf{?} & \text{election authority} \\
   \hline
  \end{tabular}
  \caption{\textbf{\ding{53}}: Insecure, \textbf{?}: Unexplored Area,  $\text{}^{*}$:Protocol runs less than $exp(\Omega(N))$ rounds.}
 \end{table}
\begin{itemize}
\item \emph{Two measurement bases protocols -} These protocols rely on two measurement bases to verify the correct distribution of an entangled state. We specifically prove that the probability that a number of corrupted states are not tested and used later in the protocol, is non-negligible, which leads to a violation of voters' privacy. Furthermore, even if the states are shared by a trusted authority, we show that privacy can still be violated in case of abort.
\item \emph{Traveling ballot protocols -} In these protocols the ``ballot box'' circulates among all voters who add their vote by applying a unitary to it. We show how colluding voters can break honest voters' vote privacy just by measuring the ballot box before and after the victim has cast their ballot. These protocols further suffer as we will see from double voting attacks, whereby a dishonest voter can simply apply multiple time the voting operator.
\item \emph{Distributed ballot protocols -} These schemes exploit properties of entangled states that allow voters to cast their votes by applying operations on parts of them. We present an attack that allows the adversary to double-vote and therefore change the outcome of the voting process with probability at least $0.25$, if the protocol runs fewer than exponentially many rounds in the number of voters. The intuition behind this attack is that an adversary does not need to find exactly how the ballots have been created in order to influence the outcome of the election; it suffices to find a specific relation between them from left-over voting ballots provided by the corrupted voters. 
\item \emph{Conjugate coding protocols -} These protocols exploit BB84 states adding some verification mechanism. The main issue with these schemes, as we show, is that ballots are malleable, allowing an attacker to modify the part of the ballot which encodes the candidate choice to their advantage.  
\end{itemize}


\section{Preliminaries}
\label{preliminaries}
We use the term quantum bit or qubit \cite{NI02} to denote the simplest quantum mechanical object we will use. We say that a qubit is in a pure state if it can be expressed as a linear combination of other pure states: \[\ket{x}= \alpha \ket{0} + \beta \ket{1}, \text{~where~}\ket{0}= \begin{bmatrix}
1 \\ 0
\end{bmatrix}, \ket{1}=\begin{bmatrix}
0 \\ 1
\end{bmatrix} \]
where $|\alpha|^2 +|\beta|^2 =1$ for any $\alpha, \beta \in \mathbb{C}$. The states $\ket{0}$ and $\ket{1}$ are called the computational basis vectors. Sometimes it is also helpful to think of a qubit as a vector in the two-dimensional Hilbert space $\mathcal{H}$.  If a qubit cannot be written in the above form, then we say it is in a mixed state. The generalization of a qubit to an $m$-dimensional quantum system is called \emph{qudit}: 
\[\ket{y}=\sum_{j=0}^{m-1} a_j \ket{j}, \text{~where~} \sum_{j=0}^{m-1} |a_j|^2 =1
\] 
Let's now suppose that we have two qubits; we can write the state vector as:

\[ \ket{\psi}=\sum\limits_{i,j\in \{0,1\}}^{} \alpha_{ij}\ket{ij}\]
where $\sum\limits_{i,j\in \{0,1\}}^{} |\alpha_{ij}|^2=1$. If the total state vector $\ket{\psi}$ cannot be written as a tensor product of two qubits (\emph{i.e.} $\ket{x_1}\otimes\ket{x_2}$), then we say that qubits $\ket{x_1}$ and $\ket{x_2}$ are entangled. An example of two-qubit entangled states, are the four \ignore{maximally entangled} \emph{Bell states}, which form a basis of the two-dimensional Hilbert space:
\begin{eqnarray*}
\ket{\Phi^\pm}&=\frac{1}{\sqrt{2}}(\ket{00}\pm\ket{11}), \ket{\Psi^\pm}&=\frac{1}{\sqrt{2}}(\ket{01}\pm\ket{10})
\end{eqnarray*}
A quantum system that is in one of the above states is also called an EPR pair~\ignore{, from the famous paradox} \cite{EPR35}. 
The way we obtain information about a quantum system is by performing a measurement using a family of linear operators $\{M_j\}$ acting on the state space of the system, where $j$ denotes the different outcomes of the measurement. It holds for the discrete and the continuous case respectively that: \[\sum_j M_j^{\dagger}M_j=\int M_j^{\dagger}M_j dj=\mathbb{I}\]  where $M_j^\dagger$ is the conjugate transpose of matrix $M_j$, and $\mathbb{I}$ the identity operator. For qudit $\ket{y}$, the probability that the measurement outcome is $\w$ is: $\Pr(\w)=\bra{y}M_{\w}^{\dagger}M_{\w}\ket{y}$ and in the continuous case $\Pr(\w\in[\w_{1}, \w_{2}])=\int_{\w_{1}}^{\w_{2}}\bra{y}M_j^{\dagger}M_j\ket{y} dj $.

For a single qubit  $\ket{x}=\alpha \ket{0} + \beta \ket{1}$, measurement in the computational basis will give outcome zero with probability $|\alpha|^2$ and outcome one with probability $|\beta|^2$. 
If our state is entangled, a partial measurement (i.e. a measurement in one of the entangled qudits), not only reveals information about the measured qudit, but possibly about the remaining state. For example, let us recall the Bell state $\ket{\Phi^{+}}$. A measurement of the first qubit in the computational basis will give measurement outcome 0 or 1 with equal probability and the remaining qubit will collapse to the state $\ket{0}$ or $\ket{1}$ respectively. 

In quantum cryptography, the correlations in the measurement outcomes of entangled states are frequently exploited. Another entangled state of interest used in Section \ref{DualBase}, gives measurement outcomes that sum up to zero when measured in the computational basis, and equal outcomes when measured in the Fourier basis (denoted by $\ket{}_{F}$). In the three-qubit case, the state is the following:
\begin{align*}
\ket{D}&=\dfrac{1}{\sqrt{3}}\big(\ket{0}_F\ket{0}_F\ket{0}_F+\ket{1}_F\ket{1}_F\ket{1}_F\big)
=\dfrac{1}{2} \big(\ket{000} +\ket{011}+\ket{101}+\ket{110}\big )
\end{align*}
Finally, the evolution of a closed quantum system can be described by the application of a unitary operator. Unitary operators are reversible and preserve the inner product.  Recall our first example, and let's say we would like to swap the amplitudes on state $\ket{x}$, then we can apply the operator $Z$ (known as NOT-gate):
\[Z\ket{x}=\beta \ket{0}+\alpha \ket{1},~~\text{where}~~Z= \begin{bmatrix}
0 & 1\\ 1 & 0
\end{bmatrix}
\]
The $Z$-gate is one of the Pauli operators, which together with $X$ and $Y$, as well as the identity operator $\mathbb{I}$, form a basis for the vector space of $2\times 2$ Hermitian matrices. These operators are unitaries, and as such preserve the inner product. 

A very important difference between quantum and classical information, is that there is no mechanism to create a copy of an unknown quantum state \cite{NI02}. This result, known as the \emph{no-cloning theorem}, is one of the fundamental advantages and at the same time limitations of quantum information. It becomes extremely relevant for cryptography, since brute-force types of attacks cannot be applied on quantum channels that carry unknown information. When verifying quantum resources however, it is necessary to apply a cut-and-choose technique in order to test that the received quantum states are correcting produced. The quantum source would therefore need to send exponentially many copies of the quantum state \cite{KE17}, in order for the verifier to measure most of them and deduce that with high probability, the remaining ones are correct.
\ignore{
\subsection{Electronic Voting}
In general, electronic voting protocols consist of election authorities, talliers, voters and bulletin boards (\cite{JD05,KT15,AB08}). In this work, we will be dealing with protocols involving one tallier and/or one election authority (which we will denote with $T$ and $EA$ respectively), as well as a set of voters $\mathcal{V}=\{V_k\}_{k=1}^N$. Their role in the protocol is for $EA$ to set the parameters of the protocol, $\mathcal{V}$ to cast their ballots and $T$ to gather the votes, compute and announce the result of the election. Ideally, an e-voting protocol needs to behave as intended if the adversary doesn't interfere at all(\textsf{Correctness}), voters are allowed to vote at most once(\textsf{Double voting}), the vote of a voter should remain private(\textsf{Privacy}~\cite{BD15}), voters and external auditors should be able to verify that votes have been counted as intended(\textsf{Verifiability}~\cite{CV16}) and a voter should not be able to prove how they voted to avoid vote selling or coercion(\textsf{Receipt freeness}~\cite{DSK06}).
fulfill at least the following properties:
\begin{description}
\item[Correctness:] The protocol should behave as intended if the adversary doesn't interfere at all.
\item[Double voting:] Voters are allowed to vote at most once.
\item[Privacy:] The vote of a voter should remain private, \emph{i.e.} an adversary should be able to efficiently extract information about a voter's vote \cite{BD15}. 
\item[Verifiability] Voters and external auditors should be able to verify that votes, theirs or in total, have been counted as intended\cite{CV16}.
\item[Receipt freeness:] A voter should not be able to prove how they voted to avoid vote selling or coercion \cite{DSK06}.
\end{description}

\noindent For the purpose of this work we need to formally define privacy against quantum adversaries, as well as a combined notion of correctness and double-voting, that we will call \emph{weak integrity}. It should be noted that coming up with appropriate definitions for the desired properties is not straightforward and remains still a very active area of research \cite{Chevallier-Mames2010}.
}


\section{Definitions of secure quantum electronic voting}
\label{definitions}

Electronic voting protocols consist of election authorities, talliers, voters and bulletin boards \cite{AB08,JD05,KT15}. In this work, we will be dealing with protocols involving only one election authority $EA$ and/or one tallier $T$, as well as the voters $\mathcal{V}=\{V_k\}_{k=1}^N$. $EA$ sets the parameters of the protocol, $\mathcal{V}$ cast ballots and $T$ gathers the votes, computes and announces the election outcome. Informally, a voting protocol $\Pi$ has three distinct phases (\emph{setup}, \emph{casting}, and \emph{tally}) and running time proportional to a security parameter $\delta_{0}$. For formalising security we adopt the standard game-based security framework. The security of a protocol is captured by a game between a challenger $\mc{C}$ that models the honest parties, and a Quantum Probabilistic Polynomial Time adversary $\mc{A}$ that captures the corrupted parties. $\mc{A}$ can adaptively corrupt a fraction $\epsilon$ of the voters. We assume that the eligibility list is provided a priori in a trusted manner and that $\mc{A}$ chooses the honest votes, in order to provide stronger definitions \cite{JD05}. The order $\pe$ with which honest parties cast their ballots is initially unknown to $\mc{A}$ but might leak during the execution of the protocol, if for example anonymous channels are not used or the casting order is decided by the voters. Finally, we consider that the parties use quantum registers to communicate and store information (denoted by $\mathcal{B}$ and $\mathcal{X}$ respectively), to account for the case where the states are entangled between different parties.
\begin{description}
\item[\emph{Setup phase}:] $\mc{A}$  defines the voting choices of all voters. $\mc{C}$  and $\mc{A}$  generate the protocol parameters $\mathcal{X}$ according to $\Pi$.
\item[\emph{Casting phase}:] The protocol $\Pi$ specifies the algorithm $\mathsf{CastBallot}$ for generating and casting the ballots. $\mc{C}$ generates ballots according to the $\mathsf{CastBallot}$ algorithm on behalf of honest voters and $\mc{A}$  on behalf of the corrupted ones.
\item[\emph{Tally phase}:] The protocol $\Pi$ specifies the tallying algorithm $\mathsf{Tally}$. $\mc{C}$ computes the election result on behalf of the parties specified in $\Pi$ by running the $\mathsf{Tally}$ algorithm. If none of these parties is honest, $\mc{A}$ computes the tally instead.
\end{description}
Ideally, an e-voting protocol will satisfy at least the following properties (\cite{BD15,CV16,DSK06}; \textsf{Correctness}: compute the correct outcome if the adversary doesn't interfere, \textsf{Double voting}: allow voters to vote at most once, \textsf{Privacy}: keep the vote of a voter private, \textsf{Verifiability}: allow for verification of the results by voters and external auditors. We focus on privacy and verifiability type properties.

\subsubsection{Universal Verifiability -}
Our definition of \emph{universal verifiability} is similar to \cite{CV16} and is captured by the experiment $\mathbf{EXP}_{\msf{Qver}}^{\Pi}$. 

\begin{boxfigH}{The experiment $\mathbf{EXP}_{\msf{Qver}}^{\Pi}$
}{fig:Qcorr}
\underline{\emph{The experiment $\mathbf{EXP}_{\msf{Qver}}^{\Pi}(\mathcal{A},\epsilon,\delta_{0})$}}
\begin{itemize}
\item[--]\textbf{Setup phase:} $\mc{C}$ and $\mc{A}$ generate the protocol parameters in quantum register $\mc{X}$ as specified by $\Pi$ and the adversarial model. Furthermore, $\mc{A}$ chooses the votes for all voters $\{v_{k}\}_{V_{k} \in \mathcal{V}}$. 
\item[--] \textbf{Casting phase:} For each $k \in \{1,\ldots,|\mc{V}| \}$, 
\begin{itemize}
\item $\mc{A}$ chooses to corrupt $V_{\pe(k)}$ or not. If $\mc{A}$ decides to corrupt $V_{\pe(k)}$ is added to $\mc{V}_{\mc{A}}$.
\item If $V_{\pe(k)}\not\in\mc{V}_{\mc{A}}$, $\mc{C}$ generates the ballot $\{\mc{B}_{\pe(k)},\bot\} \leftarrow \mathsf{CastBallot}(v_{\pe(k)},\mc{X}_{\pe(k)},\mc{B},\delta_{0})$. If it is not $\bot$, $\mc{C}$ sends it to $\mc{A}$. If $\mc{C}$ receives $\mc{B}_{\pe(k)}$ back from $\mc{A}$, then $\mc{C}$ stores $\mc{B}_{\pe(k)}$ in $\mc{B}$, where $\{\mc{B}_{\pe(k)}$, $\mc{X}_{\pe(k)}\}$, and $\mc{B}$ are local and global quantum registers respectively. Note that when $\mc{A}$ receives $\mc{B}_{\pe(k)}$ from $\mc{C}$, it is possible to apply quantum operations on the register that are dependent on the specifications of $\Pi$. 
\item If $V_{\pe(k)}\in\mc{V}_{\mc{A}}$, then $\mc{A}$ creates a ballot $\mc{B}_{\pe(k)}$ and sends it to $\mc{C}$.
\end{itemize}
\item[--] \textbf{Tally phase:} If the tallier is corrupted, $\mc{A}$ outputs the election outcome $X$. Otherwise $\mc{C}$ computes $X \leftarrow \mathsf{Tally}(\mc{B},\mc{X}_{\mc{C}},\delta_{0}) $:
\begin{itemize}
\item If ($X \not= \bot \wedge \mathsf{Verify}^{\Pi}(X,\mc{B},\delta_{0})$) and $\msf{P}_{\msf{VCounted}}^{\Pi}(\{v_{k}\}_{V_{k} \not\in \mathcal{V}_{\mc{A}}},X) = 0$ or $\mathsf{Nballots}^{\Pi}(X)> |\mc{V}|$, then {\tt output 1}, else {\tt output 0}.
\end{itemize}
\end{itemize}
\end{boxfigH}

\vspace{-0.3in}

First, $\mc{A}$ defines how honest voters vote. Then, $\mc{C}$ and $\mc{A}$ generate the protocol parameters $\mc{X}$ according to $\Pi$ and the corruption model of $\mc{A}$.
In the \emph{casting phase}, $\mc{A}$ can choose to corrupt voters adaptively. For honest voters, $\mc{C}$ follows the $\mathsf{CastBallot}$  algorithm as specified by $\Pi$ to generate the ballot, and sends it to $\mc{A}$.
Depending on the protocol specification, $\mc{A}$ might then perform some allowed quantum operation on the received ballot (e.g. in the case where $\Pi $ uses quantum authenticated channels, the ballots cannot be modified by $\mc{A}$).
For corrupted voters, $\mc{A}$ casts the ballot on their behalf.  After all votes have been cast, the election outcome is computed; if the tallier is honest, $\mc{C}$ runs the algorithm $\mathsf{Tally}$ as specified by $\Pi$. $\mathbf{EXP}_{\msf{Qver}}^{\Pi}$ outputs $1$ if the election outcome is not $\bot$ and is accepted by $\mc{C}$, while either an honest vote has not be counted in the final outcome, or the number of cast votes exceeds the number of voters; Otherwise the experiment outputs $0$. To account for these events, we define three  predicates; $\mathsf{Verify}^{\Pi}$ which is the protocol-specific public test parties can run to verify the election, $\msf{P}_{\msf{VCounted}}^{\Pi}$ reveals if honest votes are discarded from or altered in the final outcome and $\mathsf{Nballots}^{\Pi}$ reveals the number of votes included the election result $X$.
If $\mc{A}$ deviates from the protocol specification, the predicate $\mathsf{Verify}^{\Pi}$ should return \texttt{false}. 

\begin{definition}
A quantum e-voting protocol $\Pi$ satisfies $\epsilon$-{\bf quantum verifiability} if for every QPPT $\mc{A}$ the probability of winning the experiment $\mathbf{EXP}_{\msf{Qver}}^{\Pi}(\mathcal{A},\epsilon,\delta_{0})$ is negligible with respect to $\delta_{0}$:

\[\Pr[1\leftarrow \mathbf{EXP}_{\msf{Qver}}^{\Pi}(\mathcal{A},\epsilon,\delta_{0})]=negl(\delta_{0}).\]
\end{definition}

\subsubsection{Vote privacy -} The experiment $\mathbf{EXP}_{\msf{Qpriv}}^{\Pi}$ captures vote privacy which ensures that the adversary $\mc{A}$ cannot link honest voters to their votes. 

\begin{boxfigH}{Experiment $\mathbf{EXP}_{\msf{Qpriv}}^{\Pi}$}{fig:Qpriv}\underline{\emph{The experiment $\mathbf{EXP}_{\msf{Qpriv}}^{\Pi}(\mathcal{A},\epsilon,\delta_{0})$}}
\begin{itemize}
\item[--] \textbf{Setup phase:} $\mc{A}$ chooses a permutation $\mathsf{F}^{\mc{A,V}}_l$, the set of candidates $\{v_{k}\}_{V_{k} \in \mathcal{V}}$ each voter will vote for. $\mc{C}$ and $\mc{A}$ generate the protocol parameters in quantum register $\mc{X}$ as specified by $\Pi$ and the adversarial model. $\mc{C}$ randomly picks bit $\beta \overset{\$}{\leftarrow} \{0,1\}$. 

\item[--] \textbf{Casting phase:} For each $k \in \{1,\ldots,|\mc{V}| \}$, 
\begin{itemize}
\item $\mc{A}$ chooses to corrupt $V_{\pe(k)}$ or not. If $\mc{A}$ decides to corrupt $V_{\pe(k)}$ is added to $\mc{V}_{\mc{A}}$.
\item If $V_{\pe(k)}\not\in \mc{V}_{\mc{A}}$, $\mc{C}$ generates the ballot $\{\mc{B}_{\pe(k)},\bot \}\leftarrow \mathsf{CastBallot}(\mathsf{F}^{\mc{A,V}}_l(v_{\pe(k)},\pe(k))^{\beta} \cdot v_{\pe(k)}^{1-\beta},\mc{X}_{\pe(k)},\mc{B},\delta_{0})$. If the generated ballot is not $\bot$, $\mc{C}$ sends it to $\mc{A}$. If $\mc{C}$ receives $\mc{B}_{\pe(k)}$ back from $\mc{A}$, then $\mc{C}$ stores $\mc{B}_{\pe(k)}$ to $\mc{B}$. Note that when $\mc{A}$ receives $\mc{B}_{\pe(k)}$ from $\mc{C}$, it is possible to apply quantum operations on the register that are dependent on the specifications of $\Pi$. 
\item If $V_{\pe(k)}\in \mc{V}_{\mc{A}}$, then $\mc{A}$ creates a ballot $\mc{B}_{\pe(k)}$ and sends it to $\mc{C}$.
\end{itemize}
\item[--]\textbf{Tally phase:} If  $\mathsf{F}^{\mc{A,V}}_l(\overline{\mc{V}_{\mc{A}}})=\mathsf{F}^{\mc{A},\overline{\mc{V}_{\mc{A}}}}_{l'}$, $\mc{C}$ announces the election outcome $X \leftarrow \mathsf{Tally}(\{\mc{B}_{\pe(k)}\}_{V_{k}\in \mc{V}}, \mc{X}_{\mc{C}}, \delta_{0})$ to $\mc{A}$. Else  {\tt output -1}.
\end{itemize}
$\mc{A}$ guesses bit $\beta^{*}$. If $\beta^{*}=\beta$ then {\tt output 1}, else {\tt output 0}.
\end{boxfigH}

$\mc{A}$ defines how honest voters vote and chooses a permutation  $\mathsf{F}^{\mc{A,V}}_l \in \mc{F}^{\mc{A,V}}$ over the voting choices of all voters in $\mc{V}$. After the parameters of the protocol $\mc{X}$ are generated, $\mc{C}$ chooses a random bit $\beta$ which defines two worlds; when $\beta = 0$, the honest voters vote as specified by $\mc{A}$, while when $\beta=1$, the honest voters swap their votes according to permutation $\mathsf{F}^{\mc{A,V}}_l$ again specified by $\mc{A}$. If the choices of the honest voters during the $\texttt{casting phase}$ are still a permutation of their initial choices the experiment proceeds to the next phase, else it outputs $-1$. In the tally phase, $\mc{C}$ computes the election outcome. Finally, $\mc{A}$ tries to guess if the honest voters controlled by $\mc{C}$ have permuted their votes ($\beta=1$) or not ($\beta=0$), by outputting guess bit $\beta^{*}$. If $\mc{A}$ guessed correctly $\mathbf{EXP}_{\msf{Qpriv}}^{\Pi}$ outputs $1$; otherwise $\mathbf{EXP}_{\msf{Qpriv}}^{\Pi}$ outputs $0$.

\begin{definition}
\label{Def:Qprivacy}
A quantum e-voting protocol $\Pi$ satisfies $\epsilon$-{\bf quantum privacy} if for every QPPT $\mc{A}$ the probability of winning the experiment $\mathbf{EXP}_{\msf{Qpriv}}^{\Pi}(\mathcal{A},\epsilon,\delta_{0})$ is negligibly close to $1/2$ with respect to $\delta_{0}$ under the condition event $\neg{\mathbf{False\_Attack}}$ happens, where $\mathbf{False\_Attack}=\{{\tt -1}\leftarrow\mathbf{EXP}_{\msf{Qpriv}}^{\Pi}(\mathcal{A},\epsilon,\delta_{0})\}$:
\[\Pr[1\leftarrow \mathbf{EXP}_{\msf{Qpriv}}^{\Pi}(\mathcal{A},\epsilon,\delta_{0})|\neg{\mathbf{False\_Attack}}]=1/2+negl(\delta_{0})\]
\end{definition}

The most established game-based definitions for privacy in the classical setting~\cite{BD15} assume two ballot boxes, where one holds the real tally and the other holds either the real or the fake tally. In the quantum case, the adaptation is not straightforward mainly because of the no-cloning theorem. The existence of two such boxes assumes that information is copyable, which is not the case with quantum information. Similarly, we can't assume that the experiment runs two times because $\mc{A}$ could correlate the two executions by entangling their parameters, something that a classical adversary cannot do. We address this difficulty by introducing quantum registers to capture the network activity and model the special handling of quantum information (e.g entangled states).
Moreover, the election result is produced on the actual ballots rather than the intended ones. With this, we capture a broader spectrum of attacks (e.g Helios replay attack), at the same time introduce trivial distinctions corresponding to false attacks. We tackle this by allowing the experiment to output {\tt -1} in such undesired cases which are mainly artifacts of the model. So an advantage of our privacy definition is that it allows the analysis of self-tallying type protocols in contrast with previous definitions of privacy~\cite{BD15}. In self-tallying elections the adversary is able to derive the election outcome on their own without the need of secret information. Therefore, the bulletin board (in our case the  register $\mc{B}$) must be consistent with the result without linking the identity and the vote of a voter. 

\paragraph{\bf Note -} Our definitions of verifiability and privacy capture both classical and quantum protocols. For the classical case, the quantum registers will be used for storing and communicating purely classical information. 
Devising our definitions for the quantum setting was not a trivial task as there are many aspects that are hard to define, like bulletin boards, and others that need to be introduced, like quantum registers potentially containing entangled quantum states. Moreover, our experiments capture protocols that use anonymous channels, by assuming that the casting order is unknown to $\mc{A}$, as well as self-tallying protocols. 

In the rest of the paper, we examine all existing proposals for quantum e-voting. For each of them, we identify attacks that violate the previously defined properties. Note that since the proposed protocols do not involve any verifiability mechanism, we need to define an experiment that involves an honest Tallier, and that captures security against double voting and vote deletion/alteration against malicious voters. We term this property \emph{integrity} and therefore need to consider experiment $\mathbf{EXP}_{\msf{Qint}}^{\Pi}$ which is the same as $\mathbf{EXP}_{\msf{Qver}}^{\Pi}$ but without the predicate $\mathsf{Verify}^{\Pi}$. The experiment $\mathbf{EXP}_{\msf{Qint}}^{\Pi}$ is detailed in Section~\ref{app:integrity-definition} of the Supplementary Material.


\section{Dual Basis Measurement Based Protocols}
\label{DualBase}

In this section we discuss protocols that use the dual basis measurement technique \cite{HW14,WQ16}, and use as a blank ballot an entangled state with an interesting property: when measured in the computational basis, the sum of the outcomes is equal to zero, while when measured in the Fourier basis, all the outcomes are equal. Both of these protocols use cut-and-choose techniques in order to verify that the state was distributed correctly. This means that a large amount of states are checked for correctness and a remaining few are kept at the end unmeasured, to proceed with the rest of the protocol. Although a cut-and-choose technique with just one verifying party is secure if the states that are sampled are exponentially many and the remaining ones are constant, it is not clear how this generalizes to multiple verifying parties. \ignore{In particular, if we consider an adversary who generates the blank ballots and can corrupt an arbitrary fraction of the voters, it turns out that the probability that some corrupted states are not tested, is non-negligible with respect to the security parameter of the protocol. }Specifically, we show that if the corrupted parties sample their states last, then the probability with which the corrupted states are not checked and remain after all the honest parties sample, is at least a constant with respect to the security parameter of the protocol.

\ignore{In addition, the protocol described in \cite{DS06} uses a hybrid of this technique and of the distributed ballot one \cite{BM11,HM06,VJ07}(which we present in more detail at section \ref{distributed}). Again, it suffers both from the privacy attack \ref{selftallying}  if we consider that the distribution of the state is not from a trusted party (as in \cite{HW14,WQ16}) and  from the double voting attack. This happens even if we use the elaborate technique in section \ref{distributed} proposed by \cite{HM06} to prevent double voting, because as we see, the protocol needs to run exponentially many times with respect to the number of voters in order to guaranty a negligible probability (with respect to the protocol's parameter) with which our adversary \NL{corrected} will succeed to double vote.}

\subsection{Protocol Specification}

\label{selftallying}
We will now present the self-tallying protocol of \cite{WQ16}, which is based on the classical protocol of \cite{KM02}. The voters $\{V_k\}_{k=1}^N$, without the presence of any trusted authority or tallier, need to verify that they share specific quantum states. At the end of the verification process, the voters share a classical matrix; every cast vote is equal to the sum of the elements of a row in the matrix. \\

\noindent{\bf Setup phase}
\begin{enumerate}
\item One of the voters, not necessarily trusted, prepares $N+N2^{\delta_0}$ states: 
\[\ket{D_1}=\frac{1}{\sqrt{m^{N-1}}}\sum _{\sum_{k=1}^N i_k= 0\mod{c}} \ket{i_1}\ket{i_2}\dots\ket{i_N}\]
where $m$ is the dimension of the qudits' Hilbert space, $c$ is the number of the possible candidates such that $m \ge c$ and $\delta_{0}$ the security parameter. The voter also shares $1+N2^{\delta_0}$ states of the form:  
\[\ket{D_2}=\frac{1}{\sqrt{N!}}\sum \limits_{(i_1,i_2,\dots,i_N)\in \mathcal{P}_{N}} \ket{i_1}\ket{i_2}\dots\ket{i_N}\]
where $\mathcal{P}_N$ is the set of all possible permutations with $N$ elements. Each $V_k$ receives the $k^{th}$ particle from each of the states. 
\item \label{2}The voters agree that the states they receive  are indeed $\ket{D_1},\ket{D_2}$ by using a cut-and choose technique. Specifically, voter $V_k$ chooses at random $2^{\delta_0}$ of the $\ket{D_1}$ states and asks the other voters to measure half of their particles in the computational and half in the Fourier basis. Whenever the chosen basis is the computational, the measurement results need to add up to $0$, while when the basis is the Fourier, then the measurement results are all the same. All voters simultaneously broadcast their results and if one of them notices a discrepancy, the protocol aborts. The states $\ket{D_2}$ are similarly checked.
\item \label{3} The voters are left to share $N$ copies of  $\ket{D_1}$ states and one $\ket{D_2}$ state. Each voter holds one qudit for each state. They now all measure their qudits in the computational basis. As a result, each $V_k$ holds a ``blank ballot" of dimension $N$ with the measurement outcomes corresponding to parts of $\ket{D_1}$ states:
\[B_{k}= [ \xi_{k}^{1}  \cdots \xi_{k}^{sk_{k}} \cdots \xi_{k}^{N}]^\intercal
 \] 
 and a unique index, $sk_{k}\in\{1,\dots,N\}$, from the measurement outcome of the qudit that belongs to $\ket{D_2}$. The set of all the blank ballots has the property $\sum_{k=1}^N\xi_k^j=0 \mod{c}$ for all $j=1,\dots,N$.
 \end{enumerate}
 
 \noindent{\bf Casting phase}
 \begin{enumerate}\setcounter{enumi}{3}
\item Based on $sk_{k}$, all voters add their vote, $v_{k} \in \mathbb{Z}_{c}$, to the corresponding row of their ``secret'' column. Specifically, $V_{k}$ applies $\xi_{k}^{sk_{k}} \rightarrow \xi_{k}^{sk_{k}}+ v_{k} $.
\item All voters simultaneously broadcast their columns, resulting in a public $N\times N$ table, whose $k$-th column encodes $V_{k}$'s candidate choice.
\[B=\begin{bmatrix}
 &  & \xi_{k}^{1} &  & \\
 & & \vdots & & \\
B_{1}^{v_{1}} & \cdots & \xi_{k}^{sk_{k}}+v_{k} & \cdots & B_{N}^{v_{N}} \\
 & & \vdots & & \\
 & & \xi_{k}^{N} &  &
\end{bmatrix} \]
\end{enumerate}

\noindent{\bf Tally phase}
\begin{enumerate}\setcounter{enumi}{5}
\item \label{check} Each $V_k$ verifies that their vote is counted by checking that the corresponding row of the matrix adds up to their vote. If this fails, the protocol aborts.
\item Each voter can tally the final outcome of the election by computing the sum of the elements of each row of the public $N \times N$ table. The resulting $N$ elements are the result of the election.
\end{enumerate}

\ignore{\begin{figure}[h]
\centering
\includegraphics[width=8cm]{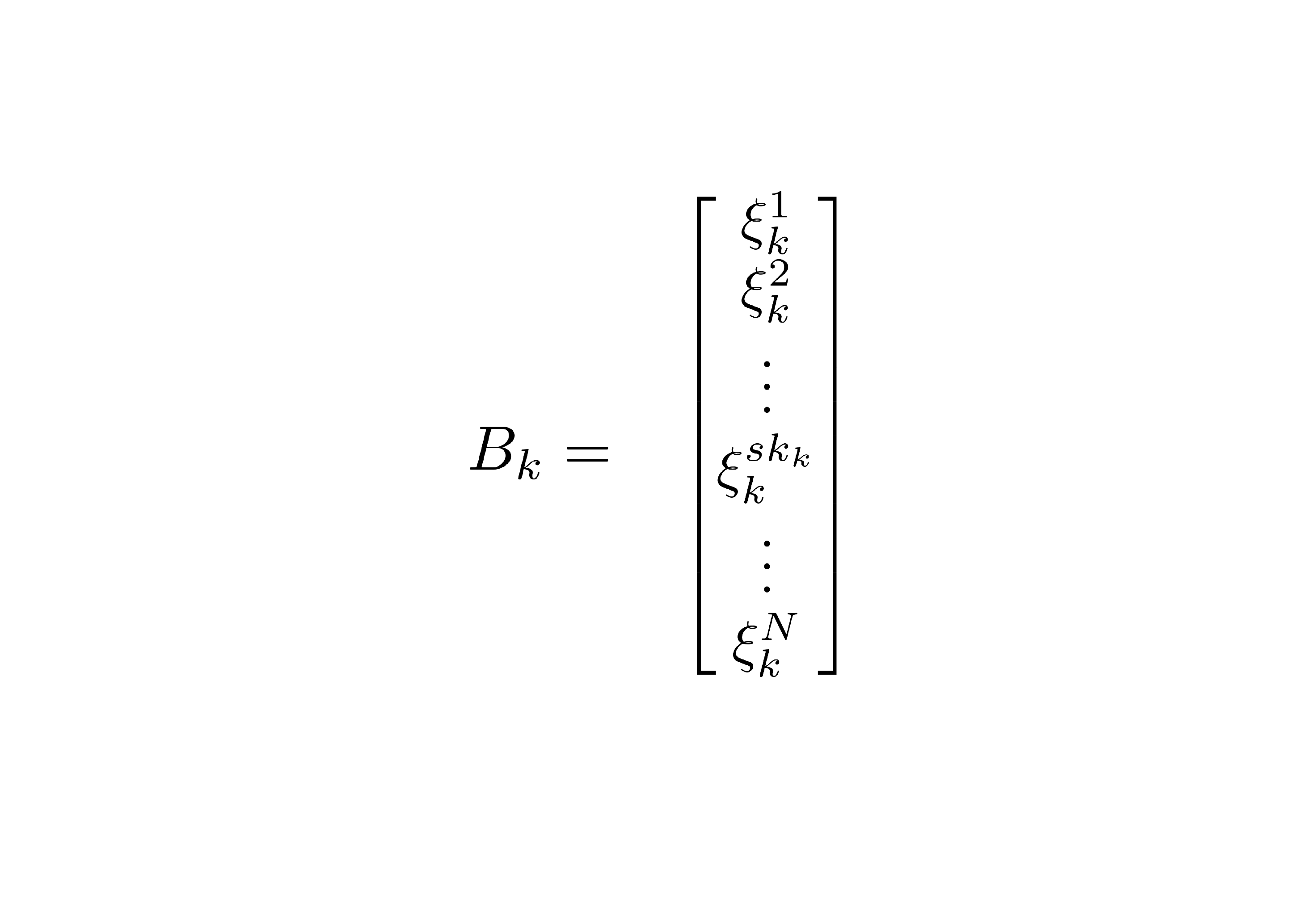}
\caption{The blank ballot of voter $V_{k}$ after it measures the states $\ket{D_{1}},\ket{D_{2}}$ in the computational basis.}
\label{example_Blank_CBallot}
\end{figure}
}
\ignore{\begin{figure}[h]
\centering
\includegraphics[scale=0.25]{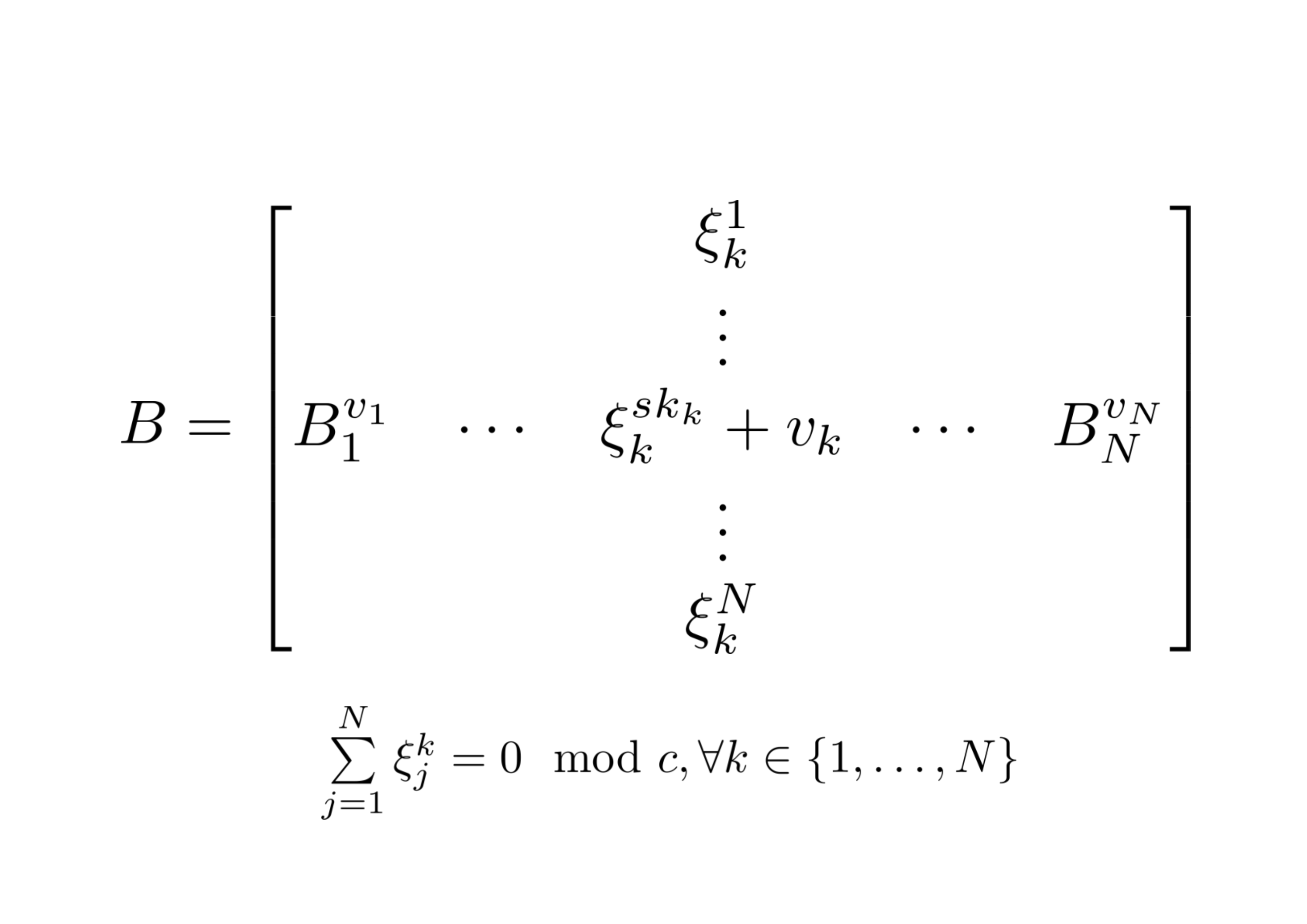}
\caption{The resulting $N \times N$ table after voters simultaneously broadcast their ballot column.}
\label{example_Consistent_View_Table}
\end{figure}}
\subsection{Vulnerabilities Of Dual Basis Measurement Protocols}

In this section we present an attack on the cut-and-choose technique of the protocol in the setup phase, that can be used to violate privacy. We consider a static adversary that corrupts $t$ voters, including the one that distributes the states. Suppose that the adversary corrupts $N$ out of $N+N 2^{\delta_0}$ states $\ket{D_{1}}$. We denote with \emph{Bad}, the event that all the corrupted voters choose last which states they want to test, and with \emph{Win}, the event that the $N$ corrupted states are not checked. We want to compute the probability that event \emph{Win} happens, given event \emph{Bad}, \emph{i.e.} the probability none of the $N$ corrupted states is checked by the honest voters, and therefore remain intact until the corrupted voters' turn. The corrupted voters will of course not sample any of the corrupted states and therefore the corrupted states will be accepted as valid.

The number of corrupted states that an honest voter will check, follows a mixture distribution with each mixture component being one of the hypergeometric distributions $\{\mathsf{HG}(L_{i_{k}},b_{i_{k}},2^{\delta_0}): 0 \leq b_{i_{k}} \leq N\}$ , where $L_{i_{k}}$ is the number of states left to sample from the previous voter and $b_{i_{k}}$ the number of the remaining corrupted states. We can therefore define the random variable $X_{i_{k}}$ that follows the above mixture distribution, where $i_1,\dots,i_{N-t}$ is a permutation of the honest voters' indices (by slightly abusing notation, we consider the first $N-t$ voters to be honest). The following lemma is proven by induction:

\begin{lemma}
Let $X_{i_{k}}$ be a random variable that follows the previous mixture distribution. It holds that: 
\[
\Pr[\sum\limits_{k=1}^{N-t} X_{i_{k}}=0]=\prod_{k=1}^{N-t} \Pr[X^{\ast}_{i_{k}}=0] \text{ where } X^{\ast}_{i_{k}} \sim \mathsf{HG}(L_{i_{k}},N,2^{\delta_0}).
\]
\end{lemma}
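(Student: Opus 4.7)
The plan is to exploit the fact that the $X_{i_k}$ are non-negative integer-valued, so the event $\{\sum_{k=1}^{N-t} X_{i_k}=0\}$ coincides with the intersection $\bigcap_{k=1}^{N-t}\{X_{i_k}=0\}$. Hence by the chain rule of conditional probability,
\[
\Pr\Big[\sum_{k=1}^{N-t} X_{i_k}=0\Big] \;=\; \prod_{k=1}^{N-t} \Pr\big[X_{i_k}=0 \;\big|\; X_{i_1}=0,\ldots,X_{i_{k-1}}=0\big].
\]
The whole job is then to identify each conditional factor with $\Pr[X^{\ast}_{i_k}=0]$ for $X^{\ast}_{i_k}\sim \mathsf{HG}(L_{i_k},N,2^{\delta_0})$.

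I would proceed by induction on $k$. For the base case $k=1$ there is no conditioning, $b_{i_1}=N$ corrupted states remain intact in the full pool of $L_{i_1}=N+N2^{\delta_0}$ states, and the mixture has a single active component, namely $\mathsf{HG}(L_{i_1},N,2^{\delta_0})$, so $X_{i_1}\equiv X^{\ast}_{i_1}$. For the inductive step, assume $X_{i_j}=0$ for all $j<k$; then no honest voter has so far sampled any of the $N$ corrupted states, which means that the number of corrupted states remaining when voter $i_k$'s turn arrives is still $b_{i_k}=N$, while the total number of states left is $L_{i_k}=N+N2^{\delta_0}-(k-1)2^{\delta_0}$. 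Conditioned on this event, the mixture distribution of $X_{i_k}$ collapses to its single component $\mathsf{HG}(L_{i_k},N,2^{\delta_0})$, which is exactly the distribution of $X^{\ast}_{i_k}$. Consequently,
\[
\Pr\big[X_{i_k}=0 \;\big|\; X_{i_1}=0,\ldots,X_{i_{k-1}}=0\big] \;=\; \Pr[X^{\ast}_{i_k}=0],
\]
and multiplying these equalities across $k=1,\ldots,N-t$ yields the claim.

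The only genuinely delicate point is the collapse of the mixture in the inductive step: one must be careful that $b_{i_k}$ and $L_{i_k}$ as used in the definition of $X^{\ast}_{i_k}$ are precisely the values forced by the conditioning event, and that no other mixture component carries positive mass once we condition on all earlier counts being zero. This is essentially bookkeeping: since a corrupted state sampled by an earlier honest voter would have contributed at least $1$ to some $X_{i_j}$, the conditioning kills every component with $b_{i_k}<N$. Everything else (non-negativity, the chain rule, and the trivial base case) is routine, so this combinatorial bookkeeping is where I expect the bulk of the care to be needed.
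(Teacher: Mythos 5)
Your proof is correct and matches the approach the paper indicates: the paper only remarks that the lemma ``is proven by induction,'' and your chain-rule/induction argument, with the mixture collapsing to the single component $\mathsf{HG}(L_{i_k},N,2^{\delta_0})$ once all earlier counts are conditioned to be zero, is exactly that argument and yields the product of binomial ratios used in the paper's Proposition~\ref{main}. No gaps; the bookkeeping you flag (deterministic $L_{i_k}=N+N2^{\delta_0}-(k-1)2^{\delta_0}$ and $b_{i_k}=N$ under the conditioning) is handled correctly.
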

We are now ready to prove that with at least a constant probability, the corrupted states will remain intact until the end of the verification process.

\begin{proposition}
\label{main}
For $0< \varepsilon < 1$, let $t=\varepsilon N$ be the fraction of voters controlled by the adversary. It holds that :
\[\Pr[Win \mid Bad]> \Big(\frac{\varepsilon}{2}\Big)^{N}\]
\end{proposition}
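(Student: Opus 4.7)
The plan is to reduce the product from the Lemma to a single, tractable hypergeometric probability and then bound it by elementary algebra. The Lemma yields
\[\Pr[Win \mid Bad] \;=\; \prod_{k=1}^{N-t}\Pr[X^{\ast}_{i_{k}}=0],\qquad X^{\ast}_{i_{k}}\sim \mathsf{HG}(L_{i_{k}},N,2^{\delta_0}),\]
where, conditioned on $Bad$ and on no corrupted state having been picked by earlier honest voters, $L_{i_{k}}=N+(N-k+1)\cdot 2^{\delta_0}$ items remain just before voter $i_{k}$ samples.

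Next I would recast this product as a single joint-draw probability. Sequentially drawing $2^{\delta_0}$ items without replacement $N-t$ times from the initial pool is equivalent to drawing a single uniformly random subset of size $(N-t)\cdot 2^{\delta_0}$ from the $N+N\cdot 2^{\delta_0}$ available items. Writing $M:=2^{\delta_0}$, the event in question is precisely that all $N$ corrupted positions lie in the $N+tM$ positions untouched by honest voters, so
\[\Pr[Win \mid Bad] \;=\; \frac{\binom{N+tM}{N}}{\binom{N+NM}{N}} \;=\; \prod_{j=0}^{N-1}\frac{N+tM-j}{N+NM-j}.\]
This equivalence can be checked either probabilistically (as above) or by direct telescoping of the hypergeometric factors coming from the Lemma.

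Finally, I would lower-bound each factor. For every $j\in\{0,\dots,N-1\}$ a cross-multiplication gives
\[(N+tM-j)(N+NM)-(tM)(N+NM-j) \;=\; (N-j)(N+NM)+tMj \;>\;0,\]
so $\frac{N+tM-j}{N+NM-j}>\frac{tM}{N+NM}$. Substituting $t=\varepsilon N$ yields $\frac{tM}{N+NM}=\frac{\varepsilon M}{M+1}\ge\frac{\varepsilon}{2}$ whenever $M\ge 1$, and hence
\[\Pr[Win \mid Bad] \;>\; \left(\frac{\varepsilon M}{M+1}\right)^{N} \;\ge\; \left(\frac{\varepsilon}{2}\right)^{N},\]
as claimed.

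The one piece of bookkeeping worth being careful about is the telescoping step: one must verify that conditioning on all previous honest voters avoiding the corrupted pool correctly fixes the number of remaining corrupted items at exactly $N$ throughout the casting order, so that each factor in the product is legitimately the $\mathsf{HG}(L_{i_{k}},N,M)$ mass at $0$. Once that is in place, the rest is purely arithmetic and, pleasingly, the bound is tight up to the harmless factor of $2$ that absorbs the $M/(M{+}1)$ term.
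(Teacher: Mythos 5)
Your proposal is correct and follows essentially the same route as the paper: both reduce via the Lemma to the telescoped ratio $\binom{N+t2^{\delta_0}}{N}\big/\binom{N+N2^{\delta_0}}{N}$, bound each of the $N$ factors by $\frac{t2^{\delta_0}}{N+N2^{\delta_0}}=\frac{\varepsilon}{2^{-\delta_0}+1}$, and conclude with $(\varepsilon/2)^N$. The single-joint-draw interpretation and the explicit cross-multiplication are just cosmetic repackagings of the paper's telescoping and per-factor bound.
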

\begin{proof} 
\begin{align*}
&\Pr[\emph{Win}\mid \emph{Bad}] =\Pr[\sum\limits_{k=1}^{N-t} X_{i_{k}}=0]= \prod_{k=1}^{N-t} P[X^{\ast}_{i_{k}}=0] \\
&= \prod_{k=0}^{N-t-1} \binom{N+N2^{\delta_0}-N-k 2^{\delta_0}}{2^{\delta_0}} / \binom{N+N2^{\delta_0} -k 2^{\delta_0}}{2^{\delta_0}}  \\  
&= \frac{(N+t2^{\delta_0} -N+1)\cdot \ldots \cdot (N+t2^{\delta_0})}{(N+N2^{\delta_0} -N+1)\cdot \ldots \cdot (N+N2^{\delta_0})} \\ 
& > \Big(\frac{t2^{ \delta_0}+1}{N+N2^{ \delta_{0}}}\Big)^N = \Big(\frac{t2^{ \delta_0}}{N+N2^{ \delta_{0}}} + \frac{1}{N+N2^{ \delta_{0}}}\Big)^N \\
&> \Big(\frac{t2^{ \delta_0}}{N+N2^{ \delta_{0}}}\Big)^N = \Big(\frac{\varepsilon}{2^{-\delta_{0}}+1}\Big)^N > \Big(\frac{\varepsilon}{2}\Big)^{N}
\end{align*}
\end{proof}
The question now is with what probability event \emph{Bad} occurs, \emph{i.e} how likely is the fact that voters controlled by the adversary are asked to sample last? The answer is irrelevant, because this probability depends on $N$ and $t$, and are both independent of $\delta_0$. As a result, 
\begin{align*}
&\Pr[\emph{Win}]>\Pr[\emph{Win}\mid \emph{Bad}]
\Pr[\emph{Bad}]=(\varepsilon/2)^N f(N,t)
\end{align*}
where $f(N,t)$ is a constant function with respect to the security parameter $\delta_{0}$, 
 making $\Pr[\emph{Win}]$ non-negligible in $\delta_{0}$. As a matter of fact, a static adversary will corrupt the voters that maximize $\Pr[\emph{Bad}]$. Therefore, we can assume that the honest voters sample the states at random, in order to not favor sets of corrupted voters. Now let us examine how this affects the privacy of the scheme.

\begin{theorem}
\label{main2}
Let $\Pi(N,t,\delta_0)$ be an execution of the self-tallying protocol with $N$ voters, $t$ of them corrupted, and $\delta_0$ the security parameter. We can construct an adversary $\mathcal{A}$, which with non-negligible probability in $\delta_0$ violates privacy. 
\end{theorem}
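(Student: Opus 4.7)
The plan is to upgrade the non-negligible ``survival'' probability of Proposition~\ref{main} into a concrete vote-privacy attack. I would have $\mathcal{A}$ corrupt the voter who acts as the state distributor and, among the $N+N2^{\delta_{0}}$ intended copies of $\ket{D_{1}}$, replace $N$ of them (at positions fixed in advance) by a fully separable product state whose computational-basis outcome on every voter's qudit is deterministically known to $\mathcal{A}$ --- for concreteness, $\ket{0}^{\otimes N}$. All remaining $\ket{D_{1}}$ copies and all $\ket{D_{2}}$ copies are prepared honestly, and the corrupted voters are instructed to sample during step~\ref{2} only states they know to be genuine. With this strategy the event $\emph{Bad}$ occurs with some probability $f(N,t)$ that is independent of $\delta_{0}$, and by Proposition~\ref{main} the $N$ planted copies all survive cut-and-choose with probability at least $(\varepsilon/2)^{N}f(N,t)$, which is non-negligible in $\delta_{0}$.

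Next I would show that, conditioned on the survival event $\emph{Win}$, the casting phase trivially leaks every honest vote. Each honest voter $V_{k}$ measures the $N$ planted copies in the computational basis and therefore obtains the all-zero blank column $B_{k}=(0,\dots,0)^{\intercal}$, to which they add $v_{k}$ at row $sk_{k}$. Hence, when $V_{k}$'s column is broadcast in step~5, it contains a single nonzero entry equal to $v_{k}$ at row $sk_{k}$, and the per-voter row-sum verification of step~\ref{check} still accepts because each row sums exactly to the single vote placed in it. In particular $\mathcal{A}$ directly reads off the pair $(V_{k},v_{k})$ for every honest voter from the public $N\times N$ table.

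To turn this leak into a winning strategy for $\mathbf{EXP}_{\msf{Qpriv}}^{\Pi}$, $\mathcal{A}$ picks during the setup two honest voters $V_{a},V_{b}\in\overline{\mc{V}_{\mc{A}}}$ with $v_{a}\neq v_{b}$ and defines $\mathsf{F}^{\mc{A,V}}_{l}$ as the transposition swapping $v_{a}$ and $v_{b}$, extended by the identity elsewhere. Since this permutation preserves the multiset of honest votes, the compatibility condition in the tally phase is satisfied and $\mathbf{False\_Attack}$ never occurs. Upon seeing the broadcast, $\mathcal{A}$ inspects the column of $V_{a}$ and outputs $\beta^{\ast}=0$ if the recovered vote equals $v_{a}$ and $\beta^{\ast}=1$ if it equals $v_{b}$; on the complement of $\emph{Win}$, it falls back to a uniform guess. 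A direct calculation then gives advantage over $1/2$ of at least $\tfrac{1}{2}(\varepsilon/2)^{N}f(N,t)$, which is non-negligible in $\delta_{0}$, exactly as required by Definition~\ref{Def:Qprivacy}.

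The main obstacle I anticipate is arguing that the malicious setup escapes every in-protocol sanity check. The cut-and-choose on $\ket{D_{1}}$ is handled by Proposition~\ref{main} together with the adversarial scheduling that forces $\emph{Bad}$; the cut-and-choose on $\ket{D_{2}}$ is a non-issue because those copies are left honest; and the row-sum check of step~\ref{check} still accepts because replacing $\ket{D_{1}}$ by $\ket{0}^{\otimes N}$ preserves the invariant $\sum_{k}\xi_{k}^{j}=0\bmod c$ trivially. Once these three points are verified, the reduction from Proposition~\ref{main} to a full privacy break is immediate.
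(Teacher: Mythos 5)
Your proposal is correct and follows essentially the same route as the paper: corrupt the distributing voter, replace the $N$ surviving $\ket{D_1}$ copies with known computational-basis product states satisfying the zero-sum constraint, invoke Proposition~\ref{main} for the non-negligible survival probability, and then read every honest vote off the broadcast table while the row-sum check of step~\ref{check} still accepts (the paper then lifts this to the formal $\mathbf{EXP}_{\msf{Qpriv}}^{\Pi}$ experiment in a separate theorem, exactly as in your last paragraph). The only cosmetic difference is that the paper plants uniformly random zero-sum product states rather than $\ket{0}^{\otimes N}$, which makes each honest voter's local view identical to an honest run, whereas your all-zero ballots would be conspicuous — though, as you note, the protocol specifies no check that would catch this, so your variant is equally valid against the protocol as written.
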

\begin{proof}
Let $\mathcal{C_{A}}$ be the set of indices of the corrupted voters with $|\mathcal{C_{A}}|=t$. Suppose the voter distributing the states is also corrupted, and prepares $1+N2^{\delta_{0}}$ states of the form of $\ket{D_{2}}$, $N2^{\delta_{0}}$ states of the form $\ket{D_{1}}$ and $N$ states of the form:
\[\ket{D_{\text{Corrupt}}}=\ket{\xi_{1}}\otimes \ldots\otimes \ket{\xi_{N}}\]
where $\xi_{k}\in_{R} \{0,\ldots,c-1\}$ for all $k\in \{2,\ldots,N\}$, $\xi_{1} \in  \{0,\ldots,c-1\}$ such that\footnote{\label{UaR}$\in_{R}$ denotes that the element is chosen uniformly at random from a specific domain.}:
\[\xi_{1}+ \ldots +\xi_{N}=0 \mod{c}\] 

From Proposition \ref{main} and the previous observations we know that the probability that states $\ket{D_{\text{Corrupt}}}$ remain intact after the verification procedure in step \ref{2} (\emph{i.e.} event $\emph{Win}$), happens with non-negligible probability in the security parameter $\delta_{0}$. Therefore, with non-negligible probability, the remaining states in step \ref{3} are: one of the form $\ket{D_2}$ and $N$ of the form $\ket{D_{\text{Corrupt}}}$. All honest voters $V_{k}$ measure their qudits  in the computational basis and end up with a secret number $sk_k$ (from measuring the corresponding part of $\ket{D_{2}}$) and a column 
\[B_{k}=[\xi_{k}^{1}~ \dots ~\xi_{k}^{sk_{k}}~ \dots~ \xi_{k}^{N}]^\intercal
\] (from measuring states $\ket{D_{\text{Corrupt}}}$), that is known to the adversary. Now all voters apply their vote $v_{k}$ to the $B_{k}$ according to $sk_k$. As a result: 
\[B_{k}^{v_{k}}=[\xi_{k}^{1} ~\dots ~\xi_{k}^{sk_{k}} + v_{k}~ \dots~ \xi_{k}^{N}]^\intercal  
\]
At this point all voters simultaneously broadcast their $B_{k}^{v_{k}}$, as the protocol specifies, and end up with the matrix $B=(B_{1}^{v_{1}}\ldots B_{N}^{v_{N}})$. Each $V_{k}, k\not\in \mathcal{C_{A}}$ checks that \[\sum\limits_{j=1}^{N} B[sk_{k},j]=v_{k} \mod{c}\]
which happens with probability $1$ from the description of the attack in the previous steps. As a result, each voter accepts the election result. The adversary knowing both the pre-vote matrix and the post-vote matrix can therefore extract the vote of all honest voters.
\end{proof}
A similar attack can be mounted if the adversary instead of corrupting $N$ out of $N+N2^{\delta_{0}}$ $\ket{D_{1}}$ states, corrupts just $1$ of the $\ket{D_{2}}$ states. The attack is similar to the one mentioned above but in this case the adversary knows the row in which each voter voted instead of the pre-vote matrix. Moreover, the probability of theorem \ref{main} is improved from $(\varepsilon/2)^N$ to $\varepsilon/2$ (the proof works in a similar way).

Based on the previous observations we can construct an adversary that violates $\epsilon$-{\bf quantum privacy} for any $0<\epsilon<1$ and any non-trivial permutation.

\begin{theorem}
The quantum self tallying protocol $\Pi(N,N/\epsilon,\delta_{0})$ doesn't satisfy the $\epsilon$-{\bf quantum privacy} property for any $0<\epsilon<1$.
\end{theorem}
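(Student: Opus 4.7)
The plan is to promote the attack built in Theorem \ref{main2} into an adversary $\mathcal{A}$ that wins the privacy experiment $\mathbf{EXP}_{\msf{Qpriv}}^{\Pi}$. In the setup phase, $\mathcal{A}$ declares honest voters' choices $\{v_k\}_{V_k\in\mc{V}}$ so that at least two of them are distinct, and it specifies a non-trivial permutation $\mathsf{F}^{\mc{A,V}}_l$ that acts only on honest voters (for instance a transposition of two honest voters whose declared votes differ). This ensures that the restriction of the permutation to $\overline{\mc{V}_{\mc{A}}}$ remains a permutation under both $\beta=0$ and $\beta=1$, so the experiment never outputs ${\tt -1}$ and the conditioning event $\neg\mathbf{False\_Attack}$ holds. $\mathcal{A}$ then corrupts an $\epsilon$-fraction of the voters, insists that the distributor is among them, and prepares the state ensemble of Theorem \ref{main2}: $N$ copies of $\ket{D_{\text{Corrupt}}}$ substituted in place of valid $\ket{D_1}$ states, together with $N 2^{\delta_0}$ honest $\ket{D_1}$ states and $1+N2^{\delta_0}$ honest $\ket{D_2}$ states.

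The next step is to invoke Proposition \ref{main} together with the freedom a static $\mathcal{A}$ has in choosing which voters to corrupt, which lets it maximise $\Pr[\emph{Bad}]$ to a positive quantity $f(N,t)$ depending only on $N$ and $t$. Conditioned on $\emph{Win}$, the cut-and-choose in step \ref{2} passes and every honest voter $V_k$ ends the setup phase holding the classical column $(\xi_k^1,\ldots,\xi_k^N)^\intercal$ that $\mathcal{A}$ itself prepared. In the casting phase, honest voters broadcast $B_k^{v_k^\beta}$ where $v_k^\beta := \mathsf{F}^{\mc{A,V}}_l(v_k,k)^\beta \cdot v_k^{1-\beta}$. $\mathcal{A}$ subtracts the known $\xi_k^{(\cdot)}$ column entry-wise and reads off both $sk_k$ (the unique non-zero coordinate) and $v_k^\beta \in \mathbb{Z}_c$ (its value), recovering the voter-to-vote map for the entire honest set.

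Finally, $\mathcal{A}$ compares the reconstructed assignment $\{V_k \mapsto v_k^\beta\}_{V_k \in \overline{\mc{V}_{\mc{A}}}}$ with the originally declared $\{V_k \mapsto v_k\}$ and outputs $\beta^*=0$ if they agree, $\beta^*=1$ otherwise. Because $\mathsf{F}^{\mc{A,V}}_l$ was chosen to differ from the identity on at least one honest voter whose vote actually changes, the two worlds induce distinct assignments, so conditioned on $\emph{Win}$ the guess is correct with probability $1$; on the complementary event a uniform bit is output, yielding
\[
\Pr\!\left[1\leftarrow \mathbf{EXP}_{\msf{Qpriv}}^{\Pi}(\mathcal{A},\epsilon,\delta_0) \mid \neg\mathbf{False\_Attack}\right] \geq \tfrac{1}{2} + \tfrac{1}{2}\Big(\tfrac{\epsilon}{2}\Big)^N f(N,t),
\]
which exceeds $1/2$ by a quantity independent of $\delta_0$ and hence non-negligible in $\delta_0$. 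The main obstacle is the bookkeeping around $\mathbf{False\_Attack}$: one must ensure that the chosen $\mathsf{F}^{\mc{A,V}}_l$ still restricts to a permutation on honest voters once adaptive corruptions happen, which is handled by committing $\mathcal{A}$ to a static corruption set and a permutation supported on honest voters from the outset. Once this is in place, the definition \ref{Def:Qprivacy} is violated and $\Pi(N,\epsilon N,\delta_0)$ fails $\epsilon$-quantum privacy for every $0<\epsilon<1$.
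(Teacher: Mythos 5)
Your proposal is correct and follows essentially the same route as the paper's own (sketched) proof: corrupt the state distributor, use Proposition~\ref{main}/Theorem~\ref{main2} so the corrupted $\ket{D_{\text{Corrupt}}}$ states survive the cut-and-choose with non-negligible probability, recover each honest vote from the broadcast columns, and compare against the declared versus permuted assignment to guess $\beta$, while choosing the corruption set and permutation so that the {\tt -1} output never occurs. Your write-up is in fact somewhat more explicit than the paper's sketch (the uniform-guess fallback and the quantitative bound $\tfrac12+\tfrac12(\epsilon/2)^N f(N,t)$), but it is the same argument.
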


\begin{proof}
(sketch) First $\mc{A}$ picks a non-trivial permutation $\mathsf{F}^{\mc{A}}$. It is easy to see that $\mc{A}$ can corrupt the quantum states in experiment $\mathbf{EXP}_{\msf{Qpriv}}^{\Pi}$ and $\mc{C}$ accepts with probability at least $\alpha(\delta_{0})$ the corrupted parameters, where $\alpha()$ a non negligible function, based on theorem \ref{main}. Next, $\mc{A}$ by reading all the quantum registers $\mc{B}_{j}$ one by one, can find how each voter has voted individually as in theorem \ref{main2}. As a result, $\mc{A}$ can find out if the honest voters have permute their votes or not and guess the challenge bit $\beta$ with probability at least $1/2+ \alpha(\delta_{0})$. Note that the probability for that $A$ the $\mathbf{EXP}_{\msf{Qpriv}}^{\Pi}$ to output $-1$, is $0$, because $\mc{A}$ can choose which voters to corrupt such that both the cut-and-choose attack in the \texttt{Setup phase} was successful and the condition $\mathsf{F}^{\mc{A,V}}_l(\overline{\mc{V}_{\mc{A}}})=\mathsf{F}^{\mc{A},\overline{\mc{V}_{\mc{A}}}}_{l'}$ is satisfied.
\end{proof}

So far we have seen how voters' privacy can be violated if an adversary distributes the quantum states in the protocol. However, even if the sharing of the states is done honestly by a trusted authority, still an adversary $\mc{A}$ can violate the privacy of a voter. This is done by replacing one element in a column of one of the players controlled by $\mc{A}$ with a random number. As a result, in step $\ref{check}$, the honest voter whose row doesn't pass the test, will abort the protocol by broadcasting it.  $\mc{A}$ will therefore know the identity of the voter aborting and their corresponding vote, since it knows the matrix before the modification of the column element. Similarly, in experiment $\mathbf{EXP}_{\msf{Qpriv}}^{\Pi}$ the adversary can find out if the voters permute their vote or not.

A possible solution might be to use a classical anonymous broadcast channel, so that the voters can anonymously broadcast abort if they detect any misbehaviour at step $\ref{check}$. However, this might open a path to other types of attacks, like denial-of-service, and requires further study in order to be a viable solution.


\section{Traveling Ballot Based Protocols}
\label{traveling}
In this section we discuss the traveling ballot family of protocols for referendum type elections. Here, $\A$ also plays the role of $EA$, as it sets up the parameters of the protocol in addition to producing the election result. Specifically, it prepares two entangled qudits, and sends one of them (the \emph{ballot qudit}) to travel from voter to voter. When the voters receive the ballot qudit, they apply some unitary operation according to their vote and forward the qudit to the next voter. When all voters have voted, the ballot qudit is sent back to $\A$ who measures the whole state to compute the result of the referendum. The first quantum scheme in this category was introduced by Vaccaro \etal~\cite{VJ07} and later improved \cite{BM11,HM06,LY08}. 

\subsection{Protocol Specification}
Here we present the travelling ballot protocol of \cite{HM06}; an alternative form \cite{VJ07} encodes the vote in a phase factor rather than in the qudit itself.
\begin{description}
\item[Set up phase] $\A$ prepares the state $\ket{\Omega_{0}}=\frac{1}{\sqrt{N}}\sum\limits_{j=0}^{N-1}\ket{j}_{V}\ket{j}_{\A}$, keeps the second qudit and passes the first (the ballot qudit) to voter $V_{1}$.
\item[Casting phase]
For $k=1,\dots,N$, $V_{k}$ receives the ballot qudit and applies the unitary $U^{v_k}=\sum\limits_{j=0}^{N-1}\ket{j+1}\bra{j}$, where $v_k=1$ signifies a ``yes vote and $v_k=0$ a ``no'' vote (i.e. applying the identity operator). Then, $V_k$ forwards the ballot qudit to the next voter $V_{k+1}$ and $V_{N}$ to to $\A$. 
\item[Tally phase] The global state held by $\A$ after all voters have voted, is: \[\ket{\Omega_{N}}=\frac{1}{\sqrt{N}}\sum\limits_{j=0}^{N-1}\ket{j+m}_{V}\ket{j}_{\A} \]
where $m$ is the number of "yes" votes. $\A$ measures the two qudits in the computational basis, subtracts the two results and obtains the outcome $m$. 

\end{description}

\subsection{Vulnerabilities Of Traveling Ballot Based Protocols}
The first obvious weakness of this type of protocols is that they are subject to double voting. A corrupted voter can apply the ``yes'' unitary operation many times without being detected (this issue is addressed in the next session, where we study the distributed ballot voting schemes). As a result we can easily construct an adversary $\mc{A}$ that wins in the integrity experiment described in the Appendix (Figure \ref{fig:Qint}) with probability $1$.
Furthermore, these protocols are subject to privacy attacks, when several voters are colluding. In what follows, we describe such an attack on privacy, in the case of two colluding voters. Figure~\ref{attack_TB} depicts this attack.

Let us assume that the adversary corrupts voters $V_{k-1}$ and $V_{k+1}$ for any $k$. Upon receipt of the ballot qudit, instead of applying the appropriate unitary, $V_{k-1}$ performs a measurement on the traveling ballot in the computational basis. As a result the global state becomes $\ket{\Omega_{k-1}}=\ket{h+m}_{V}\otimes \ket{h}_{\A}$, 
where $\ket{h+m}_{V}$ is one of the possible eigenstates of the observable $O=\sum\limits_{j=0}^{N-1} \ket{j}\bra{j}$, and $m$ is the number of ``yes'' votes cast by the voters $V_{1},\ldots, V_{k-2}$ (note that $V_{k-1}$ does not get any other information about the votes of the previous voters, except number $h+m$). Then $V_{k-1}$ passes the ballot qudit $\ket{h+m}_{V}$ to $V_{k}$, who applies the respective unitary for voting ``yes'' or ``no''. As a result the ballot qudit is in the state $\ket{h+m+v_{k}}_{V} $. Next, the ballot qudit is forwarded to the corrupted voter $V_{k+1}$, who measures it again in the computational basis and gets the result $h+m+v_{k}$. $\mc{A}$ can now infer vote $v_{k}$ from the two measurement results and figure out how $V_{k+1}$ voted. 
Similarly, $\mc{A}$ can guess the correct bit $\beta$ in $\mathbf{EXP}_{\msf{Qpriv}}^{\Pi}$ with probability $1$ by measuring the quantum registers $\mc{B}_{\pe(j)-1}$ and $\mc{B}_{\pe(j)+1}$, where $V_{\pe(j)}$ an honest party ($\mc{A}$ might need to corrupt two more voters such that $\mathbf{EXP}_{\msf{Qpriv}}^{\Pi}$ will not output {\tt -1}). The same attack can also be applied in the case where there are many voters between the two corrupted parties. In this case the adversary can't learn the individual votes but only the total votes.
\begin{figure}
\centering
\includegraphics[scale=0.95]{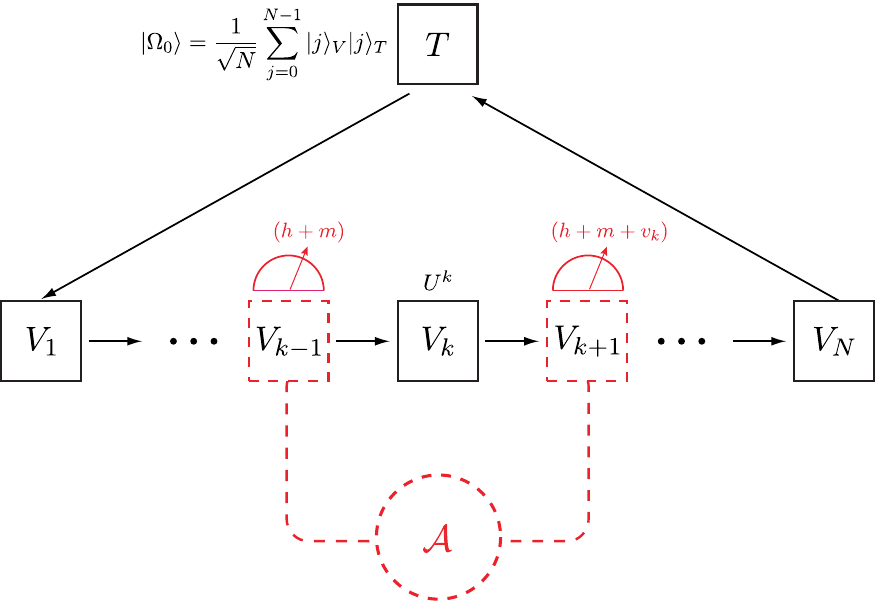}
\caption{$\mc{A}$ corrupts voters $V_{k-1}$,$V_{k+1}$ and learns how voter $V_{k}$ voted with probability $1$.}
\label{attack_TB}
\end{figure}
One suggestion presented in \cite{VJ07} is to allow $\A$ to perform extra measurements to detect a malicious action during the protocol's execution. However, this only identifies an attack and does not prevent the adversary from learning some of the votes, as described above. Furthermore, the probability of detecting a deviation from the protocol is constant and as such does not depend on the security parameter and does not lead to a substantial improvement of security.
It should also be noted that verifiability of the election result is not addressed in any of these works, since $\A$ is assumed to generate the initial state honestly. In the case where $\A$ is corrupted, privacy is trivially violated.

All traveling ballot protocols proposed (\cite{VJ07,BM11,HM06,LY08}) suffer from the above privacy attack. Next, we discuss how this issue has been addressed by revisiting the structure of the protocols. Unfortunately, as we will see, new issues arise.


\section{Distributed Ballot Based Protocols}\label{distributed}
Here we describe the family of quantum distributed ballot protocols \cite{BM11,HM06,VJ07}. In these schemes, $\A$ prepares and distributes to each voter a blank ballot, and gathers it back after all voters have cast their vote in order to compute the final outcome. This type of protocols give strong guarantees for privacy against other voters but not against a malicious $\A$ which is trusted to prepare correctly specific states. So it is not hard to see that if the states are not the correct ones, then the privacy of a voter can be violated. 

A first attempt presented in \cite{VJ07} suffers from double voting similarly to the discussion in the previous section. The same problem also appears in \cite{DS06}. Later works \cite{BM11,HM06} address this issue with a very elaborate countermeasure. The intuition behind the proposed technique is that $\A$ chooses a secret number $\delta$ according to which it prepares two different quantum states: the ``yes'' and the ``no'' states. This $ \delta $ value is hard to predict due to the non-orthogonality of the shared states and the no-cloning theorem. The authors suggest that many rounds of the protocol be executed. As a result, any attempt of the adversary to learn $\delta$ gives rise to a different result in each round. However, the number of required rounds, as well as a rigorous proof are not presented in the study. 

More importantly, a careful analysis reveals that the proposed solution is still vulnerable to double voting. As we will see, an adversary can mount what we call a $d$-transfer attack, and transfer $d$ votes for one option of the referendum election to the other. To achieve this attack, the adversary does not need to find the exact value of $\delta$ (as the authors believed), but knowing the difference of the angles used to create the ``yes'' and ``no'' states suffices. We construct a polynomial quantum adversary that performs the $d$-transfer attack with probability at least $0.25$, if the number of rounds is smaller than exponential in the number of voters. As a result this makes the protocol practically unrealistic for large scale elections.

\subsection{Protocol Specification}
\label{DistrDisc}
\noindent We first present the protocol from \cite{BM11,HM06}:

\noindent{\bf Setup phase}
\begin{enumerate}
\item $\A$ prepares an $N$-qudit ballot state: $\ket{\Phi}=\frac{1}{\sqrt{D}}\sum_{j=0}^{D-1}\ket{j}^{\otimes N}$, 
where the states $\ket{j}, j =0,...,D-1$, form an orthonormal basis for the $D$-dimensional Hilbert space, and $D>N$. The $k$-th qudit of $\ket{\Phi}$ corresponds to $V_k$'s blank ballot.
\item $\A$ sends to $V_{k}$ the corresponding blank ballot together with two option qudits, one for the ``yes'' and one for the ``no'' option:
\begin{eqnarray*}
\text{yes:}&~~\ket{\psi(\theta_y)}=\frac{1}{\sqrt{D}}\sum_{j=0}^{D-1}e^{ij\theta_y}\ket{j}, \text{no:}&~~\ket{\psi(\theta_n)}=\frac{1}{\sqrt{D}}\sum_{j=0}^{D-1}e^{ij\theta_n}\ket{j}
\end{eqnarray*}
For $v\in\{y,n\}$ we have $\theta_v=(2\pi l_v/D)+\delta$, where $l_v\in\{0,\dots, D-1\}$ and $\delta\in [0,2\pi/D)$. Values $l_{y}$ and $\delta$ are chosen uniformly at random from their domain and $l_{n}$ is chosen such that $N(l_{y}-l_{n} \mod{D})<D$ . These values are known only to $\A$.
\end{enumerate}

\noindent{\bf Casting phase}
\begin{enumerate}\setcounter{enumi}{2}
\item Each $V_{k}$ decides on ``yes'' or ``no'' by appending the corresponding option qudit to the blank ballot and performing a 2-qudit measurement $R=\sum_{r=0}^{D-1} rP_r$, where:
\[
P_r=\sum_{j=0}^{D-1}\ketbra{j+r}{j+r}\otimes\ketbra{j}{j}
\]
According to the result $r_k$, $V_k$ performs a unitary correction $U_{r_k}=\mathbb{I}\otimes \sum_{j=0}^{D-1}\ketbra{j+r_k}{j}$ and sends the 2-qudits ballot along with $r_k$ back to $\A$.
\end{enumerate}

\noindent{\bf Tally phase}
\begin{enumerate}\setcounter{enumi}{3}
\item The global state of the system (up to normalization) is:
\begin{align*}
&\ \frac{1}{\sqrt{D}} \sum\limits_{j=0}^{D-1} \prod\limits_{k=1}^{N}\alpha_{j,r_k} \ket{j}^{\otimes 2N}
\end{align*}
where, \[\alpha_{j,r_{k}}=\begin{cases}
  e^{i(D+j-r_{k})\theta_v^k} ,& 0 \leq j \leq r_{k}-1 \\ e^{i(j-r_{k})\theta_v^k} & r_{k} \leq j \leq D-1
\end{cases}\]
\item For every $k$, using the announced results $r_{k}$, T applies the unitary operator:
\[W_{k}=\sum_{j=0}^{r_{k}-1}e^{-iD\delta} \ket{j}\bra{j}+\sum_{j=r_{k}}^{D-1}\ket{j}\bra{j} \]
on one of the qudits in the global state (it is not important on which one, since changes to the phase factor of a qudit that is part of a bigger entangled state take effect globally). Now $\A$ has the state:
\[
\ket{\Omega_m}=\frac{1}{\sqrt{D}}\sum_{j=0}^{D-1}e^{ij(m\theta_y+(N-m)\theta_n)}\ket{j}^{\otimes 2N}
\]

where $m$ is the number of ``yes'' votes.
\item By applying the unitary operator $\sum_{j=0}^{D-1}e^{-ijN\theta_n}\ket{j}\bra{j}$ on one of the qudits and setting $q=m(l_y-l_n)$, we have:
\[
\ket{\Omega_q}=\frac{1}{\sqrt{D}}\sum_{j=0}^{D-1}e^{2\pi ijq/D}\ket{j}^{\otimes 2N}
\]
\end{enumerate}
We note here that $q$ must be between $0$ and $D-1$, so that the different outcomes be distinguishable. Now with the corresponding measurement $\A$ can retrieve $q$. Since $\A$ knows values $l_{y}$ and $l_{n}$, it can derive the number $m$ of "yes" votes. Note that if a voter does not send back a valid ballot, the protocol execution aborts.

\subsection{Vulnerabilities Of Distributed Ballot Protocols}
In this section, we show how the adversary can perform the $d$-transfer attack in favor of the ``yes'' outcome. We proceed as follows. We first show that this is possible if the adversary knows the difference $l_{y}-l_{n}$. We then show how the adversary can find out this value, and conclude the section with the probabilistic analysis of our attack which establishes that it can be performed with overwhelming probability in the number of voters.
\medskip

\noindent {\bf The d-transfer attack:\label{attackkvote}} Given the difference $l_{y}-l_{n}$, a dishonest voter can violate the no-double-voting. From the definition of $l_{y}$ and $l_{n}$ it holds that:
\begin{equation}
\label{dif}
  2\pi (l_{y}-l_{n})/D =\theta_{y}-\theta_{n}
\end{equation}
If a corrupted voter (e.g. $V_{1}$) knows $l_{y}-l_{n}$, then they proceed as follows (w.l.o.g. we assume that they want to increase the number of ``yes'' votes by $d$):
\begin{enumerate}
\item $V_{1}$ applies the unitary operator: $C_d=\sum\limits_{j=0}^{D-1} e^{ijd(\theta_y-\theta_n)}\ket{j}\bra{j}$ to the received option qudit $\ket{\psi(\theta_{y})}$. As a result, the state becomes: 
$$C_d \ket{\psi(\theta_y)}=\frac{1}{\sqrt{D}}\sum\limits_{j=0}^{D-1}e^{ ijd(\theta_y-\theta_n)} e^{i j \theta_{y}} \ket{j}$$

\item $V_{1}$ now performs the 2-qudit measurement specified in the Casting phase of the protocol and obtains the outcome $r_{1}$.

\item $V_{1}$ performs the unitary correction $ U_{r_{1}}$. For $\tilde{\theta}=d(\theta_y-\theta_n)+\theta_y$, the global state now is:
\begin{align*}
&U_{r_{1}} P_{r_{1}}\big(\ket{\Phi}\otimes C_d\ket{\psi(\theta_y)}\big)= \\ &\frac{1}{\sqrt{D}}\Big[\sum_{j=0}^{r_{1}-1}e^{i(D+j-r_{1})\tilde{\theta}}\ket{j}^{\otimes{N+1}}+\sum_{j=r_{1}}^{D-1}e^{i(j-r_{1})\tilde{\theta}}\ket{j}^{\otimes{N+1}}\Big]
\end{align*}
\item Before sending the two qudit ballot and the value $r_{1}$  to $\A$, $V_{1}$ performs the following operation to the option qudit: 
\[\mathbf{Correct}_{r_1}= \begin{cases}
e^{-iD d(\theta_y-\theta_n)} \ket{j} \bra{j}, & 0 \leq j \leq r_1-1 \\ \ket{j} \bra{j} & r_1 \leq j \leq D-1

\end{cases}\]

\item After all voters have cast their ballots to $\A$, the global state of the system (up to normalization) is:
\begin{align*}
&\ \frac{1}{\sqrt{D}} (\sum\limits_{j=0}^{r_{1}-1}e^{i(j-r_{1})d( \theta_{y}-\theta_{n})} e^{i(D+j-r_1)\theta_{y}} \prod\limits_{k=2}^{N}\alpha_{j,r_k} \ket{j}^{\otimes 2N} \\ &\ +\sum\limits_{j=r_1}^{D-1}e^{i(j-r_1) d(\theta_{y}-\theta_{n})} e^{i(j-r_1)\theta_{y}} \prod\limits_{k=2}^{N}\alpha_{j,r_{k}} \ket{j}^{\otimes 2N})
\end{align*}
where, \[\alpha_{j,r_{k}}=\begin{cases}
  e^{i(D+j-r_{k})\theta_v^k} ,& 0 \leq j \leq r_{k}-1 \\ e^{i(j-r_{k})\theta_v^k} & r_{k} \leq j \leq D-1

\end{cases}\]
and $\theta_v^k$ describes the vote of voter $V_{k}$, where $v\in\{y,n\}$. $\A$ just follows the protocol specification. It applies some corrections on the state given the announced results $r_{k}$ and finally the state becomes: \[\frac{1}{\sqrt{D}}\sum\limits_{j=0}^{D-1}e^{i(j-r_{1}) d (\theta_{y}-\theta_{n})} e^{i(j-r_1)\theta_{y}} \cdot \ldots \cdot e^{i(j-r_n)\theta_v^n} \ket{j}^{\otimes2N}\]
which under a global phase factor is equivalent to: \[\frac{1}{\sqrt{D}}\sum\limits_{j=0}^{D-1}e^{ij d(\theta_{y}-\theta_{n})} e^{i j(m\theta_{y}+(N-m)\theta_{n})}  \ket{j}^{\otimes 2N} \]

\item $\A$ removes the unwanted factor $e^{ijN\theta_{n}}$ as prescribed by the protocol, and the final state is:
\begin{align*}
\ket{\Omega_{m+d}} &= \frac{1}{\sqrt{D}}\sum\limits_{j=0}^{D-1}e^{ij d(\theta_{y}-\theta_{n})} e^{i jm(\theta_{y}-\theta_{n})}  \ket{j}^{\otimes 2N } \\  
&= \frac{1}{\sqrt{D}}\sum\limits_{j=0}^{D-1}e^{2\pi ij(m+d)(l_y-l_n)/D}  \ket{j}^{\otimes 2N }
\end{align*}

\item After measuring the state, the result is $m+d$ instead of $m$.

\end{enumerate}
\medskip

\noindent {\bf Finding the difference between $l_{y}$ and $l_{n}$:} What remains in order to complete our attack is to find the difference $l_{y}-l_{n}$. We now show how an adversary can learn this difference with  overwhelming probability in $N$. We assume that the adversary controls a fraction $\varepsilon$ of the voters ($0<\varepsilon<1$), who are (all but one) instructed  to vote half the times "yes" and the other half "no". Instead of destroying the remaining option qudits (exactly $\varepsilon N/2$ "yes" and $\varepsilon N/2$ "no" votes), the adversary keeps them to run Algorithm \ref{adval}.
\begin{algorithm}
\algsetup{linenosize=\tiny}
\scriptsize
\caption{Adversary's algorithm}\label{adval}
\begin{algorithmic}[1]

\REQUIRE $D,\ket{\psi(\theta_v)}_1,\cdots,\ket{\psi(\theta_v)}_{\varepsilon N/2}$ 
\ENSURE $\tilde{l}\in\{0,\dots,D-1\}$ \\
\STATE{$\mathtt{Record}=[0,\dots,0]\in \mathbb{N}^{1\times D}$;\COMMENT{{\tiny \textcolor{gray}{This vector shows us how many values are observed in each interval}}}} 
\STATE{$\mathtt{Solution}=["Null","Null"]\in \mathbb{N}^{1\times 2}$;}\STATE{$i,l,m=0$;}

\WHILE{$i \leq \varepsilon N/2 $} \STATE{Measure $\ket{\psi(\theta_v)}_i$ by using POVM operator $E(\theta)$ from Eq.(\ref{POVM}), the result is $y_i$;} \STATE{Find the interval for which $\frac{2\pi j}{D} \leq y_i \leq \frac{2\pi (j+1)}{D}$;} \STATE{ $\mathtt{Record}[j]=$++;} \STATE{$ i$++;}
\ENDWHILE
\WHILE{$l < D $} 
\IF{$\mathtt{Record}[l]\geq 40 \% (\varepsilon N/2)$} \STATE{$\mathtt{Solution}[m]=l$;}\STATE{ $m++;$} \ENDIF
\STATE{$l++$; } \ENDWHILE \IF{$\mathtt{Solution}==[0,D-1]$}\STATE{$\mathtt{Solution}=[\mathtt{Solution}[1],\mathtt{Solution}[0]]$; }
\ENDIF
\RETURN{$\tilde{l}=\mathtt{Solution}[0]$;}
\end{algorithmic}
\end{algorithm}
In essence, the algorithm is executed twice - once for each set of option qudits $\{\ket{\psi(\theta_v)}\}_{\varepsilon N/2}$, where $v\in\{y,n\}$. It measures the states in each set and attributes to each one an integer. After all states have been measured, the algorithm creates a vector $\mathtt{Record}$, which contains the number of times each integer appeared during the measurements. Finally, Algorithm \ref{adval} creates the vector $\mathtt{Solution}$ in which it registers the values that appeared at least $40\%$ of times during the measurements, equivalently the values for which the $\mathtt{Record}$ vector assigned a number greater or equal than $40\%$ of times. The algorithm outputs the first value in the $\mathtt{Solution}$ vector. As we see in Figure \ref{example_Alorithm}, with high probability the value that algorithm outputs is either $l_{v}$ or $l_{v}-1$, for both values of $v$. Hence, we can find the difference $l_{y}-l_{n}$. After having acquired knowledge of $l_{y}-l_{n}$, the adversary can instruct the last corrupted voter to change the outcome of the voting process as previously described. 
\begin{figure}
\centering
\includegraphics[scale=0.95]{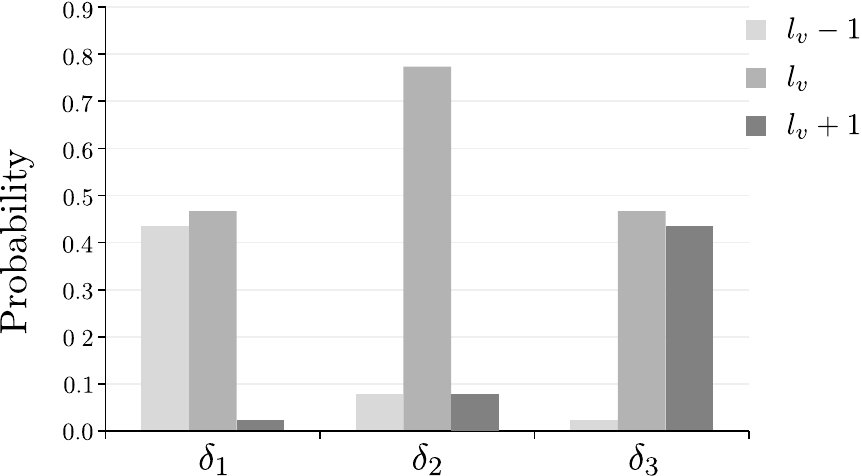}
\caption{The probabilities with which Algorithm \ref{adval} records a value in $\{l_v-1,l_v,l_v+1\}$ after measuring state $\ket{\psi(\theta_{v})}$ for $\delta_1=\frac{\pi}{2^{35}}$, $\delta_2=\frac{\pi}{2^{30}}$, and $\delta_3=\frac{\pi(2^6-1)}{2^{35}}$.}
\label{example_Alorithm}
\end{figure}

\noindent {\bf Probabilistic analysis} We prove here that the adversary's algorithm succeeds with overwhelming probability in $N$, where $N$ is the number of voters. Therefore, as we later prove in Theorem \ref{fexpN}, the election protocol needs to run at least exponentially many times with respect to $N$ in order to guarantee that the success probability of the adversary is at most $0.25$. We present here the necessary lemmas and give the full proofs in the Supplementary Material.
 
In order to compute the success probability of the attack, we first need to compute the probability of measuring a value in the interval $(x_l,x_{l+\w})$, where $x_{l}=\dfrac{2\pi l}{D}, ~~ l \in \{0,1,\ldots,D-1\}$\footnote{It is convenient to think of $l$ as the $D^{th}$ roots of unity.}. 

\begin{lemma}
\label{proof3}
Let $\Theta_{D,\delta}^{v} \in [0,2\pi]$ be the continuous random variable that describes the outcome of the measurement of an option qudit $\ket{\psi(\theta_{v})},v\in \{\text{y},\text{n}\}$ using operators: 
\begin{equation}\label{POVM}
E(\theta)=\frac{D}{2\pi} \ket{\Phi(\theta)} \bra{\Phi(\theta)}
\end{equation}
where $\ket{\Phi(\theta)}=\frac{1}{\sqrt{D}}  \sum\limits_{j=0}^{D-1} e^{ij\theta } \ket{j}$. 
 It holds that:
\begin{equation*}
\Pr[x_l<\Theta_{D,\delta}^{v}<x_{l+\w}]=\dfrac{1}{2\pi D} \int_{x_l}^{x_{l+\w}} \dfrac{\sin^2[D(\theta-\theta_{v})/2]}{\sin^2[(\theta-\theta_{v})/2]}d\theta
\end{equation*}
\end{lemma}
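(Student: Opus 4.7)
The plan is to use the continuous-measurement formula
\[
\Pr[x_l<\Theta_{D,\delta}^{v}<x_{l+\w}]=\int_{x_l}^{x_{l+\w}}\bra{\psi(\theta_v)}E(\theta)\ket{\psi(\theta_v)}\,d\theta,
\]
which follows from the preliminaries since the POVM elements $\{E(\theta)\}_{\theta\in[0,2\pi]}$ satisfy $\int_{0}^{2\pi}E(\theta)\,d\theta=\mathbb{I}$ (a routine check using the orthogonality of the computational basis vectors). By the definition of $E(\theta)$, the integrand equals $\tfrac{D}{2\pi}\,|\braket{\Phi(\theta)|\psi(\theta_v)}|^{2}$, so the proof reduces to computing this inner product.

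Next, I would expand
\[
\braket{\Phi(\theta)|\psi(\theta_v)}=\frac{1}{D}\sum_{j=0}^{D-1}e^{ij(\theta_v-\theta)},
\]
which is a finite geometric sum. Summing yields
\[
\braket{\Phi(\theta)|\psi(\theta_v)}=\frac{1}{D}\cdot\frac{1-e^{iD(\theta_v-\theta)}}{1-e^{i(\theta_v-\theta)}}
\]
for $\theta\neq\theta_v$, with the limiting value $1$ at $\theta=\theta_v$. Factoring out half-angle phases $e^{iD(\theta_v-\theta)/2}$ in the numerator and $e^{i(\theta_v-\theta)/2}$ in the denominator (the classical Dirichlet-kernel manipulation) and taking the modulus squared gives
\[
|\braket{\Phi(\theta)|\psi(\theta_v)}|^{2}=\frac{1}{D^{2}}\cdot\frac{\sin^{2}\!\bigl[D(\theta-\theta_v)/2\bigr]}{\sin^{2}\!\bigl[(\theta-\theta_v)/2\bigr]},
\]
where the sign of the argument can be absorbed since the sine is squared.

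Finally, I would substitute back and get
\[
\Pr[x_l<\Theta_{D,\delta}^{v}<x_{l+\w}]=\int_{x_l}^{x_{l+\w}}\frac{D}{2\pi}\cdot\frac{1}{D^{2}}\cdot\frac{\sin^{2}\!\bigl[D(\theta-\theta_v)/2\bigr]}{\sin^{2}\!\bigl[(\theta-\theta_v)/2\bigr]}\,d\theta=\frac{1}{2\pi D}\int_{x_l}^{x_{l+\w}}\frac{\sin^{2}\!\bigl[D(\theta-\theta_v)/2\bigr]}{\sin^{2}\!\bigl[(\theta-\theta_v)/2\bigr]}\,d\theta,
\]
which is the claimed identity. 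The only delicate step is the half-angle factorization of the geometric sum and verifying that the removable singularity at $\theta=\theta_v$ does not affect the integral (the integrand extends continuously to $D$ there). Everything else is bookkeeping, so I do not expect a real obstacle beyond carefully handling these trigonometric identities.
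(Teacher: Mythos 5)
Your proposal is correct and follows essentially the same route as the paper: write the probability as $\int_{x_l}^{x_{l+\w}}\bra{\psi(\theta_v)}E(\theta)\ket{\psi(\theta_v)}\,d\theta$, reduce the integrand to $\frac{1}{2\pi D}\bigl|\sum_{j=0}^{D-1}e^{ij(\theta-\theta_v)}\bigr|^2$, and identify this with the Dirichlet-kernel form $\sin^2[D(\theta-\theta_v)/2]/\sin^2[(\theta-\theta_v)/2]$. The only (cosmetic) difference is that you evaluate the sum as a geometric series with a half-angle factorization, whereas the paper splits it into its real and imaginary parts and invokes the Lagrange identities for $\sum_j\cos[jx]$ and $\sum_j\sin[jx]$; both yield the same kernel.
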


\noindent{}According to Algorithm \ref{adval}, an option qudit is attributed with the correct value $l_{v}$ when the result of the measurement is in the interval $[x_{l_{v}},x_{l_{v}+1}]$. Using Lemma \ref{proof3}, we can prove the following:

\begin{lemma}
\label{cint}
Let $\ket{\psi(\theta_{v})}$ be an option qudit of the protocol. Then it holds: $$\Pr[x_{l_v}<\Theta_{D,\delta}^{v}<x_{l_v+1}]> 0.405 $$ 

\end{lemma}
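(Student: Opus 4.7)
The plan is to reduce the probability expression given by Lemma \ref{proof3} to an integral of the Fejér-type kernel $\sin^2 s / s^2$, and then minimize that integral over the nuisance parameter $\delta$.

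First, I would instantiate Lemma \ref{proof3} with $l = l_v$ and $\w = 1$, and perform the change of variable $u = \theta - \theta_v$. Recalling that $\theta_v = 2\pi l_v/D + \delta$ with $\delta \in [0, 2\pi/D)$, the new limits are $u \in [-\delta,\ 2\pi/D - \delta]$, giving
\[
\Pr[x_{l_v} < \Theta_{D,\delta}^{v} < x_{l_v+1}] = \frac{1}{2\pi D}\int_{-\delta}^{2\pi/D - \delta}\frac{\sin^2(Du/2)}{\sin^2(u/2)}\,du.
\]

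Next, I would pass from the ``quantum'' kernel to its continuous analogue. Using the elementary inequality $\sin^2(u/2) \le (u/2)^2$ (which holds for every real $u$), the denominator can be enlarged, yielding
\[
\frac{\sin^2(Du/2)}{\sin^2(u/2)} \ge \frac{4\sin^2(Du/2)}{u^2}.
\]
Substituting $s = Du/2$ (so $du = (2/D)\,ds$) turns the bound into
\[
\Pr[x_{l_v} < \Theta_{D,\delta}^{v} < x_{l_v+1}] \ge \frac{1}{\pi}\int_{-a}^{\pi - a}\frac{\sin^2 s}{s^2}\,ds, \qquad a := D\delta/2 \in [0, \pi).
\]
This is the crucial step that removes the $D$-dependence and reduces the problem to a one-dimensional optimization over $a$.

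I would then minimize $f(a) := \int_{-a}^{\pi-a} \sin^2 s / s^2\,ds$ over $a \in [0,\pi)$. Differentiating and using that the integrand is even,
\[
f'(a) = \sin^2 a \left(\frac{1}{a^2} - \frac{1}{(\pi - a)^2}\right),
\]
so $f$ is increasing on $(0,\pi/2)$ and decreasing on $(\pi/2,\pi)$; its minimum on the closed interval is attained at the endpoints $a = 0$ and (by evenness) $a = \pi$, with value $\int_0^{\pi}\sin^2 s / s^2\,ds$. Using the antiderivative
\[
\int\frac{\sin^2 s}{s^2}\,ds = -\frac{\sin^2 s}{s} + \operatorname{Si}(2s) + C,
\]
the minimum evaluates to $\operatorname{Si}(2\pi) \approx 1.4181515$. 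Therefore
\[
\Pr[x_{l_v} < \Theta_{D,\delta}^{v} < x_{l_v+1}] \ge \frac{\operatorname{Si}(2\pi)}{\pi} \approx 0.4513 > 0.405,
\]
as required.

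The main subtlety is the worst-case analysis in $\delta$: at first glance one might expect the probability to be smallest when $\delta$ is at the ``center'' of its range, but in fact the even kernel $\sin^2 s / s^2$ concentrates at the origin, so the integral is smallest precisely when the interval of length $\pi$ is shifted entirely to one side (the boundary values $\delta = 0$ or $\delta \to 2\pi/D$). The rest of the argument is a calibration exercise, and the uniform lower bound $0.405$ is safely below the numerical value $\operatorname{Si}(2\pi)/\pi$.
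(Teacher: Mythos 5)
Your proposal is correct, and it reaches the stated bound by a route that differs from the paper's in its technical execution, though the overall skeleton is the same (instantiate Lemma \ref{proof3}, reduce to a one-parameter family in $\delta$, show the boundary $\delta=0$ is the worst case, then bound the value there). The paper works with the exact quantity $F(\delta)=\frac{1}{\pi D}\int_{-\delta/2}^{\pi/D-\delta/2}\frac{\sin^2(Dy)}{\sin^2 y}\,dy$: it differentiates $F$ directly, locates the minimum at the endpoints $\delta=0$ and $\delta=2\pi/D$, and then lower-bounds $F(0)$ using the piecewise-linear estimates $|\sin x|\le|x|$, $|\sin x|\ge \frac{2}{\pi}|x|$ on $[0,\pi/2]$ and $|\sin x|\ge|{-\frac{2}{\pi}x+2}|$ on $[\pi/2,\pi]$, arriving at $4/\pi^2\approx 0.4053$, which clears $0.405$ only barely. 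You instead bound the kernel first, via $\sin^2(u/2)\le(u/2)^2$, which replaces the Dirichlet-type kernel by the Fej\'er-type kernel $\sin^2 s/s^2$ and, after rescaling $s=Du/2$, removes all $D$-dependence; only then do you minimize the resulting surrogate $f(a)$ over $a=D\delta/2$, evaluating the endpoint value in closed form as $\operatorname{Si}(2\pi)$. This exchange of order (bound pointwise in $\delta$, then optimize) is legitimate since the kernel inequality holds for every $\delta$, and your monotonicity computation for $f$ and the antiderivative $-\sin^2 s/s+\operatorname{Si}(2s)$ are both correct. What your route buys is a cleaner and strictly stronger uniform constant, $\operatorname{Si}(2\pi)/\pi\approx 0.4514$, with a comfortable margin over $0.405$, and a bound that is tight in the large-$D$ limit; what the paper's route buys is information about the exact $F(\delta)$ itself (its monotone increase on $(0,\pi/D)$ and decrease on $(\pi/D,2\pi/D)$), which the paper reuses later when arguing about the $\delta$-dependence of the neighbouring probabilities $p_{l_v\pm1}$, something your surrogate minimization does not directly provide.
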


\noindent{}Lemma \ref{cint} shows that with probability at least $0.405$, the result of the measurement is in the interval $(x_{l_v},x_{l_v+1})$. Since Algorithm \ref{adval} inserts an integer to the $\mathtt{Solution}$ vector if it corresponds to at least $40\%$ of the total measured values, $l_v$ will most likely be included in the vector (we formally prove it later). Furthermore, we prove now that with high probability, there will be no other values to be inserted in $\mathtt{Solution}$, except the neighbours of the value $l_v$ (namely $l_v \pm 1 $).

\begin{lemma}
\label{proof0.9}
Let $\ket{\psi(\theta_{v})}$ be an option qudit of the protocol. Then it holds: $$Pr[x_{l_{v}-1}<\Theta_{D,\delta}^{v}<x_{l_{v}+2}]>0.9 $$

\end{lemma}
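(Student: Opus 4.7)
The plan is to apply Lemma~\ref{proof3} directly, centering the Fejér-type kernel at $\theta_v$ and bounding the mass that escapes the length-$6\pi/D$ window. After the change of variable $\phi = \theta - \theta_v$, Lemma~\ref{proof3} rewrites the probability density as
\[
K_D(\phi) \;=\; \frac{1}{2\pi D}\cdot\frac{\sin^2(D\phi/2)}{\sin^2(\phi/2)},
\]
which is a probability density on one period and is strongly peaked at $\phi=0$. Since $\theta_v = 2\pi l_v/D + \delta$ with $\delta \in [0,2\pi/D)$, the interval $(x_{l_v-1},x_{l_v+2})$ becomes, in the $\phi$ variable, an interval of total length $6\pi/D$ containing the origin, with endpoints $-2\pi/D-\delta$ and $4\pi/D - \delta$.

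Next I would reduce the claim to an upper bound on the complement probability: it suffices to show $\int_{\mathrm{comp}} K_D(\phi)\,d\phi < 0.1$ uniformly over $\delta \in [0,2\pi/D)$ and $D > N$. I would split the complement (on the circle) into a left arc $A_L$ and a right arc $A_R$. Over each arc I would use the crude numerator bound $\sin^2(D\phi/2)\le 1$ and the closed-form antiderivative $\int \csc^2(\phi/2)\,d\phi = -2\cot(\phi/2)$, to evaluate each tail in closed form in terms of $\cot$. Taking the worst configuration over $\delta$ places one arc as close as $2\pi/D$ and the other as close as $4\pi/D$ from the origin; both $\cot$-values can then be monotonically bounded in $D$, so it suffices to verify the resulting inequality in the small-$D$ regime (where $D=N+1$) and in the large-$D$ limit (where $\cot(k\pi/D)\sim D/(k\pi)$).

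The main obstacle will be tightness of the constant: the naïve $\sin^2\le 1$ bound yields a prefactor of roughly $\cot(\pi/D)/(\pi D) + \cot(2\pi/D)/(2\pi D)$, which in the limit $D\to\infty$ tends to $\tfrac{1}{\pi^2} + \tfrac{1}{2\pi^2} \approx 0.152$, exceeding the target $0.1$. To recover the required bound I would replace $\sin^2(D\phi/2)$ by its mean value $1/2$ on the bulk of each tail, absorbing the oscillatory part via a second antiderivative estimate (integration by parts against $\csc^2(\phi/2)$, whose derivative is integrable on the tails). This gains a factor of $2$ and brings the asymptotic bound down to about $\tfrac{3}{4\pi^2} \approx 0.076 < 0.1$. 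Finally, I would combine the averaging bound for the bulk of each tail with Jordan's inequality $\sin(\phi/2)\ge \phi/\pi$ near the endpoints to keep the inequality uniform over all $D>N$, yielding the claimed $\Pr[x_{l_v-1}<\Theta_{D,\delta}^v<x_{l_v+2}]>0.9$.
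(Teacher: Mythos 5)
Your route is genuinely different from the paper's. The paper does not estimate the tails at all: after the change of variables and a derivative-in-$\delta$ argument reducing to the worst case $\delta=0$, it lower-bounds the window integral directly by replacing $\sin^2[(\theta-\theta_v)/2]$ with $[(\theta-\theta_v)/2]^2$ in the denominator and $\sin^2[D(\theta-\theta_v)/2]$ with a degree-$40$ truncated alternating Taylor expansion in the numerator (valid on $[-2\pi,2\pi]$); after term-by-term integration every $D$-dependence cancels and the bound evaluates to a $D$-independent constant $\approx 0.9263>0.9$, with the wrap-around cases $l_v\in\{0,D-1\}$ handled in a separate lemma (your circle formulation absorbs these automatically, which is a small advantage). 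Your complement-based plan is sound in spirit, and your asymptotic accounting is right: the crude $\sin^2\le 1$ bound is indeed too weak, and averaging the oscillation brings the limiting tail mass to about $3/(4\pi^2)\approx 0.076$.

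The gap is in making this uniform over all $D$, and the margin is too thin for the tools you name. You only have about $0.1-0.076\approx 0.024$ to spend, and the oscillatory remainder you propose to ``absorb'' by integrating $\cos(D\phi)$ by parts against $\csc^2(\phi/2)$ is not negligible on that scale: each tail produces boundary and integral terms of size roughly $\frac{1}{4\pi D^{2}}\csc^{2}(\pi/D)$, which is $\approx \frac{1}{4\pi^{3}}\approx 0.008$ asymptotically but becomes $\approx \frac{1}{16\pi}\approx 0.02$ per term if, as you suggest, you control the endpoint sines only by Jordan's inequality; summed over both tails this pushes your bound above $0.1$. Moreover, ``verify in the small-$D$ regime and in the large-$D$ limit'' is not an argument for intermediate $D$ (and $D$ is only constrained by $D>N$, not $D=N+1$); you would need either sharper endpoint estimates (e.g.\ $\sin x\ge x-x^{3}/6$) combined with an explicit check up to a finite threshold, or a monotonicity-in-$D$ statement for your bound, neither of which is in the plan. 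Two smaller points: the claim that the worst configuration over $\delta$ places the arcs at distances $2\pi/D$ and $4\pi/D$ needs a short justification (either the derivative-in-$\delta$ argument the paper uses for the window mass, or convexity of $1/a_{1}+1/a_{2}$ subject to $a_{1}+a_{2}=6\pi/D$, $a_{i}\ge 2\pi/D$); and your displayed naive bound $\cot(2\pi/D)/(2\pi D)$ is inconsistent with the limit $1/(2\pi^{2})$ you attribute to it, though either version exceeds $0.1$ so the conclusion you draw from it stands.
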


\noindent{}Here we need to note that we are aware of  the cases  $l_v\in\{0,D-1\}$ where the members $x_{l_{v}-1}$ and $x_{l_{v}+2}$ are not defined. It turns out not to be a problem and the same thing can be proven for these values (see Supplementary Material).

We have shown that the probability the measurement outcome lies in the interval $(x_{l_{v}-1},x_{l_{v}+2})$, and therefore gets attributed with a value of $l_v-1$, $l_v$ or $l_v+1$, is larger than $0.9$. If we treat each measurement performed by Algorithm \ref{adval} on each option qudit $\ket{\psi(\theta_v)}$, as an independent Bernoulli trial with success probability $p_{l}=Pr[x_{l}<\Theta_{D,\delta}^{v}<x_{l+1}] $, we can prove the following theorem:

\ignore{\begin{definition}
We can see each measurement that algorithm \ref{adval} performs at each vote state $\ket{\psi(\theta_v)}$, as an independent Bernoulli trial $X_{l}$ with probability of success $p_{l}=Pr[x_{l}<\Theta_{D,\delta}^{v}<x_{l+1}] $. Then the value of $\mathtt{Record}[l]$ follows  the binomial distribution:\[\mathtt{Record}[l]\sim B(\frac{\varepsilon N}{2},p_{l})\]

\end{definition}}

\begin{theorem}
\label{them1}
With overwhelming probability in the number of voters $N$, Algorithm \ref{adval} includes $l_v$ in the $\mathtt{Solution}$ vector
\[\Pr[\mathtt{Solution}[0]= l_v \vee \mathtt{Solution}[1]= l_v ]> 1-1/exp(\Omega (N))\]
\end{theorem}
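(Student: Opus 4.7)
The plan is to bound $\mathtt{Record}[l_v]$ from below using a Chernoff concentration inequality. Each of the $\varepsilon N/2$ measurements that Algorithm \ref{adval} performs on copies of $\ket{\psi(\theta_v)}$ is an independent Bernoulli trial that falls in bin $l_v$ with probability $p_{l_v}$, so $\mathtt{Record}[l_v] \sim B(\varepsilon N/2, p_{l_v})$. By Lemma \ref{cint}, $p_{l_v} > 0.405$, and the inclusion threshold $0.4 \cdot \varepsilon N/2$ is separated from the mean $\mathbb{E}[\mathtt{Record}[l_v]] \geq 0.405 \cdot \varepsilon N/2$ by a constant fraction of $N$. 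The multiplicative Chernoff bound then yields
\[
\Pr\!\left[\mathtt{Record}[l_v] < 0.4 \cdot \varepsilon N/2\right] \leq \exp(-\Omega(N)),
\]
so $l_v$ clears the 40\% threshold with overwhelming probability.

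Because $\mathtt{Solution}$ only stores the first two indices that clear the threshold, the next step is to show that $l_v$ is among them. Here I would use the pigeonhole-style observation that, since $\sum_{l=0}^{D-1} p_l = 1$ and $p_{l_v} > 0.405$, one has $\sum_{l \neq l_v} p_l < 0.595$; in particular, at most one other index $l^{\ast}$ can satisfy $p_{l^{\ast}} > 0.4$. For every remaining index $l$ with $p_l$ bounded away from $0.4$ from below by some fixed constant $\eta > 0$, a second Chernoff bound gives $\Pr[\mathtt{Record}[l] \geq 0.4 \cdot \varepsilon N/2] \leq \exp(-\Omega(N))$, and a union bound over at most $D$ such indices remains negligible provided $D = \mathrm{poly}(N)$, which is the intended regime for the protocol. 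Combining the two estimates, with probability $1 - 1/\exp(\Omega(N))$ at most two indices cross the threshold and $l_v$ is one of them, hence $l_v \in \{\mathtt{Solution}[0], \mathtt{Solution}[1]\}$.

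The main obstacle I anticipate is controlling indices whose probabilities sit just slightly below $0.4$, where Chernoff fails to deliver an exponential margin. The pigeonhole bound above is precisely what rules out this failure mode: since $p_{l_v} > 0.405$ already consumes more than $0.4$ of the total mass, at most one additional probability can be of borderline magnitude, so Algorithm \ref{adval} is forced to place $l_v$ into either $\mathtt{Solution}[0]$ or $\mathtt{Solution}[1]$. If tighter control is needed, Lemma \ref{proof0.9} can be invoked to restrict the set of candidate high-probability indices to the three neighbours $\{l_v-1, l_v, l_v+1\}$, eliminating the dependence of the union bound on $D$ entirely and making the remaining Chernoff step completely routine.
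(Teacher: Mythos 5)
Your first step---modelling $\mathtt{Record}[l_v]$ as a binomial $B(\varepsilon N/2,\,p_{l_v})$, invoking Lemma~\ref{cint} for $p_{l_v}>0.405$, and applying the multiplicative Chernoff bound at the $0.4\cdot\varepsilon N/2$ threshold---is exactly the paper's proof of Theorem~\ref{them1}. Where you diverge is the second stage, and there the paper needs no argument about other indices at all: the bin counts sum to exactly $\varepsilon N/2$, so at most two bins can ever reach the $40\%$ threshold, and hence the event $\mathtt{Record}[l_v]\geq 0.4\,\varepsilon N/2$ already implies \emph{deterministically} that $l_v$ occupies $\mathtt{Solution}[0]$ or $\mathtt{Solution}[1]$; this counting fact is precisely the (implicit) equality of events with which the paper opens its computation. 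Your pigeonhole-plus-Chernoff-plus-union-bound stage is therefore correct but superfluous for this theorem, and in the form you state it, it imports an assumption the paper deliberately avoids: the union bound over up to $D$ indices needs $D=\mathrm{poly}(N)$, whereas the paper emphasizes that the one-round analysis is independent of $D$ (your fallback via Lemma~\ref{proof0.9}, bounding the aggregate mass outside the three neighbouring bins, does remove that dependence). Controlling the other indices is not wasted effort in the larger argument---it is essentially the content of Theorem~\ref{Them2} and Lemma~\ref{solutionst}---but it is not needed to establish the present statement.
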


\noindent We have proven that with overwhelming probability in $N$, integer $l_v$ occupies one of the two positions of vector  $\mathtt{Solution}$, but what about the other value? In the next theorem, we show that with overwhelming probability in $N$, the other value is one of the neighbours of $l_v$, namely $l_{v}+1$ or $l_{v}-1$.

\begin{theorem}
\label{Them2}
With negligible probability in the number of voters $N$, Algorithm \ref{adval} includes a value other than $(l_v-1,l_v, l_v+1)$ in the $\mathtt{Solution}$ vector, \emph{i.e.} $\forall \w \in \{0,\dots,l_{v}-2,l_{v}+2,\dots,D-1\} $: \[\Pr[\mathtt{Solution}[0]= \w \vee \mathtt{Solution}[1]=\w]<1/exp(\Omega (N)) \]
\end{theorem}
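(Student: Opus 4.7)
The plan is to observe that the event $\{\mathtt{Solution}[0]=\omega \vee \mathtt{Solution}[1]=\omega\}$ is, by direct inspection of Algorithm~\ref{adval}, the \emph{same} event as $\{\mathtt{Record}[\omega] \geq 0.4 \cdot (\varepsilon N/2)\}$: a label is inserted into $\mathtt{Solution}$ iff its bin counter crosses the $40\%$ threshold. So the whole statement reduces to a tail bound on one coordinate of a multinomial.

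Next I would note that the $\varepsilon N/2$ option qudits of a given type are prepared as independent copies of $\ket{\psi(\theta_v)}$ and the POVM $\{E(\theta)\}$ from Lemma~\ref{proof3} is applied to each separately, so the recorded values are i.i.d.\ samples of $\Theta_{D,\delta}^v$. Consequently $\mathtt{Record}[\omega] \sim \mathrm{Binomial}(\varepsilon N/2,\, p_\omega)$, where $p_\omega = \Pr[x_\omega < \Theta_{D,\delta}^v < x_{\omega+1}]$. For any $\omega \notin \{l_v-1, l_v, l_v+1\}$, the interval $(x_\omega,x_{\omega+1})$ is disjoint from $(x_{l_v-1},x_{l_v+2})$, and Lemma~\ref{proof0.9} then gives
\[
p_\omega \;\leq\; \Pr\!\left[\Theta_{D,\delta}^v \notin (x_{l_v-1}, x_{l_v+2})\right] \;<\; 0.1.
\]

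Since $p_\omega < 0.1$ is strictly smaller than the threshold fraction $0.4$, a standard Chernoff bound for the upper tail of a binomial applies. Writing $n = \varepsilon N/2$ and using the relative-entropy form, there is a positive constant $c := \mathrm{KL}(0.4 \,\|\, 0.1) > 0$, independent of $N$, such that
\[
\Pr\!\left[\mathtt{Record}[\omega] \geq 0.4\, n\right] \;\leq\; \exp(-c\, n) \;=\; \exp(-\Omega(\varepsilon N)),
\]
which is exactly $1/\exp(\Omega(N))$ for any constant $\varepsilon$. Combining with the equivalence from the first paragraph yields the claimed bound for each individual $\omega$.

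The only mildly delicate point is the boundary case $l_v \in \{0, D-1\}$ where the intervals $x_{l_v-1}$ or $x_{l_v+2}$ are not literally defined; here I would invoke the periodic wrap-around remark following Lemma~\ref{proof0.9} (the analogue of the $0.9$-concentration bound established in the Supplementary Material) so that the same three consecutive bins still absorb at least $90\%$ of the measurement mass. I do not expect any real obstacle beyond carefully writing the Chernoff estimate: the independence of the $\varepsilon N/2$ trials is immediate from the tensor-product structure of the option qudits, and the constant gap between $p_\omega < 0.1$ and the decision threshold $0.4$ guarantees the exponent is genuinely linear in $N$.
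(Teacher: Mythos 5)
Your proof is correct and follows essentially the same route as the paper's: identify the event with the binomial tail $\Pr[\mathtt{Record}[\w]\ge 0.4\,\varepsilon N/2]$, deduce $p_{\w}<0.1$ from Lemma~\ref{proof0.9}, and close with a Chernoff bound yielding $1/\exp(\Omega(N))$. The only cosmetic difference is that you use the relative-entropy form of the Chernoff bound where the paper uses the multiplicative form $\Pr[X\ge(1+\gamma)\mu]\le \exp(-\gamma\mu/3)$; both give the same conclusion.
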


\begin{lemma}
\label{solutionst}
With overwhelming probability in $N$, the $\mathtt{Solution}$ vector in Algorithm \ref{adval}, is equal to $[l_{v}-1,l_{v}],[l_{v},"Null"]$ or $[l_{v},l_{v}+1]$. Specifically,
\begin{align*}
&\Pr[\mathtt{Solution} \in \{[l_{v}-1,l_{v}],[l_{v},"Null"],[l_{v},l_{v}+1]\}] > 1- 1/exp(\Omega (N))
\end{align*}
\end{lemma}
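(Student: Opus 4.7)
The plan is to derive Lemma \ref{solutionst} as a direct corollary of Theorems \ref{them1} and \ref{Them2}, together with a small structural observation about how Algorithm \ref{adval} populates $\mathtt{Solution}$. Define the events $E_{\text{hit}} := \{l_v \in \mathtt{Solution}\}$ and $E_{\text{clean}} := \{\mathtt{Solution}\text{ contains no value outside }\{l_v-1,l_v,l_v+1\}\}$. Theorem \ref{them1} gives $\Pr[\neg E_{\text{hit}}] \leq 1/\exp(\Omega(N))$ directly. For $E_{\text{clean}}$, I would invoke Theorem \ref{Them2} once for each of the at most $D-3$ bad indices $\w$ and take a union bound; since $D$ is polynomial in $N$, we still obtain $\Pr[\neg E_{\text{clean}}] \leq D \cdot 1/\exp(\Omega(N)) = 1/\exp(\Omega(N))$.

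Next, I would show that on $E_{\text{hit}} \cap E_{\text{clean}}$ the vector $\mathtt{Solution}$ is forced into one of the three listed forms. The single-measurement indicator events ``outcome attributed to $\w$'' for distinct $\w$ are pairwise disjoint, so the empirical frequencies $\widehat p_\w$ used by Algorithm \ref{adval} satisfy $\widehat p_{l_v-1} + \widehat p_{l_v} + \widehat p_{l_v+1} \leq 1$. On $E_{\text{hit}}$ we have $\widehat p_{l_v} \geq 0.4$, which precludes the simultaneous event $\widehat p_{l_v-1} \geq 0.4$ and $\widehat p_{l_v+1} \geq 0.4$. Hence at most two indices cross the $40\%$ threshold, and since Algorithm \ref{adval}'s while-loop writes them in increasing order of $l$, the resulting vector must be $[l_v, \text{``Null''}]$, $[l_v-1, l_v]$ or $[l_v, l_v+1]$ (edge cases $l_v \in \{0, D-1\}$ are absorbed by the final swap line of Algorithm \ref{adval}, which maps $[0,D-1]$ to $[D-1,0]$). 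A final union bound over $\neg E_{\text{hit}}$ and $\neg E_{\text{clean}}$ yields the claimed $1 - 1/\exp(\Omega(N))$ lower bound.

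I expect the only real subtlety to be the polynomial-in-$N$ union bound defining $E_{\text{clean}}$: one needs to invoke the fact that the qudit dimension $D$ is polynomially bounded in $N$, so that multiplying the per-index bound $1/\exp(\Omega(N))$ from Theorem \ref{Them2} by $D$ still leaves an exponentially small quantity. This is implicit in the protocol's efficiency requirements (the registers must be polynomially sized for the tallier's operations to be implementable), but it should be stated explicitly in the proof. Everything else is a mechanical recombination of the two preceding theorems with the disjointness observation on the Bernoulli indicators.
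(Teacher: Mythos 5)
Your proposal is correct and follows essentially the same route as the paper's own proof, which defines the events $B=\{l_v\in\mathtt{Solution}\}$ and $A=\{\text{some }\w\notin\{l_v-1,l_v,l_v+1\}\text{ is in }\mathtt{Solution}\}$ and combines Theorems \ref{them1} and \ref{Them2} with the observation that Algorithm \ref{adval} writes $\mathtt{Solution}$ in increasing index order (so $[l_v,l_v-1]$ and $[l_v+1,l_v]$ cannot occur). You are in fact more careful than the paper on the one genuinely delicate step: the paper silently uses Theorem \ref{Them2} as if it bounded $\Pr[A]$ directly, whereas your explicit union bound over the at most $D-3$ bad indices --- together with the requirement that $D$ be polynomially bounded in $N$ and the (easily checked) uniformity of the exponent in Theorem \ref{Them2} over $\w$ --- is exactly what is needed to make that step rigorous.
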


\noindent Now consider we have two executions of the Algorithm \ref{adval}, one for the "yes" and one for the "no" option qudits. It turns out that the values in the positions $l_{y}-1$ and $l_{n}-1$ of the vector $\mathtt{Record}$, follow the same Binomial distribution (it is easy to see that $p_{l_{y}-1}=p_{l_{n}-1}$). Also, each of them can be seen as a function of $\delta$  which is a monotonic decreasing function that takes a maximum value for  $\delta =0$ (the proof technique is similar to Lemma \ref{cint}). At this point the probability is equal to $p_{l_{v}}$, which is at least $0.405$ as we have proven in Lemma \ref{cint}\footnote{The same holds for the $p_{l_{y}+1},p_{l_{n}+1}$ except that probability is a monotonic increasing function with maximum value at point $\delta= 2\pi/D$ and value equal to $p_{l_{v}}$.}. Armed with this observation we can prove the next theorem.

\begin{theorem}
\label{OVERN}
If we define the event "Cheat" as:
\[
Cheat = \big [Algo(y)-Algo(n)=l_{y}-l_{n} \big ] 
\] where $Algo(v)$ is the execution of Algorithm \ref{adval} with $v\in\{y,n\}$, then it holds that: \[Pr["Cheat"]>1-1/exp(\Omega (N))\]
\end{theorem}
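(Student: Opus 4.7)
The plan is to reduce $\mathrm{Cheat}$ to the event that the two executions $Algo(y)$ and $Algo(n)$ make the same shift choice at position $l_v - 1$ relative to $l_v$, and then to exploit the fact that those two choices are driven by a pair of \emph{independent} Binomial counts with \emph{identical} parameters, inherited from the shared value of $\delta$.

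First I would apply Lemma \ref{solutionst} to each of $Algo(y)$ and $Algo(n)$ and take a union bound: with probability at least $1 - 2/\exp(\Omega(N))$, both $\mathtt{Solution}$ vectors lie in the good set $\{[l_v - 1, l_v],\, [l_v, \text{``Null''}],\, [l_v, l_v + 1]\}$. In this event each algorithm outputs $Algo(v) = l_v + \sigma_v$ with $\sigma_v \in \{-1, 0\}$, where $\sigma_v = -1$ exactly when $\mathtt{Record}_v[l_v - 1] \geq 0.4 \cdot \varepsilon N / 2$. Hence $Algo(y) - Algo(n) = (l_y - l_n) + (\sigma_y - \sigma_n)$, so $\mathrm{Cheat}$ is the event $\sigma_y = \sigma_n$. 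By the observation preceding the theorem, the counts $\mathtt{Record}_y[l_y - 1]$ and $\mathtt{Record}_n[l_n - 1]$ are independent and each distributed as $\mathrm{Binomial}(\varepsilon N / 2, p_{l_v - 1}(\delta))$ with a common parameter, thanks to the translation invariance of the POVM distribution and the shared $\delta$. Applying Hoeffding's inequality to each, with probability $1 - \exp(-\Omega(N))$ both counts land within an $o(N)$ window of their common mean; whenever that mean is separated from the threshold $0.4 \cdot \varepsilon N / 2$ by more than this window, both counts fall on the same side of the threshold and $\sigma_y = \sigma_n$ is forced.

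The main obstacle is the critical regime in which $p_{l_v - 1}(\delta)$ lies within an $o(1)$ window of the threshold $0.4$, where a single Binomial can straddle the threshold and $\sigma_y \neq \sigma_n$ becomes likely. The plan to handle it is to exploit the observations preceding the theorem --- that $p_{l_v - 1}(\delta)$ is a smooth monotonically decreasing function of $\delta$ on $[0, 2\pi/D)$ with maximum $p_{l_v}(0) \geq 0.405$ at $\delta = 0$ --- together with the fact that $\delta$ is drawn uniformly at random by the tallier, to bound the Lebesgue measure of ``critical'' $\delta$. A symmetric argument applies to the $l_v + 1$ side, where the analogous probability is monotonically increasing and takes its maximum $\geq 0.405$ at the right endpoint. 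Combining the three negligible contributions --- Lemma \ref{solutionst} failure, Hoeffding failure, and the critical-$\delta$ slice --- yields $\Pr[\mathrm{Cheat}] \geq 1 - 1/\exp(\Omega(N))$. The delicate point that must be quantified is the measure of the critical slice balanced against the Hoeffding exponent tightly enough to keep the combined error exp-small; this likely requires either a sharper analysis of the Dirichlet-kernel integrand near the critical $\delta^\ast$ or a joint concentration argument on the pair $(\mathtt{Record}_y[l_y - 1],\, \mathtt{Record}_n[l_n - 1])$ rather than applying Hoeffding to each marginal in isolation.
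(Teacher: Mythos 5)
Your proposal follows essentially the same route as the paper's proof: condition on $\delta$, observe that the counts $\mathtt{Record}_y[l_y-1]$ and $\mathtt{Record}_n[l_n-1]$ are independent Binomials with the \emph{same} parameter $p_{l_v-1}(\delta)$, use Lemma~\ref{solutionst} so that each output is $l_v+\sigma_v$ with $\sigma_v\in\{-1,0\}$ and the event \emph{Cheat} reduces to $\sigma_y=\sigma_n$, and then argue that both shifts land on the same side of the $40\%$ threshold. The paper's sketch does exactly this, partitioning $[0,2\pi/D)$ at the critical value $\delta_0$ where $p_{l_v-1}(\delta_0)=0.4$ and dismissing the boundary with $\Pr[\delta=\delta_0]=0$; you are in fact more candid than the paper in isolating where the argument is delicate.

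However, the gap you flag in your last sentence is genuine, and the repair you propose (bounding the Lebesgue measure of the critical slice) cannot deliver the claimed exponential rate. The problematic set is not the single point $\delta_0$ but a neighbourhood of it: for $\delta$ with $|p_{l_v-1}(\delta)-0.4|=O(1/\sqrt{N})$, the conditional probability $q(\delta)=\Pr[\mathtt{Record}[l_v-1]\ge 0.4\,\varepsilon N/2]$ is bounded away from both $0$ and $1$, so $\Pr[\sigma_y\neq\sigma_n\mid\delta]=2q(\delta)(1-q(\delta))=\Theta(1)$ there. Since $p_{l_v-1}$ varies smoothly in $\delta$ with derivative of constant order near $\delta_0$, this slice has relative measure $\Theta(1/\sqrt{N})$ under the uniform draw of $\delta$, and integrating gives $\Pr[\neg\mathit{Cheat}]=\Theta(1/\sqrt{N})$ --- polynomially, not exponentially, small. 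No sharper marginal or joint concentration bound on the two counts can beat this, because the obstruction comes from the randomness of $\delta$ itself, not from the fluctuations of the Binomials. So your plan, carried out honestly, proves only $\Pr[\mathit{Cheat}]\ge 1-O(1/\sqrt{N})$; this is the same point at which the paper's own sketch quietly stops, since excluding only the measure-zero point $\delta=\delta_0$ does not handle its neighbourhood. (A $1-O(1/\sqrt{N})$ bound still supports the qualitative attack for protocols with few rounds, but it does not yield the stated $1-1/\exp(\Omega(N))$, and correspondingly weakens the round-count conclusion of Theorem~\ref{fexpN}.)
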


\begin{proof}
(sketch) We have seen that there exists a $\delta_0$ such that the probability $p_{l_{v}-1}$ is equal to $0.4$ for both values of $v$. It holds that: 
\begin{align*}
&Pr["Cheat"]=Pr["Cheat"|\delta \in [0,\delta_0)]\cdot Pr[\delta \in [0,\delta_0)] \nonumber \\ &+Pr["Cheat"|\delta =\delta_0]\cdot Pr[\delta=\delta_0]\nonumber \\
&+Pr["Cheat"|\delta \in (\delta_0,2\pi/D)]\cdot Pr[\delta \in (\delta_0,2\pi/D)]
\end{align*}
For the first interval, for both values of $v$, Algorithm \ref{adval} registers $\mathtt{Solution}=[l_{v}-1,l_{v}]$ with overwhelming probability in $N$. This holds because of Theorem \ref{solutionst} and the previous observation. Therefore, for both values of $v$ the algorithm outputs the values $l_{v}-1$. As a result, $l_{y}-1 -(l_n-1)=l_{y}-l_{n}$. 

For the second term, $ Pr[\delta=\delta_0]=0$, because $\delta$ is a continuous random variable. Finally, in the last term, the probability that the algorithm registers $\mathtt{Solution}=[l_{v}-1,l_{v}]$ is negligible in $N$, and by Theorem \ref{solutionst}, $\mathtt{Solution}$  has the form $[l_{v}]$ or $[l_{v},l_{v}+1]$. So for both values of $v$, the printed values are $l_{y}$ and $l_{n}$.

\end{proof}

At this point we have proven that the adversary succeeds with overwhelming probability in $N$ to perform the $d$-transfer attack in one round. But how many rounds should the protocol run in order to prevent this attack?

In the next theorem we prove that if the number of rounds is at most $exp(\Omega(N))$, the adversary succeeds with probability at least $0.25$. Although in a small election these numbers might not be big, in a large scale election it is infeasible to run the protocol as many times, making it either inefficient or insecure. We also note that the probabilistic analysis for one round is independent of the value $D$, so cannot be used to improve the security of the protocol.

\begin{theorem}
\label{fexpN}
Let $(\ket{\Phi}, \ket{\psi(\theta_y)},\ket{\psi(\theta_n)}, \delta, D, N)$ define one round of the protocol. If the protocol runs  $\rho$ rounds, where $2 \leq \rho \leq exp(\Omega(N)) $ , the $d$-transfer attack succeeds with probability at least $0.25$.
\end{theorem}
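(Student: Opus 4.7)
My plan is to reduce the multi-round analysis to the single-round guarantee provided by Theorem~\ref{OVERN}. First I would observe that for the $d$-transfer attack to shift the announced outcome of the referendum without being detected, the adversary must succeed in each of the $\rho$ rounds; failing in even one round yields an inconsistent tally across rounds, which exposes the attack. The single-round probability is already characterised: by Theorem~\ref{OVERN}, Algorithm~\ref{adval} recovers $l_y - l_n$ with probability at least $1 - 1/\exp(\Omega(N))$, and the construction in Section~\ref{attackkvote} shows that this knowledge alone lets the adversary execute the $d$-transfer attack deterministically in that round.

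Next I would invoke independence across rounds. Each round samples fresh values of $\delta, l_y, l_n$ and distributes freshly prepared blank ballots and option qudits, so the random variables governing the adversary's success in each round are independent Bernoullis with success probability at least $p_N := 1 - 1/\exp(\Omega(N))$. Consequently, the probability of succeeding in all $\rho$ rounds is bounded below by $p_N^{\rho}$. Applying the elementary inequality $(1-x)^{\rho} \geq 1 - \rho x$ valid for $x \in [0,1]$, this is at least $1 - \rho/\exp(\Omega(N))$.

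Finally I would tune the constants. Writing the single-round error as $e^{-cN}$ for a fixed constant $c > 0$ extracted from Theorem~\ref{OVERN}, whenever $\rho \leq \exp(c' N)$ with $c' < c$ chosen so that $\rho \cdot e^{-cN} \leq 0.75$ for $N$ sufficiently large, the joint success probability is at least $0.25$, as claimed. The main obstacle I anticipate is carefully justifying independence across rounds when the quantum adversary may retain side information, possibly even entanglement, from earlier rounds: one must argue that the leftover option qudits fed to Algorithm~\ref{adval} in one round are uncorrelated with the freshly sampled $\delta, l_y, l_n$ of subsequent rounds, and that publishing intermediate tallies does not leak exploitable information about the later rounds' parameters. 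Once this independence is formalised, the multiplicative composition of Bernoulli trials combined with Bernoulli's inequality yields the stated bound.
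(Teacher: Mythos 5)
Your proposal is correct and takes essentially the same route as the paper: invoke Theorem~\ref{OVERN} for the single-round success probability, treat the $\rho$ rounds as independent so the adversary must succeed in all of them, and lower-bound $(1-1/\exp(\Omega(N)))^{\rho}$ by a constant. The only cosmetic difference is the last inequality --- the paper substitutes $1/\exp(\Omega(N)) \le 1/\rho$ and uses $(1-1/\rho)^{\rho} \ge 0.25$ for $\rho \ge 2$, whereas you use Bernoulli's inequality with a constant-tuning step that marginally shrinks the admissible range of $\rho$ (immaterial under the $\exp(\Omega(N))$ notation); your explicit flagging of the cross-round independence assumption is a point the paper leaves implicit.
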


\begin{proof}
According to Theorem \ref{OVERN}  the probability that an adversary successfully performs the $d$-transfer attack is:
 \[Pr["Cheat"]>1-1/exp(\Omega(N))\]
Now, for $\rho$ protocol runs, where $2 \leq \rho \leq exp(\Omega(N)) $,  this probability becomes:
\begin{align*}
&(Pr["Cheat"])^{\rho}>(1-1/exp(\Omega(N)))^{\rho} \geq(1-1/\rho)^{\rho} > 0.25
\end{align*}

\end{proof}


Now, based on Theorem \ref{fexpN}, we can create an adversary such that $\mc{A}$ wins the $\mathbf{EXP}_{\msf{Qint}}^{\Pi}$ with probability at least $25\%$ if the protocol runs fewer than exponential number of rounds with respect to the number of voters.

\begin{theorem}
The adversary from section \ref{attackkvote} wins the experiment $\mathbf{EXP}_{\msf{Qint}}^{\Pi}$, where $\Pi$ is the protocol as described in section \ref{DistrDisc}, \ignore{$\ket{\msf{params}}=\ket{\Phi}\otimes \prod\limits_{j=1}^{N}(\ket{\psi(\theta_y)}_{j} \otimes \ket{\psi(\theta_n)}_{j}) \otimes \ket{\delta} \otimes \ket{D}
\otimes \ket{\rho}$,} with probability at least $0.25\%$ for every $\epsilon>0$ and number of rounds $2 \leq \rho \leq exp(\Omega(N))$.
\end{theorem}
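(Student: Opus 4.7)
(sketch) The plan is to reduce winning $\mathbf{EXP}_{\msf{Qint}}^{\Pi}$ to a successful $d$-transfer attack, and then invoke Theorem~\ref{fexpN}. First I would instantiate $\mc{A}$ according to the $d$-transfer attack from Section~\ref{attackkvote}: given $\epsilon>0$, $\mc{A}$ adaptively corrupts an $\epsilon$-fraction of the voters during the \emph{casting phase}, instructing all but one of them to vote honestly in a balanced pattern (half ``yes'' and half ``no''), while diverting their option qudits $\ket{\psi(\theta_y)}, \ket{\psi(\theta_n)}$ into the side computation executed by Algorithm~\ref{adval}. The remaining corrupted voter, say $V_1$, waits for $\mc{A}$ to output an estimate $\tilde{l}_y-\tilde{l}_n$ of $l_y-l_n$ and then applies the $d$-transfer operator $C_d$ followed by the correction $\mathbf{Correct}_{r_1}$ on the received ballot, as specified in Section~\ref{attackkvote}.

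Next I would argue that whenever the estimate is correct, the $d$-transfer attack succeeds against $\mathbf{EXP}_{\msf{Qint}}^{\Pi}$. In that case, $\A$ (which is honest in the integrity experiment) follows the tally algorithm, measures the global state as prescribed, and announces the outcome $m+d$ instead of the true count $m$ of ``yes'' votes among honest choices. Since $d\geq 1$, the announced outcome $X = m+d$ is inconsistent with the honest voters' inputs $\{v_k\}_{V_k\notin\mc{V}_{\mc{A}}}$: either $\msf{P}_{\msf{VCounted}}^{\Pi}(\{v_k\}_{V_k\notin\mc{V}_{\mc{A}}}, X) = 0$ because the count attributed to the honest block exceeds what they cast, or equivalently the total number of votes reflected in $X$ exceeds $|\mc{V}|$, so $\mathsf{Nballots}^{\Pi}(X) > |\mc{V}|$. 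Either way the winning condition of $\mathbf{EXP}_{\msf{Qint}}^{\Pi}$ is met, while $X\neq \bot$ because the ballot cast by $V_1$ is syntactically valid and $\A$ does not abort.

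For the probability bound, I would invoke Theorem~\ref{fexpN} directly: across $\rho$ independent rounds, the $d$-transfer attack succeeds with probability at least $0.25$ whenever $2\leq \rho \leq \exp(\Omega(N))$. Since each successful $d$-transfer produces an accepted but incorrect tally as argued above, $\mathbf{EXP}_{\msf{Qint}}^{\Pi}$ outputs $1$ on that round, giving $\Pr[1\leftarrow \mathbf{EXP}_{\msf{Qint}}^{\Pi}(\mc{A},\epsilon,\delta_0)]\geq 0.25$.

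The main obstacle I expect is the bookkeeping between the single-round analysis of Theorem~\ref{OVERN} and the multi-round setting of Theorem~\ref{fexpN}: one must verify that $\mc{A}$ only needs to succeed in one round to break integrity (since the $d$-transfer attack does not abort on failure, independent rounds can be tried), and that the experiment's adaptive corruption model and the ``order $\pe$ unknown to $\mc{A}$'' clause do not prevent $\mc{A}$ from assigning the role of $V_1$ and the sampling roles appropriately; this is straightforward because $\mc{A}$ may defer the choice of which corrupted voter plays the $d$-transfer role until after $\mathsf{Solution}$ is computed.
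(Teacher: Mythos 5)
Your overall structure matches the paper's: instantiate the $d$-transfer adversary from Section~\ref{attackkvote}, observe that a successful transfer yields an accepted tally $m+d$ that falsifies $\msf{P}_{\msf{VCounted}}^{\Pi}$, and invoke Theorem~\ref{fexpN} for the $0.25$ bound. Your second paragraph, spelling out why the shifted outcome triggers the winning condition of $\mathbf{EXP}_{\msf{Qint}}^{\Pi}$, is actually more explicit than the paper's own sketch and is a welcome addition.

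However, your final paragraph contains a genuine error that inverts the logic of the multi-round argument. You claim that ``$\mc{A}$ only needs to succeed in one round to break integrity (since the $d$-transfer attack does not abort on failure, independent rounds can be tried).'' This is backwards. The $\rho$ rounds are the protocol's \emph{defense}: the tallier compares the outcomes across rounds and, as the paper's proof states explicitly, outputs $\bot$ as soon as one round's result differs from a previous one --- in which case $X=\bot$ and the experiment outputs $0$. The adversary must therefore produce the \emph{same} shifted outcome $m+d$ in \emph{every} round, which is exactly why Theorem~\ref{fexpN} bounds $(\Pr[\text{``Cheat''}])^{\rho}\geq(1-1/\rho)^{\rho}>0.25$ rather than $1-(1-\Pr[\text{``Cheat''}])^{\rho}$. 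Under your reading, more rounds would only help the adversary, the upper bound $\rho\leq \exp(\Omega(N))$ would be superfluous, and the success probability would be close to $1$ already from Theorem~\ref{OVERN} --- none of which is consistent with the theorem statement. The fix is simple: replace the ``one round suffices'' claim with the observation that, conditioned on the event of Theorem~\ref{fexpN} (all $\rho$ estimates of $l_y-l_n$ correct, hence all rounds consistently report $m+d$), the challenger's cross-round consistency check passes, $X=m+d\neq\bot$, and the experiment outputs $1$; this event has probability at least $0.25$.
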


\begin{proof}
(sketch) When $\mc{A}$ sends the ballot to $\mc{C}$ on behalf of a corrupted voter in \textbf{Casting phase}, $\mc{A}$ applies the operations as described in section \ref{attackkvote}. Next, in \textbf{Tally phase} $\mc{C}$ computes the election outcome. If in a round the result is different from a previous round, $\mc{C}$ outputs "$\bot$". However, from Theorem \ref{fexpN}, we know that $\mc{C}$ will not output "$\bot$"  with probability at least $0.25\%$.
\end{proof}


\section{Quantum voting based on conjugate coding}
\label{okamoto}
This section looks at protocols based on conjugate coding (\cite{OY08,ZR13}). The participants in this family of protocols are one or more election authorities\ignore{\footnote{In \cite{ZR13} the authors introduced two election authorities in order to distribute the trust between them.}}, the tallier and the voters. The election authorities are only trusted for the purpose of eligibility; privacy should be guaranteed by the protocol against both malicious $EA$ and $\A$. Unlike the previous protocols, here the voters do not share any entangled states with neither $EA$ nor $\A$ in order to cast their ballots.
One of the main differences between the two protocols is that \cite{OY08} does not provide any verification of the election outcome, while \cite{ZR13} does, but at the expense of receipt freeness, which \cite{OY08} satisfies. Specifically, in \cite{ZR13} each $V_{k}$ establishes two keys with $T$ in an anonymous way by using part of protocol \cite{OY08} as a subroutine. It's worth to mention that in order for these keys to be established, further interaction between the voters and $EA$ is required and $EA$ is assumed trusted for that task. At the end of an execution, $V_{k}$ encrypts the ballot with one of the keys and sends it to $T$ over a quantum anonymous channel. $T$ announces the result of each ballot accompanied with the second key so that the voters can verify that their ballot has been counted. This makes it also possible for a coercer to verify how a voter voted, by showing them the second key used as a receipt.
It is worth mentioning that protocol \cite{OY08} could easily be made to satisfy the same notion of verifiability.

\subsection{Protocol Specification}

\noindent{\bf Set up phase}
\begin{enumerate}
\item $EA$ picks a vector $\bar{b}=(b_{1},\dots,b_{n+1}) \in_{R} \{0,1\}^{n+1}$, where $n$ is the security parameter of the protocol. This vector will be used by $EA$ for the encoding of the ballots  and it will be kept secret from $T$ until the end of the ballot casting phase.
\item For each $V_k$, $EA$ prepares $w=poly(n)$ blank ballot fragments each of the form $\ket{\phi_{\bar{a}_{j},\bar{b}}}=\ket{\psi_{a_{j}^{1},b_{1}}} \otimes \ldots \otimes \ket{\psi_{a_{j}^{n+1},b_{n+1}}}, j \in \{1,\ldots,w\}$, where $\bar{a}_{j}=(a_{j}^{1},\ldots,a_{j}^{n+1})$ such that: 
\begin{align*}
(&a_{j}^{1},\ldots,a_{j}^{n}) \in_{R} \{0,1\}^{n}, a_{j}^{n+1}= a_{j}^{1}\oplus \ldots \oplus a_{j}^{n}
\end{align*}

and: ~~~~~$\ket{\psi_{0,0}}=\ket{0}, \ket{\psi_{1,0}}=\ket{1},\ket{\psi_{0,1}}= \frac{1}{\sqrt{2}}(\ket{0}+\ket{1}), \ket{\psi_{1,1}}= \frac{1}{\sqrt{2}}(\ket{0}-\ket{1})$.\\
These $w$ fragments will constitute a blank ballot (e.g the first row of Fig. \ref{example_Conjugate_Coding} is a blank ballot fragment).

\item $EA$ sends one blank ballot to each $V_k$ over an authenticated channel.
\end{enumerate}

\noindent{\bf Casting phase}
\begin{enumerate}\setcounter{enumi}{3}
\item After reception of the blank ballot, each $V_k$ re-randomizes it by picking for each fragment a vector $\bar{d}_{j}=(d_{j}^{1},\ldots,d_{j}^{n+1})$ such that:
\begin{align*}
(&d_{j}^{1},\ldots,d_{j}^{n}) \in_{R} \{0,1\}^{n}, d_{j}^{n+1}= d_{j}^{1}\oplus \ldots \oplus d_{j}^{n}.
\end{align*}
$\forall j \in \{1,\ldots, w\}$, $V_k$ applies unitary $U_{j}^{\bar{d}_{j}}=Y^{d_{j}^{1}} \otimes \ldots \otimes Y^{d_{j}^{n+1}}$ to the blank ballot fragment $\ket{\phi_{\bar{a}_j,\bar{b}}},$, where:
\[Y^{1}=\begin{bmatrix}
0 & -1 \\ 1 & 0
\end{bmatrix} , Y^{0}= \mathbb{I}\]
\item $V_{k}$ encodes the candidate of choice in the ($n+1$)$^{th}$-qubit of the last blank ballot fragments\footnote{Candidate choices are encoded in binary format.}. For example, if we assume a referendum type election, $V_{k}$  votes for $c \in \{0,1\}$ by applying to the blank ballot fragment $\ket{\phi_{\bar{a}_w,\bar{b}}}$ the unitary operations $U_{w}^{\bar{c}} $ respectively, where: $\bar{c}=(0,\ldots,0,c)$ (see Fig. \ref{example_Conjugate_Coding}).
\item $V_{k}$ sends the ballot to $T$ over an anonymous channel.
\end{enumerate}

\noindent{\bf Tally phase}
\begin{enumerate}\setcounter{enumi}{6}
\item Once the ballot casting phase ends, $EA$ announces $\bar{b}$ to $T$.
\item  With this knowledge, $T$ can decode each cast ballot in the correct basis. Specifically, $T$ decodes each ballot fragment by measuring it in the basis described by vector $\bar{b}$ and XORs the resulting bits. After doing this to each ballot fragment, $T$ ends up with a string, which is the actual vote cast.

\item $T$ announces the election result. 
\end{enumerate}
\begin{figure}
\centering
\includegraphics[scale=0.95]{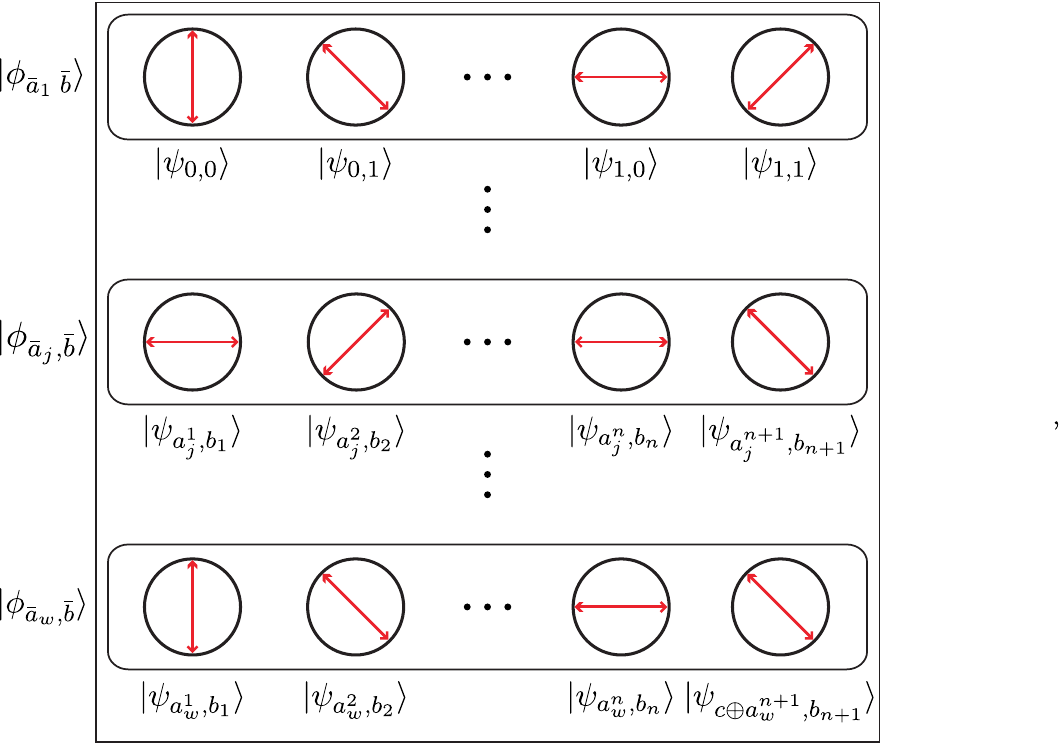}
\caption{The ballot consisting of $w$ ballot fragments, which encode the binary choice ``0\ldots 01" in a referendum type election example.}
\label{example_Conjugate_Coding}
\end{figure}

\subsection{Vulnerabilities of Conjugate Coding Protocols}
The technique underlying this protocol is closely related to the one used in the first quantum key distribution protocols (\cite{BB84,SJ00}). However, it has some limitations in the context of these voting schemes.
\medskip

\noindent {\bf Malleable blank ballots:} An adversary can change the vote of an eligible voter, when the corresponding ballot is cast over the anonymous channel. Assume $V_{k}$ has applied the appropriate unitary on the blank ballot in order to vote for the candidate of their choice. And let us consider that the last $m$ ballot fragments encode the candidate. When the adversary sees the cast ballot over the quantum anonymous channel, they apply the unitary $U_{w-(m-1)}^{\bar{c}_{1}}, \ldots, U_{w}^{\bar{c}_{m}}$, where $c_{r}$ is either $0$ or $1$, depending on their choice to flip the candidate bit or not. As a result the adversary modifies the ballot of $V_{k}$ such that it decodes to a different candidate than the intended one. This is possible because the adversary is aware of the ballot fragments used to encode the candidate choice. Furthermore, if the adversary has side channel information about the likely winning candidate (from pre-election polls for instance), they will be able to change the vote encoded in the ballot into one of their desire. This is possible because the adversary is aware of which bits are encoded in the ballot more frequently and knows exactly which unitary operator to apply in order to decode to a specific candidate.

\noindent {\bf Violation of privacy:} It is already acknowledged by the authors of \cite{OY08} that the $EA$ can introduce a ``serial number" in a blank ballot to identify a voter, \emph{i.e} some of the blank ballot fragments in the head of the ballot  decode to ``1" instead of ``0". This allows the $EA$ to decode any ballot cast over the quantum anonymous channel, linking the identity of the voters with their choice. 

\ignore{\noindent {\bf Double vote by keeping some  ballot fragments:} An adversary controlling $t$-out-of-$N$ voters can double vote by keeping some of their ballot fragments for future use. Specifically, suppose that we need $m$ ballot fragments in order to encode a candidate choice.  Each of the corrupted voters proceeds as follow:
\begin{enumerate}
\item Upon reception of the $w$ blank ballot fragments from the $EA$, corrupted $V_{k}$ keeps the last $m$ blank ballot fragments. 
\item Then $V_k$ sends to $T$ over the quantum anonymous channel the remaining $w-m$ blank ballot fragments accompanied with $m$ fake ballot fragments encoded in a random basis. When  $T$ decodes this ballot the last $m$ states open to a random bit sequence, but the remaining $w-m$ open to the value $0$. As a result it is accepted by $T$. Therefore, $V_k$'s vote, even though favors a random candidate, is accepted by $T$. By the end of this procedure, the adversary has $tm$ blank ballots fragments. 
\item If $tm>w$, the adversary has at its disposal one extra blank ballot.
\end{enumerate}
A possible fix for this can be to increase the parameter of the protocol so than $Nm<<w$.\NL{if I can vote for 2 random candidates and I don't have the control over the votes and both votes are counted, is double voting}}

\noindent {\bf One-more-unforgeability:} The security of the protocol relies on a quantum problem introduced in \cite{OY08}, named \emph{one-more-unforgeability} and the assumption that it is computationally hard for a quantum adversary. The game that captures this assumption goes as follows: a challenger encodes $w$ blank ballot fragments in a basis $\bar{b}$ and gives them to the adversary. The adversary wins the game if they produce $w+1$ valid blank ballot fragments in the basis $\bar{b}$. The authors claim the probability of the adversary of winning this game is at most $1/2 + 1/2(negl(n))$. 

\noindent {\bf On the security parameter:} Because of the ballots' malleability, an adversary could substitute the parts of the corrupted voters' blank ballot fragments that encode a candidate, with blank ballot fragments in a random base. Of course these ballots would open into random candidates in a specific domain but would still be valid, since the leading zeros would not be affected by this change. This is because blank ballots contain no entanglement. Now the adversary can keep these valid spare blank ballot fragments to create new valid blank ballots. To address this problem, the size of blank ballots needs to be substantially big compared to the number of voters and the size of the candidate space ($Nm<<w$).

\ignore{One of the differences from Okamoto \emph{et. al.} \cite{OY08} protocol is that the voters established in anonymous way a classical secret key with $T$ by using the conjugate coding technique. Then, the last part of the protocol is in a classical setting in contrast with \cite{OY08} in which all the protocol is based on conjugate coding. The flaws with the protocol is as follows:
\begin{itemize}

\item The main claims of the authors why their protocol is more advanced from \cite{OY08}, were the fact that they have a procedure in order to provide a kind of verifiability of the election outcome. A voter could verify if its vote has been counted by checking if its unique classical string ,which is established in the anonymous quantum key distribution phase of the protocol, appears at the final election result.\par But, with a simple modification in \cite{OY08}, it also provide this. Suppose that in addition to encode a candidate choice, the voter encodes a random serial number to some blank ballots. At the end of the election procedure $T$ announces the candidate of each ballot and the unique serial number, thus we have again a kind of verifiability, but in the expense of receipt freeness. Now a voter is able to prove how is voted to a passive coarser by reviling its unique number. A voter cannot prevent this type of coercion, if sends a wrong serial number to its coercer this number it is not appeared at the final election result. As a result the coercer knows that the voter gave a fake serial number. So in that sense protocol \cite{OY08} hasn't have any verifiability mechanism but satisfies the receipt freeness property. \par Armed with the previous observation protocol \cite{ZR13} is not satisfied the receipt freeness property. $V_{k}$ should give the serial number that works as a receipt before the final result was announced. If it gave a fake serial number, the fake serial number is not appeared to the final result. As a result, a coercer knows exactly how $V_{k}$ has voted. 
\item Moreover, an adversary\NL{corrected} could modify the vote of a voter, because as described in the protocol the voter is only encrypts its ballot with the key that it shares with the $T$, and not authenticate it.
\end{itemize}

}


\section{Other Protocols}
\label{ctrl}

Other protocols have also been proposed, with main characteristic that the $EA$ controls when ballots get counted. This can be achieved with either the use of shared entangled states between $EA$, $T$, and $V_{k}$ \cite{TK17,XP17} or Bell pairs \cite{TK17} between $T$ and $V_{k}$ with $EA$ knowing the identity of the holder of each pair particle. However, we do not fully analyse these protocols in this review, as they have many and serious flaws making even the correctness arguable. The protocol of \cite{XP17} claims to provide verifiability of the election outcome, but without explaining how this can be achieved. From our understanding of the protocol this seems unlikely to be the case. From the description of the protocol each voter can change their mind and announce a different vote from the originally cast one. This is possible because the function every voter uses to encode their vote is not committed in any way. Two protocols introduced in \cite{TK17} have similar limitations. For instance, there is no mechanism for verifiability of the election outcome. In addition, privacy against $T$ is not satisfied in contrast to protocols we saw in section \ref{distributed}. This is because each voter's vote is handled individually and not in a homomorphic manner. All of these could be achieved just by a classical secure channel. Last, the protocol appearing in \cite{HD11} shares many of the limitations of the former protocols as well as some further ones. The method introduced for detecting eavesdropping in the election process is insecure, as trust is put into another voter in order to detect any deviation from the protocol specification. Moreover, the way each voter casts their vote is not well defined in the protocol, which makes privacy and correctness trivially violated.


\section{Discussion}
\label{discussion}

In this work, we have examined the current state of the art in quantum e-voting, by presenting the most prominent proposals and analyzing their security. What we have found is that all the proposed protocols fail to satisfy the necessary security definitions for future implementations. Despite this, these protocols open the way to new avenues of research, specifically on whether quantum information can solve some long-standing issues in e-voting and cryptography in general. By studying them, we can identify several interesting ideas for further development as well as possible bottlenecks in future quantum protocols. 

For instance, we saw that, unless combined with some new technique, the traveling ballot protocols do not provide a viable solution, as double-voting is always possible, and there is no straightforward way to guarantee privacy. On the other hand, the distributed ballot protocols give us very strong privacy guarantees because of the entanglement between the ballot states, but it seems that verifiability against malicious talliers might be hard to achieve. In fact, one of the most intriguing questions in quantum e-voting is whether we can achieve all desired properties simultaneously. For instance, every classical definition of verifiability~\cite{CV16} assumes a trusted bulletin board that the participants can read, write on, and finally verify the outcome of the election. However, implementing a quantum bulletin board to achieve the same properties is not straightforward, since  reading a quantum state can 'disturb' it in an irreversible way. For that reason we have defined in our experiment $\mathbf{EXP}_{\msf{QVer}}^{\Pi}$ the public quantum register $\mc{B}$, and the predicate $\mathsf{Verify}^{\Pi}$. Of course, an implementation of such predicate seems hard to realize in the quantum setting and more research is needed.

We have also shown that the cut-and-choose technique used by the protocols in Section \ref{DualBase} is both inefficient and insecure. A solution could be to provide some type of randomness to the voters (e.g. in the form of a common random string), which would determine if a state should be verified or used for the voting phase (a similar process is shown in \cite{McCutcheon16,Pappa12}). However, even if the problem with the cut-and-choose technique is addressed, privacy can still be violated as we have seen, and possible corrections might require the use of more advanced techniques. Notwithstanding these limitations, we believe that our analysis opens new research directions for the study of the quantum cut-and-choose technique, which plays a fundamental role in the secure distribution of quantum information.

The general aim of studying quantum cryptographic protocols, is to provide better guarantees than classically possible, be that in security or efficiency. This has been achieved for primitives like coin flipping and oblivious transfer, against unbounded adversaries \cite{CK10,Pappa14}, and against bounded ones that are more relevant to practical implementations (e.g. with limited storage \cite{DFSS05}, noisy storage \cite{KWW12}, or bounded by relativistic constraints \cite{LKB15}). The question whether quantum technology could enhance electronic voting as well has not yet been answered, and requires further study of both the existing classical and quantum literature. First, bottlenecks in classical election protocols that could potentially be solved by using quantum subroutines, need to be identified. Then, quantum protocols need to be designed, that satisfy well articulated definitions of all the required properties in composable frameworks. In classical cryptography, this was pursued with the help of automated provers and model checkers such as EasyCrypt \cite{Cortier2017}, game based definitions \cite{CV16,BD15}, and by employing the Universal Composability Framework \cite{CR01,GJ04}. However, in quantum cryptography, it remains unclear how these techniques can be adopted. An interesting approach appears in \cite{QMC}, where the authors provide an automated verification tool that enables checking properties of systems which can be expressed within the quantum stabilizer formalism. Finally, a recent work by Unruh \cite{unruh2018} on quantum relational Hoare logic might open new avenues and help provide a solution to this problem.

\bibliographystyle{splncs04}

\begin{thebibliography}{10}
\providecommand{\url}[1]{\texttt{#1}}
\providecommand{\urlprefix}{URL }
\providecommand{\doi}[1]{https://doi.org/#1}

\bibitem{AB08}
Adida, B.: Helios: Web-based open-audit voting. In: USENIX security symposium.
  vol.~17, pp. 335--348 (2008)

\bibitem{BW02}
Barnum, H., Cr{\'e}peau, C., Gottesman, D., Smith, A., Tapp, A.: Authentication
  of quantum messages. In: Foundations of Computer Science, 2002. Proceedings.
  The 43rd Annual IEEE Symposium on. pp. 449--458. IEEE (2002)

\bibitem{BB84}
Bennett, C.H., Brassard, G.: Quantum cryptography: Public key distribution and
  coin tossing. In: Proceedings of IEEE International Conference on Computers,
  Systems and Signal Processing. vol.~175, p.~8. New York (1984)

\bibitem{BD15}
Bernhard, D., Cortier, V., Galindo, D., Pereira, O., Warinschi, B.: Sok: A
  comprehensive analysis of game-based ballot privacy definitions. In: Security
  and Privacy (SP), 2015 IEEE Symposium on. pp. 499--516. IEEE (2015)

\bibitem{BM11}
Bonanome, M., Bu{\v{z}}ek, V., Hillery, M., Ziman, M.: Toward protocols for
  quantum-ensured privacy and secure voting. vol.~84, p. 022331. APS (2011)

\bibitem{BT07}
Broadbent, A., Tapp, A.: Information-theoretic security without an honest
  majority. In: Kurosawa, K. (ed.) Advances in Cryptology -- ASIACRYPT 2007.
  pp. 410--426. Springer Berlin Heidelberg, Berlin, Heidelberg (2007)

\bibitem{CR01}
Canetti, R.: Universally composable security: A new paradigm for cryptographic
  protocols. In: Foundations of Computer Science, 2001. Proceedings. 42nd IEEE
  Symposium on. pp. 136--145. IEEE (2001)

\bibitem{CK10}
Chailloux, A., Kerenidis, I.: Optimal quantum strong coin flipping. In:
  Proceedings of the 2009 50th Annual IEEE Symposium on Foundations of Computer
  Science. pp. 527--533. FOCS '09 (2009)

\bibitem{Scantegrity}
Chaum, D., Carback, R., Clark, J., Essex, A., Popoveniuc, S., Rivest, R.L.,
  Ryan, P.Y.A., Shen, E., Sherman, A.T., Vora, P.L.: Scantegrity {II:}
  end-to-end verifiability by voters of optical scan elections through
  confirmation codes. {IEEE} Trans. Information Forensics and Security
  \textbf{4}(4),  611--627 (2009)

\bibitem{Chevallier-Mames2010}
Chevallier-Mames, B., Fouque, P.A., Pointcheval, D., Stern, J., Traor{\'e}, J.:
  On Some Incompatible Properties of Voting Schemes, pp. 191--199. Springer
  Berlin Heidelberg, Berlin, Heidelberg (2010)

\bibitem{CL16}
Chillotti, I., Gama, N., Georgieva, M., Izabach{\`e}ne, M.: A homomorphic lwe
  based e-voting scheme. In: International Workshop on Post-Quantum
  Cryptography. pp. 245--265. Springer (2016)

\bibitem{Cortier2017}
Cortier, V., Drăgan, C.C., Dupressoir, F., Schmidt, B., Strub, P., Warinschi,
  B.: Machine-checked proofs of privacy for electronic voting protocols. In:
  2017 IEEE Symposium on Security and Privacy (SP). pp. 993--1008 (2017)

\bibitem{CV16}
Cortier, V., Galindo, D., Küsters, R., Müller, J., Truderung, T.: Sok:
  Verifiability notions for e-voting protocols. In: 2016 IEEE Symposium on
  Security and Privacy (SP). pp. 779--798 (2016)

\bibitem{DFSS05}
Damgard, I.B., Fehr, S., Salvail, L., Schaffner, C.: Cryptography in the
  bounded quantum-storage model. In: IEEE Information Theory Workshop on Theory
  and Practice in Information-Theoretic Security, 2005. pp. 24--27 (2005)

\bibitem{DSK06}
Delaune, S., Kremer, S., Ryan, M.: Coercion-resistance and receipt-freeness in
  electronic voting. In: Computer Security Foundations Workshop, 2006. 19th
  IEEE. pp. 12--pp. IEEE (2006)

\bibitem{DS06}
Dolev, S., Pitowsky, I., Tamir, B.: A quantum secret ballot. arXiv preprint
  quant-ph/0602087  (2006)

\bibitem{EPR35}
Einstein, A., Podolsky, B., Rosen, N.: Can quantum-mechanical description of
  physical reality be considered complete? Phys. Rev.  \textbf{47},  777--780
  (1935)

\bibitem{QMC}
Gay, S.J., Nagarajan, R., Papanikolaou, N.: Qmc: A model checker for quantum
  systems. In: Proceedings of CAV 2008. vol. LNCS 5123, pp. 543--547.
  Springer-Verlag, Berlin, Heidelberg (2008)

\bibitem{GHZ}
Greenberger, D., Horne, M., Zeilinger, A.: Going beyond bell's theorem. In:
  Kafatos, M. (ed.) Bell's Theorem, Quantum Theory and Conceptions of the
  Universe. Fundamental Theories of Physics, vol 37. pp. 69--72. Springer,
  Dordrecht (1989)

\bibitem{GJ04}
Groth, J.: Evaluating security of voting schemes in the universal composability
  framework. In: Jakobsson, M., Yung, M., Zhou, J. (eds.) Applied Cryptography
  and Network Security. pp. 46--60. Springer Berlin Heidelberg, Berlin,
  Heidelberg (2004)

\bibitem{HS15}
Hallgren, S., Smith, A., Song, F.: Classical cryptographic protocols in a
  quantum world. In: Rogaway, P. (ed.) Advances in Cryptology--CRYPTO 2011. pp.
  411--428. Springer Berlin Heidelberg (2011)

\bibitem{HM06}
Hillery, M., Ziman, M., Bu{\v{z}}ek, V., Bielikov{\'a}, M.: Towards
  quantum-based privacy and voting. In: Physics Letters A. vol.~349, pp.
  75--81. Elsevier (2006)

\bibitem{HD11}
Horoshko, D., Kilin, S.: Quantum anonymous voting with anonymity check. In:
  Physics Letters A. vol.~375, pp. 1172--1175. Elsevier (2011)

\bibitem{HW14}
Huang, W., Wen, Q.Y., Liu, B., Su, Q., Qin, S.J., Gao, F.: Quantum anonymous
  ranking. In: Physical Review A. vol.~89, p. 032325. APS (2014)

\bibitem{JD05}
Juels, A., Catalano, D., Jakobsson, M.: Coercion-resistant electronic
  elections. In: Proceedings of the 2005 ACM workshop on Privacy in the
  electronic society. pp. 61--70. ACM (2005)

\bibitem{KE17}
Kashefi, E., Music, L., Wallden, P.: The quantum cut-and-choose technique and
  quantum two-party computation. arXiv preprint arXiv:1703.03754  (2017)

\bibitem{KM02}
Kiayias, A., Yung, M.: Self-tallying elections and perfect ballot secrecy. In:
  Naccache, D., Paillier, P. (eds.) Public Key Cryptography. vol.~2274, pp.
  141--158. Springer Berlin Heidelberg (2002)

\bibitem{KT15}
Kiayias, A., Zacharias, T., Zhang, B.: End-to-end verifiable elections in the
  standard model. In: Oswald, E., Fischlin, M. (eds.) Advances in Cryptology -
  EUROCRYPT 2015. pp. 468--498. Springer Berlin Heidelberg (2015)

\bibitem{KWW12}
Konig, R., Wehner, S., Wullschleger, J.: Unconditional security from noisy
  quantum storage. IEEE Transactions on Information Theory  \textbf{58}(3),
  1962--1984 (2012)

\bibitem{LY08}
Li, Y., Zeng, G.: Quantum anonymous voting systems based on entangled state.
  In: Optical review. vol.~15, pp. 219--223. Springer (2008)

\bibitem{LoChau98}
Lo, H.K., Chau, H.: Why quantum bit commitment and ideal quantum coin tossing
  are impossible. In: Physica D: Nonlinear Phenomena. vol.~120, pp. 177 -- 187
  (1998)

\bibitem{LKB15}
Lunghi, T., Kaniewski, J., Bussi\`eres, F., Houlmann, R., Tomamichel, M.,
  Wehner, S., Zbinden, H.: Practical relativistic bit commitment. Phys. Rev.
  Lett.  \textbf{115},  030502

\bibitem{Mayers97}
Mayers, D.: Unconditionally secure quantum bit commitment is impossible. In:
  Phys. Rev. Lett. vol.~78, pp. 3414--3417. American Physical Society (Apr
  1997)

\bibitem{McCutcheon16}
McCutcheon, W., Pappa, A., Bell, B., McMillan, A., Chailloux, A., Lawson, T.,
  Mafu, M., Markham, D., Diamanti, E., Kerenidis, I., et~al.: Experimental
  verification of multipartite entanglement in quantum networks. In: Nature
  Communications. vol.~7, p. 13251. Nature Publishing Group (2016)

\bibitem{NI02}
Nielsen, M.A., Chuang, I.L.: Quantum Computation and Quantum Information: 10th
  Anniversary Edition. Cambridge University Press, New York, NY, USA, 10th edn.
  (2011)

\bibitem{OY08}
Okamoto, K.S.T., Tokunaga, Y.: Quantum voting scheme based on conjugate coding.
  NTT Technical Review  \textbf{6}(1), ~1--8 (2008)

\bibitem{Pappa12}
Pappa, A., Chailloux, A., Wehner, S., Diamanti, E., Kerenidis, I.: Multipartite
  entanglement verification resistant against dishonest parties. vol.~108, p.
  260502. APS (2012)

\bibitem{Pappa14}
Pappa, A., Jouguet, P., Lawson, T., Chailloux, A., Legr{\'e}, M., Trinkler, P.,
  Kerenidis, I., Diamanti, E.: Experimental plug and play quantum coin flipping
   \textbf{5}(3717) (2014)

\bibitem{PC17}
Portmann, C.: Quantum authentication with key recycling. In: Coron, J.S.,
  Nielsen, J.B. (eds.) Advances in Cryptology -- EUROCRYPT 2017. pp. 339--368.
  Springer International Publishing (2017)

\bibitem{PaV}
Ryan, P.Y.A., Schneider, S.A.: Pr\^et-\`a-voter with re-encryption mixes. In:
  11th European Symp.\ On Research In Computer Security (ESORICS'06).
  vol.~4189, pp. 313--326. Springer (2006)

\bibitem{SAK95}
Sakurai, J.J., Commins, E.D.: Modern quantum mechanics, revised edition. AAPT
  (1995)

\bibitem{Shor94}
Shor, P.W.: Algorithms for quantum computation: Discrete logarithms and
  factoring. In: Proceedings of the 35th Annual Symposium on Foundations of
  Computer Science. pp. 124--134. SFCS '94, IEEE Computer Society, Washington,
  DC, USA (1994)

\bibitem{SJ00}
Shor, P.W., Preskill, J.: Simple proof of security of the bb84 quantum key
  distribution protocol. In: Physical review letters. vol.~85, p.~441. APS
  (2000)

\bibitem{TK17}
Thapliyal, K., Sharma, R.D., Pathak, A.: Protocols for quantum binary voting.
  In: International Journal of Quantum Information. vol.~15, p. 1750007. World
  Scientific (2017)

\bibitem{UD10}
Unruh, D.: Universally composable quantum multi-party computation. In: Gilbert,
  H. (ed.) Advances in Cryptology -- EUROCRYPT 2010. vol.~6110, pp. 486--505.
  Springer Berlin Heidelberg (2010)

\bibitem{unruh2018}
Unruh, D.: Quantum relational hoare logic. arXiv:1802.03188 [quant-ph]  (2018)

\bibitem{VJ07}
Vaccaro, J.A., Spring, J., Chefles, A.: Quantum protocols for anonymous voting
  and surveying. In: Physical Review A. vol.~75, p. 012333. APS (2007)

\bibitem{WQ16}
Wang, Q., Yu, C., Gao, F., Qi, H., Wen, Q.: Self-tallying quantum anonymous
  voting. In: Physical Review A. vol.~94, p. 022333. APS (2016)

\bibitem{XP17}
Xue, P., Zhang, X.: A simple quantum voting scheme with multi-qubit
  entanglement. In: Scientific Reports. vol.~7, p.~7586. Nature Publishing
  Group (2017)

\bibitem{ZR13}
Zhou, R.R., Yang, L.: Distributed quantum election scheme. arXiv preprint
  arXiv:1304.0555  (2013)

\end{thebibliography}

\newpage


\appendix

\section*{Supplementary Material}\label{sec:supplementary}

In the supplementary material we include some extra technical details that due to lack of space could not be included in the main body of the paper, as well as our response to reviews received on previous submissions of this work.

\section{Formal definition of quantum integrity}
\label{app:integrity-definition}

The integrity experiment $\mathbf{EXP}_{\msf{Qint}}^{\Pi}$ is the same as $\mathbf{EXP}_{\msf{Qver}}^{\Pi}$ with the only exceptions that there isn't a predicate $\msf{P}_{\msf{Verify}}^{\Pi}$ and we don't allow $\mc{A}$ to corrupt the tallier (if there exist in the protocol $\Pi$). As a result, we don't capture universal verifiability in $\mathbf{EXP}_{\msf{Qint}}^{\Pi}$, but only double voting and vote deletion/alteration of honest ballots.

\begin{boxfigH}{The experiment $\mathbf{EXP}_{\msf{Qint}}^{\Pi}$
}{fig:Qint}
\underline{\emph{The experiment $\mathbf{EXP}_{\msf{Qint}}^{\Pi}(\mathcal{A},\epsilon,\delta_{0})$}}
\begin{itemize}
\item[--]\textbf{Set up phase:} $\mc{C}$ and $\mc{A}$ generate the protocol parameters in quantum register $\mc{X}$ as specified by $\Pi$. Furthermore, $\mc{A}$ chooses the votes for all voters $\{v_{k}\}_{V_{k} \in \mathcal{V}}$. 
\item[--] \textbf{Casting phase:} $\mc{A}$ chooses whether to corrupt $V_{\pe(k)}$ or not (and therefore add the voter or not to the set $\mc{V}_{\mc{A}})$.
\begin{itemize}

\item If $V_{\pe(k)}\not\in\mc{V}_{\mc{A}}$, $\mc{C}$ generates the ballot $\{\mc{B}_{\pe(k)},\bot\} \leftarrow \mathsf{CastBallot}(v_{\pe(k)},\mc{X}_{\pe(k)},\mc{B},\delta_{0})$. If it is not $\bot$, $\mc{C}$ sends it to $\mc{A}$. If $\mc{C}$ receives $\mc{B}_{\pe(k)}$ back from $\mc{A}$, then $\mc{C}$ stores $\mc{B}_{\pe(k)}$ in $\mc{B}$, where $\{\mc{B}_{\pe(k)}$, $\mc{X}_{\pe(k)}\}$, and $\mc{B}$ are local and global quantum registers respectively. Note that when $\mc{A}$ receives $\mc{B}_{\pe(k)}$ from $\mc{C}$, it is possible to apply quantum operations on the register that are dependent on the specifications of $\Pi$. 
\item If $V_{\pe(k)}\in\mc{V}_{\mc{A}}$, then $\mc{A}$ creates a ballot $\mc{B}_{\pe(k)}$ and sends it to $\mc{C}$.
\end{itemize}

\item[--] \textbf{Tally phase:} $\mc{C}$ computes $X \leftarrow \mathsf{Tally}(\mc{B},\mc{X}_{\mc{C}},\delta_{0}) $:
\begin{itemize}
\item If $X \not= \bot $ and $\msf{P}_{\msf{VCounted}}^{\Pi}(\{v_{k}\}_{V_{k} \not\in \mathcal{V}_{\mc{A}}},X) = 0$ or $\mathsf{Nballots}^{\Pi}(X)> |\mc{V}|$, then {\tt output 1}, else {\tt output 0}.
\end{itemize}
\end{itemize}
\end{boxfigH}

\begin{definition}
We say that a quantum e-voting protocol $\Pi$ satisfies $\epsilon$-{\bf quantum integrity} if for every quantum PPT $\mc{A}$ the probability to win the experiment $\mathbf{EXP}_{\msf{Qint}}^{\Pi}(\mathcal{A},\epsilon,\delta_{0})$ is negligible with respect to $\delta_{0}$:

\[\Pr[1\leftarrow \mathbf{EXP}_{\msf{Qint}}^{\Pi}(\mathcal{A},\epsilon,\delta_{0})]=negl(\delta_{0}).\]
\end{definition}

\section{Proof of attack on Distributed Ballot protocols}
\label{app:proofs}
Now we give detailed proofs of the theorems and lemmas of Section~\ref{distributed}.

\begin{customlemma}{\ref{proof3}}
Let $\Theta_{D,\delta}^{v} \in [0,2\pi]$ be the continuous random variable that describes the outcome of the measurement of a vote state $\ket{\psi(\theta_{v})},v\in \{\text{y},\text{n}\}$ using operators 
\begin{equation}
E(\theta)=\frac{D}{2\pi} \ket{\Phi(\theta)} \bra{\Phi(\theta)}
\end{equation}
where $\ket{\Phi(\theta)}=\frac{1}{\sqrt{D}}  \sum\limits_{j=0}^{D-1} e^{ij\theta } \ket{j}$. It holds that:
\begin{equation}\label{eqn:prob}
Pr[x_l<\Theta_{D,\delta}^{v}<x_{l+\w}]=\dfrac{1}{2\pi D} \int_{x_l}^{x_{l+\w}} \dfrac{\sin^2[D(\theta-\theta_{v})/2]}{\sin^2[(\theta-\theta_{v})/2]}d\theta
\end{equation}
\end{customlemma}

\begin{proof}
\begin{align*}
&Pr[x_l<\Theta_{D,\delta}^{v}<x_{l+\w}]=\bra{\phi(\theta_{v})}\int_{x_l}^{x_{l+\w}}E(\theta)d\theta\ket{\phi(\theta_{v})}\\&=\int_{x_l}^{x_{l+\w}}\bra{\phi(\theta_{v})}E(\theta)\ket{\phi(\theta_{v})}d\theta \\ &=\frac{D}{2\pi D^2} \int_{x_l}^{x_{l+\w}} |\sum\limits_{j=0}^{D-1} e^{(\theta-\theta_{v})ij} |^2  d\theta \\ &=\frac{1}{2\pi D}\int_{x_l}^{x_{l+\w}}([\sum\limits_{j=0}^{D-1}\cos[(\theta -\theta_{v})j]]^2 \\ &+[\sum\limits_{j=0}^{D-1}\sin[(\theta -\theta_{v})j]]^2)d\theta 
\end{align*}
For any $x\in\mathbb{R}$, the following two equations hold:
$$\sum\limits_{j=0}^{D-1}\cos[jx]=\dfrac{\sin[Dx/2]}{\sin[x/2]} \cos[(D-1)x/2]$$

$$\sum\limits_{j=0}^{D-1}\sin[jx]=\dfrac{\sin[Dx/2]}{\sin[x/2]} \sin[(D-1)x/2]$$
So finally we have:
\begin{align*}
Pr[x_l<\Theta_{D,\delta}^{v}<x_{l+\w}]= \frac{1}{2\pi D} \int_{x_l}^{x_{l+\w}} \frac{\sin^2[D(\theta-\theta_{v})/2]}{\sin^2[(\theta-\theta_{v})/2]}d\theta
\end{align*}

\end{proof}

\begin{customlemma}{\ref{cint}}
Let $\ket{\psi(\theta_{v})}$ be a voting state of the protocol. Then it holds: $$Pr[x_{l_v}<\Theta_{D,\delta}^{v}<x_{l_v+1}]> 0.405 $$ 

\end{customlemma}

\begin{proof}
A simple change of variables in Eq.\eqref{eqn:prob} gives us:

\begin{equation*}
Pr[x_{l_v}<\Theta_{D,\delta}^{v}<x_{l_v+1}]= \frac{1}{2\pi D}\int_{0}^{2\pi/D} \frac{\sin^2[D(\theta-\delta)/2]}{\sin^2[(\theta-\delta)/2]}d\theta
\end{equation*}

\noindent By setting $(\theta-\delta)/2=y$, we get: 
\begin{equation*}
Pr[x_{l_{v}}<\Theta_{D,\delta}^{v}<x_{l_{v}+1}]= \frac{1}{\pi D}\int_{-\delta/2}^{(2\pi/D -\delta)/2}\frac{\sin^2[Dy]}{\sin^2[y]} dy
\end{equation*}

The above is just a function of $\delta$, which we denote as $F(\delta)$. In order to lower-bound $F(\delta )$ we need to find its derivative: 
\begin{equation*}
\dfrac{dF(\delta )}{d\delta}=\frac{1}{2\pi D} \Bigg(\frac{\sin^2[D\delta/2]}{\sin^2[\delta/2]}-\frac{\sin^2[D\delta/2]}{\sin^2[(2\pi/D-\delta)/2]}\Bigg)
\end{equation*}
It is easy to check that: 
\begin{align*}
\dfrac{dF(\delta )}{d\delta }=0&,~~ \text{when}~~ \delta=0 ~~\text{or}~~ \delta= \pi /D  \\
\dfrac{dF(\delta )}{d\delta }>0&,~~ \text{when} ~~ 0<\delta<\pi/D \\ 
\dfrac{dF(\delta )}{d\delta }<0&,~~ \text{when} ~~ \pi/D<\delta<2\pi/D
\end{align*}
It also holds that $F(0)=F(2\pi/D)$, so the minimum extreme points  of our function are equal. As a result we have:
\begin{align}
F(\delta )\geq \displaystyle \lim_{\delta \to 0^{-}}F(\delta )=F(0)
\end{align}
From the fact that: 
\begin{align*}
&|\sin[x]| \leq |x|, \forall x \in \mathbb{R} \\ & |\sin[x]|\geq |(2/\pi) x|,\forall x\in [0,\pi/2]  \\ &|\sin[x]|\geq |-( 2/\pi) x+2|,\forall x\in [\pi/2,\pi]
\end{align*}
It follows:
\begin{align*}
F(0)  &\geq \frac{1}{\pi D} \int_{0}^{\frac{\pi}{2D}} \Big(\frac{2}{\pi Dy}\Big)^2/y^2 dy +\int_{\frac{\pi}{2D}}^{\frac{\pi}{D}} \Big(\frac{2}{\pi Dy}+2\Big)^2/y^2 dy \\ &\geq \frac{4}{\pi^{2}} \\ & > 0.405
\end{align*}
\end{proof}
Now in order to prove lemma \ref{proof0.9}, we need the following proposition:

\begin{proposition}\label{boundsin}
$\forall x\in [-2\pi,2\pi]$ it holds that: 
\begin{equation}\label{eqn:sin}
\sin^2[x]>\sum\limits_{n=1}^{20} (-1)^{n+1} \dfrac{2^{2n-1}x^{2n}}{(2n)!}
\end{equation}
\end{proposition}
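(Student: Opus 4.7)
My plan is to recognize the sum on the right as the truncated Taylor series for $\sin^2[x]$ and then argue that the tail is strictly positive on $[-2\pi,2\pi]\setminus\{0\}$ via an alternating-series estimate.

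First, I would rewrite $\sin^2[x]=\tfrac{1-\cos(2x)}{2}$ and expand $\cos(2x)=\sum_{n=0}^\infty(-1)^n\tfrac{(2x)^{2n}}{(2n)!}$. This gives exactly
\[
\sin^2[x]=\sum_{n=1}^{\infty}(-1)^{n+1}\frac{2^{2n-1}x^{2n}}{(2n)!},
\]
so the claimed inequality is equivalent to showing the tail
\[
R_{20}(x)\;=\;\sum_{n=21}^{\infty}(-1)^{n+1}\frac{2^{2n-1}x^{2n}}{(2n)!}
\]
is strictly positive for every $x\in[-2\pi,2\pi]$ with $x\neq 0$ (at $x=0$ the claim is a trivial equality, so one should read the proposition modulo this boundary case).

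Next, I would apply the standard Leibniz criterion for alternating series: if $a_n=\tfrac{2^{2n-1}x^{2n}}{(2n)!}$ decreases monotonically to $0$ as $n\to\infty$, then $R_{20}(x)$ has the sign of its leading term. The leading term is $n=21$, which carries the sign $(-1)^{22}=+1$, hence positive whenever $x\neq 0$. So it suffices to verify the monotonicity ratio
\[
\frac{a_{n+1}}{a_n}\;=\;\frac{4x^2}{(2n+1)(2n+2)}\;<\;1\qquad\text{for all }n\ge 21,\ |x|\le 2\pi.
\]
Plugging in the worst case $x^2=4\pi^2$ gives $4x^2=16\pi^2<158$, while for $n=21$ the denominator is $(43)(44)=1892$, and it only grows with $n$. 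This handles monotonicity comfortably, and the terms clearly tend to $0$ since factorial growth dominates. Therefore the Leibniz estimate yields $R_{20}(x)>0$ for $x\neq 0$, which is the desired bound.

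The only subtle point is the degenerate case $x=0$, where both sides equal $0$ and the strict inequality fails; everywhere else in the interval the argument is entirely routine. I would write out the chain above, noting that the alternating-series monotonicity check is the only real step, since the Taylor expansion of $\sin^2$ itself is immediate from the double-angle identity.
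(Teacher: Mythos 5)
Your proof is correct and takes essentially the same route as the paper: the double-angle Taylor expansion of $\sin^2[x]$, reduction to positivity of the tail starting at $n=21$, and the monotonicity/ratio check $4x^2<(2n+1)(2n+2)$, which is exactly the paper's sign-pairing (alternating-series) argument written in Leibniz form. Your remark that the strict inequality degenerates to equality at $x=0$ is a fair minor caveat the paper overlooks, and it is harmless for the subsequent use of the bound under an integral in Lemma~\ref{proof0.9}.
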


\begin{proof}
From the Taylor series expansion at point $0$ of $\cos[x]$, we know that:
\[
 \cos[x]=\sum\limits_{n=0}^{\infty}(-1)^n\dfrac{x^{2n}}{(2n)!}, ~~\forall x\in \mathbb{R} 
\]
Then:
\begin{align*}
\sin^2[x] &=\frac{1}{2}-\frac{\cos[2x]}{2} =\frac{1}{2} -\frac{1}{2}\sum\limits_{n=0}^{\infty}(-1)^n \frac{2^{2n}x^{2n}}{(2n)!}  \nonumber \\
& = \sum\limits_{n=1}^{\infty} (-1)^{n+1} \frac{2^{2n-1}x^{2n}}{(2n)!}
\end{align*}
Given the above equation, in order to prove Eq.\eqref{eqn:sin}, we simply need to show:
\begin{align*}
\sum\limits_{n=21}^{\infty} (-1)^{n+1} \dfrac{2^{2n-1}x^{2n}}{(2n)!}>0
\end{align*}
If we think of the above as a sum of terms $a_n$ ($n=21,\dots,\infty)$, for integer $j\geq 10$, it holds that:
\begin{align*}
a_n>0, ~&\text{when}~n=2j+1, \\ a_n<0, ~&\text{when}~n=2j.
\end{align*}
We therefore need to prove that $\sum\limits_{n=21}^{\infty} a_n>0$, which in turn is equivalent to proving that:
\begin{align*}
|a_{n}|>|a_{n+1}|  &\Longleftrightarrow 2^{2n-1}x^{2n}/(2n)! > 2^{2n+1}x^{2n+2}/(2n+2)! \\ & \Longleftrightarrow 1 > 4x^{2}/((2n+1)(2n+2)) \\ & \Longleftrightarrow (2n+1)(2n+2)/4>x^2
\end{align*}
In this case, the above holds, because the minimum value of n is $21$ and the maximum value of $x^2$ is $4\pi^2$.

\end{proof}

\begin{customlemma}{\ref{proof0.9}}
Let $\ket{\psi(\theta_{v})}$ be a voting state of the protocol. Then it holds: $$Pr[x_{l_{v}-1}<\Theta_{D,\delta}^{v}<x_{l_{v}+2}]>0.9 $$

\end{customlemma}

\begin{proof}
We follow exactly the same procedure as lemma \ref{cint} and get:
\begin{align}
&Pr[x_{l_{v}-1}<\Theta_{D,\delta}^{v}<x_{l_{v}+2}]\\ &=\frac{1}{2\pi D} \int_{x_{l_{v}-1}}^{x_{l_{v}+2}} \frac{\sin^2[D(\theta-\theta_{v})/2]}{\sin^2[(\theta-\theta_{v})/2]}d\theta\nonumber \\ 
&= \frac{1}{2\pi D}\int_{-2\pi/D}^{4\pi/D} \frac{\sin^2[D(\theta-\delta)/2]}{\sin^2[(\theta-\delta)/2]}d\theta \nonumber  \\
&= \frac{1}{\pi D}\int_{-\pi/D-\delta/2}^{2\pi/D -\delta/2}\frac{\sin^2[Dy]}{\sin^2[y]} dy
\end{align}
where $(\theta-\delta)/2=y$. Again the above probability depends only on $\delta$ and can therefore be denoted with $F(\delta)$. In a similar way as before, we can prove that the minimum of this function is at $\delta=0$ and compute $F(0)$.

\begin{align}
F(0) &= \frac{1}{\pi D}\int_{-\pi/D}^{2\pi/D }\frac{\sin^2[Dy]}{\sin^2[y]} dy \nonumber \\ 
     & \geq \frac{1}{\pi D} \int_{-\pi/D}^{2\pi/D } \frac{\sum\limits_{n=1}^{20}\frac{(-1)^{n+1}2^{2n-1} (Dy)^{2n}}{(2n)!}}{y^2} dy \nonumber \\ 
     &= \frac{1}{\pi D} \sum\limits_{n=1}^{20}\int_{-\pi/D}^{2\pi/D}\frac{(-1)^{n+1}2^{2n-1}D^{2n}y^{2n}}{y^2 (2n)!} dy \nonumber \\ 
     &= \frac{1}{\pi D} \sum\limits_{n=1}^{20} \frac{(-1)^{n+1} 2^{2n-1}D^{2n}}{(2n)!} \int_{-\pi /D}^{2 \pi /D}y^{2(n-1)} dy \nonumber \\ 
     &= \frac{1}{\pi D} \sum\limits_{n=1}^{20} \frac{(-1)^{n+1} 2^{2n-1}D^{2n}}{(2n)!} [y^{2n-1}/(2n-1)]_{-\pi/D}^{2\pi/D} \nonumber \\ 
     &= \sum\limits_{n=1}^{20}\frac{(-1)^{n+1}2^{2n-1}}{(2n)!} \frac{\pi ^{2n-2} (2^{2n-2}+1)}{2n-1} \nonumber \\ &\approx 0.9263
\end{align}

\end{proof}

\begin{customtheorem}{\ref{them1}}
With overwhelming probability in the number of voters $N$, algorithm \ref{adval} includes $l_v$ in the $\mathtt{Solution}$ vector (i.e. it measures a value in the interval $[x_{l_v},x_{l_v+1}]$ more than 40\% of the time). 
$$\Pr[\mathtt{Solution}[0]= l_v \vee \mathtt{Solution}[1]= l_v ]> 1-1/exp(\Omega (N))$$
\end{customtheorem}

\begin{proof}
We can see each measurement that algorithm \ref{adval} performs at each vote state $\ket{\psi(\theta_v)}$, as an independent Bernoulli trial $X_{l}$ with probability of success $p_{l}=\Pr[x_{l}<\Theta_{D,\delta}^{v}<x_{l+1}] $. Then the value of $\mathtt{Record}[l]$ follows  the binomial distribution:$$X_{\mathtt{Record}[l]}\sim B(\frac{\varepsilon N}{2},p_{l})$$

We can therefore compute:
\begin{align*}
&\Pr\big[\mathtt{Solution}[0]= l_v \vee \mathtt{Solution}[1]= l_v\big]\\\ 
&=\Pr\big[\mathtt{Record}[l_v] \geq 0.4 \varepsilon N/2\big] \\
&\geq 1- \Pr\big[\mathtt{Record}[l_v] \leq 0.4 \varepsilon N/2\big] \\ 
&\overset{\ref{test}}{=} 1- \Pr\big[\mathtt{Record}[l_v] \leq (1-\gamma) p_{l_{v}}\varepsilon N/2\big] \\ 
&\overset{\ref{test2}}{\geq}1- exp(-\gamma^2 p_{l_{v}} \varepsilon N/6) \\ 
&=1- (exp(-\gamma^2 p_{l_{v}} \varepsilon /6))^N \\ 
&=1-1/exp(\Omega (N))
\end{align*}
\end{proof}

\footnotetext[1]{\label{test}$p_{l_{v}}>0.405\Longrightarrow \exists \gamma>0$ s.t $0.4=(1-\gamma) p_{l_{v}}$}

\footnotetext[2]{\label{test2}The Chernoff  bound for a random variable $X\sim B(N,p)$ and expected value $E[X]=\mu$ is: $Pr[X \leq (1-\gamma)\mu]\leq exp(-\gamma^2\mu/3)$}

\begin{customtheorem}{\ref{Them2}}
With negligible probability in the number of voters $N$, algorithm \ref{adval} includes a value other than $(l_v-1,l_v, l_v+1)$ in the $\mathtt{Solution}$ vector, i.e. $\forall \w \in \{0,\dots,l_{v}-2,l_{v}+2,\dots,D-1\} $: $$Pr[\mathtt{Solution}[0]= \w \vee \mathtt{Solution}[1]=\w]<1/exp(\Omega (N)) $$
\end{customtheorem}

\begin{proof}
\label{proofnea}
Let $\w \in \{0,\ldots,D-1\}\setminus\{l_{v}-1,l_{v},l_{v}+1\}$, then it holds:
\begin{align*}
&Pr[\mathtt{Solution}[0]= \w \vee \mathtt{Solution}[1]=\w]\\ &=Pr[X_{\mathtt{Record}[\w]} \geq 0.4 \varepsilon N/2]
\end{align*}
We know from lemma \ref{proof0.9} that $p_{\w}<0.1$, so $\exists \gamma>0$ such that:\footnote{The Chernoff  bound for a random variable $X\sim B(N,p)$ and expected value $E[X]=\mu$ is: $Pr[X \leq (1+\gamma)\mu]\leq exp(-\gamma\mu/3), \gamma >1$}
\begin{align*}
&Pr[X_{\mathtt{Record}[\w]} \geq 0.4 \varepsilon N/2] \\ &= Pr[X_{\mathtt{Record}[\w]} \geq (1+\gamma)p_{\w} \varepsilon N/2] \\ & < exp(- \gamma p_{\w} \varepsilon N/6) \\ &= (exp(-\gamma p_{\w} \varepsilon/6))^N \\ &= 1/exp(\Omega (N))
\end{align*}
\end{proof}

\begin{customlemma}{\ref{solutionst}}
With overwhelming probability in $N$, the $\mathtt{Solution}$ vector in algorithm \ref{adval}, is equal to $[l_{v}-1,l_{v}],[l_{v},``Null"]$ or $[l_{v},l_{v}+1]$.Specifically,
\begin{align*}
&\Pr[\mathtt{Solution} \in \{[l_{v}-1,l_{v}],[l_{v},``Null"],[l_{v},l_{v}+1]\}] \\  &> 1- 1/exp(\Omega (N))
\end{align*}
\end{customlemma}

\begin{proof}
Let as define the following events: 
\begin{align*}
A=\big[&\mathtt{Solution}[0]= \w \vee \mathtt{Solution}[1]=\w,\\ & \w \in \{0,\dots,l_{v}-2,l_{v}+2,\dots,D-1\}\big]
\end{align*}
\begin{align*}
B=\big[\mathtt{Solution}[0]= l_{v} \vee \mathtt{Solution}[1]=l_{v}\big]
\end{align*}

Since the cases $\mathtt{Solution}=[l_{v},l_{v}-1]$ and $\mathtt{Solution}=[l_{v}+1,l_{v}]$ are impossible from the construction of the algorithm, from theorems \ref{them1} and  \ref{Them2} it holds:
\begin{align*}
&\Pr[\mathtt{Solution} \in \{[l_{v}-1,l_{v}],[l_{v},``Null"],[l_{v},l_{v}+1]\}]\\ &=\Pr[B \wedge \lnot A ] \\ 
&=\Pr[B]-\Pr[B\wedge A]  \\  &> 1-1/exp(\Omega (N))
\end{align*}
\end{proof}

\begin{lemma}
\label{firstextra}
Let $\ket{\psi(\theta_{v})}$ be a voting state with $\delta \in [0,2\pi/D)$ and $l_{v}=D-1$,where $\delta$ is a  continuous random variable .Then it holds: $$\Pr[x_{D-2}<\Theta_{D,\delta}^{v}<x_{D}]+Pr[x_0<\Theta_{D,\delta}^{v}<x_1]>0.9$$

\end{lemma}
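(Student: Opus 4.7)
The plan is to reduce this boundary-case statement to Lemma~\ref{proof0.9} by exploiting the $\pi$-periodicity of the integrand $\sin^2[Dy]/\sin^2[y]$. When $l_v = D-1$, the would-be ``right neighbour'' interval $(x_{l_v+1}, x_{l_v+2})$ wraps around to $(x_0, x_1)$, so morally we are still integrating over three consecutive bins of length $2\pi/D$ around $\theta_v = 2\pi(D-1)/D + \delta$; the only obstacle is bookkeeping the wraparound correctly.

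First I would invoke Lemma~\ref{proof3} to write each probability as an integral:
\begin{align*}
\Pr[x_{D-2}<\Theta_{D,\delta}^{v}<x_D] &= \tfrac{1}{2\pi D}\!\int_{x_{D-2}}^{x_D}\!\tfrac{\sin^2[D(\theta-\theta_v)/2]}{\sin^2[(\theta-\theta_v)/2]}\,d\theta,\\
\Pr[x_0<\Theta_{D,\delta}^{v}<x_1] &= \tfrac{1}{2\pi D}\!\int_{x_0}^{x_1}\!\tfrac{\sin^2[D(\theta-\theta_v)/2]}{\sin^2[(\theta-\theta_v)/2]}\,d\theta.
\end{align*}
Next I would substitute $y = (\theta-\theta_v)/2$ in both integrals, using $\theta_v/2 = \pi - \pi/D + \delta/2$. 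A direct computation gives limits $(-\pi/D - \delta/2,\; \pi/D - \delta/2)$ for the first integral and $(-\pi + \pi/D - \delta/2,\; -\pi + 2\pi/D - \delta/2)$ for the second, each with the integrand $\sin^2[Dy]/\sin^2[y]$ and prefactor $1/(\pi D)$.

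The key step is then to shift the second integral by $+\pi$, which is legal because $\sin^2[D(y+\pi)] = \sin^2[Dy + D\pi] = \sin^2[Dy]$ (as $D\pi$ is an integer multiple of $\pi$) and $\sin^2[y+\pi] = \sin^2[y]$, so the integrand is $\pi$-periodic. After this shift the second interval becomes $(\pi/D - \delta/2,\; 2\pi/D - \delta/2)$, which glues seamlessly to the first interval. Summing yields exactly
\[
\frac{1}{\pi D}\int_{-\pi/D - \delta/2}^{2\pi/D - \delta/2}\frac{\sin^2[Dy]}{\sin^2[y]}\,dy,
\]
which is precisely the quantity $F(\delta)$ bounded below by $0.9$ in the proof of Lemma~\ref{proof0.9}. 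Hence the claim follows immediately.

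The only mildly delicate part is the periodicity justification and getting the endpoint arithmetic right, since $\theta_v$ sits close to $2\pi$ and one has to be careful that the substitution $y=(\theta-\theta_v)/2$ produces intervals whose union (after a $\pi$-shift) is contiguous and matches the range used in Lemma~\ref{proof0.9}. Everything else is a direct quotation of the bound already established there, so no new integral estimation is required.
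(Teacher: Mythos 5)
Your proof is correct and takes essentially the same route as the paper's: both exploit the periodicity of the integrand $\sin^2[Dy]/\sin^2[y]$ to shift the wrapped-around bin $(x_0,x_1)$ so that the three bins become one contiguous interval, and then identify the resulting integral with the quantity $F(\delta)$ already bounded below by $0.9$ in the proof of Lemma~\ref{proof0.9}. The only cosmetic difference is that the paper translates the $\theta$-integral by $2\pi$ before substituting, whereas you substitute $y=(\theta-\theta_v)/2$ first and then shift by $\pi$ in $y$; the endpoint arithmetic in your version checks out.
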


\begin{proof}
\begin{align}
&\Pr[x_0<\Theta_{D,\delta }^{v}<x_1]\\ &= 1/(2\pi D) \int_{x_0}^{x_1}(\dfrac{Sin[D/2(\theta - \theta_v)]}{Sin[1/2(\theta - \theta_v)]})^2 d\theta
\label{apen1}
\end{align}

Now we set $\theta=\theta -x_D$ to \ref{apen1} and we have:

\begin{align}
&\Pr[x_0<\Theta_{D,\delta }^{v}<x_1]\\ &= 1/(2\pi D) \int_{x_D}^{x_D+x_1}(\dfrac{Sin[-D\pi+D/2(\theta - \theta_v)]}{Sin[-\pi+1/2(\theta - \theta_v)]})^2 d\theta \\ &= 1/(2\pi D) \int_{x_D}^{x_D+x_1}(\dfrac{Sin[D/2(\theta - \theta_v)]}{Sin[1/2(\theta - \theta_v)]})^2 d\theta
\end{align}
Finally we have:
\begin{align}
&\Pr[x_{D-2}<\Theta_{D,\delta}^{v}<x_{D}]+\Pr[x_0<\Theta_{D,\delta}^{v}<x_1] \\ &=1/(2\pi D) \int_{x_{D-2}}^{x_D+x_1}(\dfrac{Sin[D/2(\theta - \theta_v)]}{Sin[1/2(\theta - \theta_v)]})^2 d\theta \\ &= 1/(2\pi D)\int_{-2\pi/D}^{4\pi/D} \dfrac{Sin^2[D(\theta-\delta)/2]}{Sin^2[(\theta-\delta)/2]}d\theta
\end{align}

From lemma \ref{proof0.9} this integral is at least $0.9$

\par 
The proof is similar for $l_v=0$.

\end{proof}
\begin{lemma}
Let $\mathtt{Solution}$ be the matrix of algorithm \ref{adval}, then it holds: 
\begin{align*}
&\Pr[\mathtt{Solution} \in \{\{l_{v}-1,l_{v}\},\{l_{v}\},\{l_{v},l_{v}+1\}\}]\\ &=\Pr[\mathtt{Solution} \in \{[l_{v}-1,l_{v}],[l_{v}],[l_{v},l_{v}+1]\}]
\end{align*}
\end{lemma}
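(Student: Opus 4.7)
The plan is to show that the two events on the left-hand and right-hand sides of the claim coincide as subsets of the sample space, so their probabilities are automatically equal. The key observation I would use is that the second \texttt{while} loop of Algorithm \ref{adval} scans $l$ from $0$ up to $D-1$ in increasing order and appends each qualifying index to $\mathtt{Solution}$ in the order encountered. Hence, just before the final \texttt{if} test, the non-\texttt{"Null"} entries of $\mathtt{Solution}$ are in strictly increasing order of their underlying index value. The closing \texttt{if} then swaps those two entries precisely when the pre-swap vector equals $[0, D-1]$, and nowhere else. This is the only place in the algorithm where any reordering happens, so the final ordering of $\mathtt{Solution}$ is completely determined by (i) the underlying set of indices and (ii) whether that set is $\{0, D-1\}$.

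The bulk of the argument is then a short case analysis on the unordered content $S$ of $\mathtt{Solution}$. When $S = \{l_v\}$, the vector is $[l_v,\text{\texttt{"Null"}}]$ and the swap cannot fire. When $S = \{l_v-1, l_v\}$, interpreted with indices reduced modulo $D$: for $l_v \neq 0$ the natural ordering gives $[l_v-1, l_v]$ and the swap condition fails; for $l_v = 0$ the true index set is $\{0, D-1\}$, so the pre-swap vector is $[0, D-1]$, the swap triggers, and the output is $[D-1, 0] = [l_v-1, l_v]$. The case $S = \{l_v, l_v+1\}$ is symmetric, with the wrap-around occurring at $l_v = D-1$ and the swap again producing exactly $[l_v, l_v+1]$.

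In each case the ordered vector appearing on the right-hand side of the claim is realized on exactly the sample paths on which the corresponding unordered set on the left-hand side is realized, so the two events are literally equal as subsets of the sample space, and the probabilities therefore agree. The main (very mild) obstacle is to check that no sample path triggers the swap in a non-edge scenario, potentially breaking the identification between set and vector. This is where I would appeal to the swap condition: it requires the two underlying indices to be $0$ and $D-1$, and these are neighbours of $l_v$ (in the wrap-around sense used in Lemma \ref{firstextra}) only for $l_v \in \{0, D-1\}$, which is exactly what the two edge cases above handle, using $D > N \geq 2$ to ensure $0$ and $D-1$ are distinct from the intermediate indices.
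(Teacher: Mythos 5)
Your proposal is correct and follows essentially the same route as the paper's own proof: both rest on the observation that the second loop fills $\mathtt{Solution}$ in increasing index order so the set content determines the ordered vector, with the final swap handling exactly the wrap-around case $[0,D-1]\mapsto[D-1,0]$ when $l_v\in\{0,D-1\}$. Your write-up is somewhat more explicit about the edge cases (the paper's sketch treats $l_v=0$ and waves at the rest), but the underlying argument is the same.
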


\begin{proof}
(sketch)It holds that:
\begin{align}
&\Pr[\mathtt{Solution} \in \{\{l_{v}-1,l_{v}\},\{l_{v}\},\{l_{v},l_{v}+1\}\}]\\ &=\Pr[\mathtt{Solution} \in \{l_{v}-1,l_{v}\}]\\ &+\Pr[\mathtt{Solution} \in \{l_{v}\}]\\ &+\Pr[\mathtt{Solution} \in \{l_{v},l_{v}+1\}]
\end{align}

We need to prove that:
\begin{align}
\Pr[\mathtt{Solution} \in \{l_{v}-1,l_{v}\}]=\Pr[\mathtt{Solution}=[l_{v}-1,l_{v}]]
\end{align}
From the construction of the algorithm \ref{adval} we know that:
\begin{align}
\Pr[\mathtt{Solution}=[l_{v},l_{v}-1]|\mathtt{Solution} \in \{l_{v}-1,l_{v}\}]=0
\end{align}
This is true because the values of the $\mathtt{Solution}$ are from the matrix $\mathtt{Record}$ in a progressive manner. So under the assumption that both $l_{v},l_{v}-1$ had appeared at least 40\% times, they inserted in a progressive order. The only time they will not is the case in which $l_{v}=0$ and $l_{v}-1=D-1$. At first the order is $[0,D-1]$, but because of the special condition we had in our algorithm the order switches to $[D-1,0]$.
\par It holds that:
\begin{align}
&\Pr[\mathtt{Solution}=[l_{v},l_{v}-1]|\mathtt{Solution} \in \{l_{v}-1,l_{v}\}]\\ &=\Pr[\mathtt{Solution}=[l_{v},l_{v}-1]]+\Pr[\emptyset]\\ &=\Pr[\mathtt{Solution}=[l_{v},l_{v}-1]]\\ &=0
\end{align}
Similar are the other cases.
\end{proof}

\end{document}